\newtheorem{theorem}{Theorem}
\newtheorem{lemma}[theorem]{Lemma}
\newtheorem{proposition}[theorem]{Proposition}
\newtheorem{corollary}[theorem]{Corollary}
\newtheorem*{theorem*}{Theorem} \newtheorem*{lemma*}{Lemma}
\theoremstyle{definition}
\newtheorem{definition}[theorem]{Definition}
\newtheorem*{remark*}{Remark}
\newcommand{\abs}[1]{\left\vert#1\right\vert}
\newcommand{\set}[1]{\left\{#1\right\}}
\newcommand{\tuple}[1]{\left(#1\right)} \newcommand{\eps}{\varepsilon}
\newcommand{\defeq}{:=}
 \renewcommand{\mid}{\;\middle\vert\;}
\renewcommand{\Pr}[2][]{ \ifthenelse{\isempty{#1}}
  {\mathbf{Pr}\left[#2\right]} {\mathbf{Pr}_{#1}\left[#2\right]} }
\newcommand{\E}[2][]{ \ifthenelse{\isempty{#1}}
  {\mathop{\mathbf{E}}\left[#2\right]}
  {\mathop{\mathbf{E}}_{#1}\left[#2\right]} }
\renewcommand{\deg}[2][]{\mathrm{deg}_{#1}\tuple{#2}}
\newcommand{\dist}[2][]{\mathrm{dist}_{#1}\tuple{#2}}
\newcommand{\Tsaw}[2]{T_{\mathrm{SAW}}\tuple{#1,#2}}
\renewcommand{\exp}[1]{\mathrm{exp}\tuple{#1}}
\newcommand{\ER}{Erd\H{o}s-R\'{e}nyi}
\newcommand{\marg}{\mathtt{marg}}
\begin{document}

\title{Spatial mixing and approximate counting for Potts model on
  graphs with bounded average degree}

\author{ Yitong Yin\thanks{Supported by NSFC grants 61272081 and
    61321491. }\\Nanjing University, China\\ \texttt{yinyt@nju.edu.cn}
  \and
  Chihao Zhang\\
  Shanghai Jiao Tong University, China\\
  \texttt{chihao.zhang@gmail.com}}


\date{}\maketitle
\begin{abstract}
  We propose a notion of contraction function for a family of graphs
  and establish its connection to the strong spatial mixing for spin
  systems.  More specifically, we show that for anti-ferromagnetic
  Potts model on families of graphs characterized by a specific
  contraction function, the model exhibits strong spatial mixing, and
  if further the graphs exhibit certain local sparsity which are very
  natural and easy to satisfy by typical sparse graphs, then we also
  have FPTAS for computing the partition function.

  This new characterization of strong spatial mixing of multi-spin
  system does not require maximum degree of the graphs to be bounded,
  but instead it relates the decay of correlation of the model to a
  notion of effective average degree measured by the contraction of a
  function on the family of graphs. It also generalizes other notion
  of effective average degree which may determine the strong spatial
  mixing, such as the connective constant~\cite{sinclair2013spatial,
    sinclairspatial}, whose connection to strong spatial mixing is
  only known for very simple models and is not extendable to general
  spin systems.

  As direct consequences: (1) we obtain FPTAS for the partition
  function of $q$-state anti-ferromagnetic Potts model with activity
  $0\le\beta<1$ on graphs of maximum degree bounded by $\Delta$ when
  $q> 3(1-\beta)\Delta+1$, improving the previous best bound
  $\beta> 3(1-\beta)\Delta$~\cite{lu2013improved} and asymptotically
  approaching the inapproximability threshold
  $q=(1-\beta)\Delta$~\cite{galanis2013inapproximability}; and (2) we
  obtain an efficient sampler (in the same sense of fully
  polynomial-time almost uniform sampler, FPAUS) for the Potts model
  on \ER~random graph $\mathcal{G}(n,d/n)$ with sufficiently large
  constant $d$, provided that $q> 3(1-\beta)d+4$. In particular when
  $\beta=0$, the sampler becomes an FPAUS for for proper $q$-coloring
  in $\mathcal{G}(n,d/n)$ with $q> 3d+4$, improving the current best
  bound $q> 5.5d$ for FPAUS for $q$-coloring in
  $\mathcal{G}(n,d/n)$~\cite{efthymiou2013mcmc}.

\end{abstract}

\newcommand{\SAW}{\mathsf{SAW}}
\newcommand{\err}{\mathcal{E}}

\section{Introduction}


Spin systems are idealized models for local interactions with statistical behavior. In Computer Science, spin systems are widely used as a model for counting and sampling problems.
The \emph{Potts model} is a class of spin systems parameterized by the number of \emph{spin states} $q\ge 2$ and an \emph{activity} $\beta\ge 0$. Given an undirected graph $G=(V,E)$, a \emph{configuration} is a $\sigma\in [q]^{V}$ that assigns each vertex in the graph one of the $q$ states in $[q]$. Every configuration $\sigma$ is  assigned by the model with weight $w_G(\sigma)\defeq\prod_{e=uv\in E}\beta^{\mathbf{1}(\sigma(u)=\sigma(v))}$. A probability distribution over all configurations, called the \emph{Gibbs measure}, can be naturally defined as $\mu(\sigma)=\frac{w_G(\sigma)}{Z}$ where the normalizing factor $Z=Z(G)=\sum_{\sigma\in[q]^V}w_G(\sigma)$ is the \emph{partition function} in statistical physics. When $\beta<1$ the interacting neighbors favor disagreeing spin states over agreeing ones, and the model is said to be \emph{anti-ferromagnetic}.

The partition function gives a general formulation of counting problems on graphs, whose exact computation is \#P-hard. For example when $\beta=0$, the partition function $Z(G)$ gives the number of proper $q$-colorings of graph $G$. 
There is a substantial body of works on the approximate counting of proper $q$-colorings of graphs in the context of rapid mixing of a generic random walk, called the \emph{Glauber dynamics}\cite{col_Jerrum95,BD97,vigoda2000improved,col_DF03,col_Molloy04,col_HV03,hayes2006coupling,col_Hayes03,col_DFHV04}.

An exciting accomplishment in recent years is in relating the approximability of partition function for spin systems to the phase transition of the model on the infinite regular trees, also known as Bethe lattices. Here the exact phase transition property we are concerned with is the presence of the \emph{decay of correlation}, also called the \emph{spatial mixing}: assuming arbitrary possible boundary conditions on all vertices at distance $\ell$ from the root in the regular tree, the error for the marginal distribution at the root measured by the total variation distance goes to 0 as $\ell\to\infty$. The decay of correlation of the model on the infinite $\Delta$-regular tree undergoes phase transition as the activity parameter $\beta$ crosses the critical threshold in terms of $q$ and $\Delta$, called the \emph{uniqueness threshold} as it corresponds to the transition threshold for the uniqueness of Gibbs measures on the infinite regular tree. For proper $q$-colorings, the uniqueness condition for Gibbs measure on the $d$-regular tree is $q\ge \Delta+1$~\cite{jonasson2002uniqueness}, and for anti-ferromagnetic Potts model with $0<\beta<1$, the uniqueness threshold on the $\Delta$-regular tree is conjectured to be $q=(1-\beta)d$ which was proved to be the threshold for the uniqueness of semi-translation invariant Gibbs measures on $d$-regular tree~\cite{galanis2013inapproximability}. 
For anti-ferromagnetic 2-spin systems (where $q=2$), it was settled through a series of works~\cite{Weitz06, SST, LLY13, Sly10, galanis2012inapproximability, SS12} that the transition of approximability of the partition function for the model on graphs of bounded maximum degree is precisely characterized by the phase transition of the model on regular trees.

For multi-spin systems (where $q\ge 3$), in a seminal work of Gamarnik and Katz~\cite{GK07}, the decay of correlation was used to give deterministic FPTAS for counting proper $q$-colorings of graphs and for computing the partition function of general multi-spin systems. This is the first deterministic approximation algorithm for multi-spin systems and also one of the first few deterministic approximation algorithms for \#P-hard counting problems.
The specific notion of decay of correlation established in~\cite{GK07} is a stronger one, namely the \emph{strong spatial mixing}, where the correlation decay is required to hold even conditioning on an arbitrary configuration partially specified on a subset of vertices. 
Later in~\cite{gamarnik2013strong}, the strong spatial mixing was established for proper $q$-colorings on graphs of maximum degree bounded by $\Delta$ when $q\ge \alpha \Delta+1$ for $\alpha>\alpha^*\approx1.763..$, which is best bound known for strong spatial mixing for colorings of graphs of bounded maximum degree.
When the  parameters of the model are in the nonuniqueness regime, there is long-range correlation. In~\cite{galanis2013inapproximability}, this was used to establish the inapproximability of the partition function for anti-ferromagnetic Potts model on graphs with maximum degree beyond the uniqueness threshold.

In this paper, we are interested in the spatial mixing and FPTAS for spin systems on graphs with unbounded maximum degree. For some special 2-spin systems such as the hardcore model and Ising model with zero field, this was achieved by relating the decay of correlation property to a notion of effective degree of a family of graphs, called \emph{connective constant}. Roughly speaking, the connective constant of a family $\mathcal{G}$ of graphs is bounded by $\Delta$ if for all graphs $G$ from $\mathcal{G}$ the number of self-avoiding walks in $G$ of length $\ell$ starting from any vertex $v$ is bounded by $n^{O(1)}\Delta^\ell$. In~\cite{sinclair2013spatial, sinclairspatial}, the exact reliance on the maximum degree by the correlation decay and FPTAS for the hardcore model in~\cite{Weitz06} was replaced by that on the connective constant. However, as pointed out in~\cite{sinclairspatial}, the approach does not extend to general 2-spin systems. 
For multi-spin systems, the cases for unbounded maximum degree was studied in the literature mostly in the context of sampling proper $q$-colorings of \ER~graphs $\mathcal{G}(n,d/n)$ with constant average degree $d$~\cite{col_DFFV06, efthymiou2008random, mossel2010gibbs, efthymiou2013mcmc, efthymiou2012simple, efthymiou2014switching}.

Our goals are to establish strong spatial mixing and to give FPTAS for multi-spin systems on families of general graphs with unbounded maximum degree that are not restricted to $\mathcal{G}(n,d/n)$.
It turns out that for multi-spin systems, in order to achieve these goals we need a more robust way than connective constant to measure the effective average degree. 
We illustrate the necessity of this by a class of bad instances of graphs called \emph{caterpillars}, which were also considered in~\cite{yin2014spatial} and~\cite{efthymiou2014switching}. The caterpillars as in Figure~\ref{fig:caterpillar} are paths $P=(v_1,v_2,\dots,v_n)$ with each $v_i$ adjoined with $k$ bristles. Consider in particular the proper $q$-colorings. For the caterpillars with $k\ge q-2$, the vertices on bristles can always be fixed in a way to force the vertices on the path to alternate between two colors, which means the strong spatial mixing does not hold on any caterpillar with a $k\ge q-2$. On the other hand, the connective constant of the family of all caterpillars is only 1. This simple example shows how different multi-spin systems can be from the Ising model on graphs of unbounded maximum degree. It also suggests that for multi-spin systems we should take into account both the branching factor (which measures the long-range growth of self-avoiding walks) and the actual degrees of individual vertices (which locally affect the decay of correlation) when studying the decay of correlation on graphs of unbounded maximum degree.

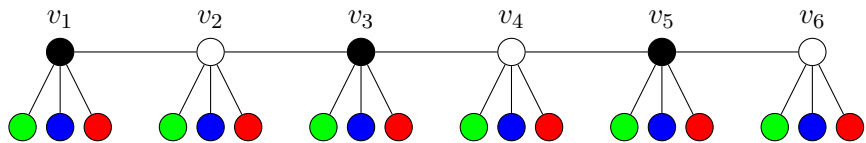
\begin{figure}[h]
  \centering
  \begin{tikzpicture}

\tikzstyle{onode}=[draw,circle,scale=1]{}
\node [onode,fill=white,label=above:{$v_4$}] (v14) at (-3,0) {};
\node [onode,fill=black,label=above:{$v_3$}] (v6) at (-5,0) {};
\node [onode,fill=black,label=above:{$v_5$}] (v18) at (-1,0) {};
\node [onode,fill=white,label=above:{$v_6$}] (v22) at (1,0) {};
\node [onode,fill=white,label=above:{$v_2$}] (v5) at (-7,0) {};
\node [onode,fill=black,label=above:{$v_1$}] (v2) at (-9,0) {};
\node [onode,fill=green] (v1) at (-9.5,-1) {};
\node [onode,fill=red] (v4) at (-8.5,-1) {};
\node [onode,fill=blue] (v3) at (-9,-1) {};
\node [onode,fill=green] (v7) at (-7.5,-1) {};
\node [onode,fill=red] (v8) at (-6.5,-1) {};
\node [onode,fill=blue] (v9) at (-7,-1) {};
\node [onode,fill=green] (v10) at (-5.5,-1) {};
\node [onode,fill=red] (v11) at (-4.5,-1) {};
\node [onode,fill=blue] (v12) at (-5,-1) {};
\node [onode,fill=green] (v13) at (-3.5,-1) {};
\node [onode,fill=red] (v15) at (-2.5,-1) {};
\node [onode,fill=blue] (v16) at (-3,-1) {};
\node [onode,fill=green] (v17) at (-1.5,-1) {};
\node [onode,fill=red] (v19) at (-0.5,-1) {};
\node [onode,fill=blue] (v20) at (-1,-1) {};
\node [onode,fill=green] (v21) at (0.5,-1) {};
\node [onode,fill=red] (v23) at (1.5,-1) {};
\node [onode,fill=blue] (v24) at (1,-1) {};
\draw  (v1) edge (v2);
\draw  (v2) edge (v3);
\draw  (v2) edge (v4);
\draw  (v2) edge (v5);
\draw  (v5) edge (v6);
\draw  (v7) edge (v5);
\draw  (v8) edge (v5);
\draw  (v5) edge (v9);
\draw  (v10) edge (v6);
\draw  (v11) edge (v6);
\draw  (v6) edge (v12);
\draw  (v13) edge (v14);
\draw  (v15) edge (v14);
\draw  (v14) edge (v16);
\draw  (v17) edge (v18);
\draw  (v19) edge (v18);
\draw  (v18) edge (v20);
\draw  (v21) edge (v22);
\draw  (v23) edge (v22);
\draw  (v22) edge (v24);
\draw  (v6) edge (v14);
\draw  (v14) edge (v18);
\draw  (v18) edge (v22);
\end{tikzpicture}
  \caption{A caterpillar with $n=6$, $k=3$ and $q=5$.}
  \label{fig:caterpillar}
\end{figure}


\subsection{Contributions}

In this paper, we prove strong spatial mixing and give FPTAS for
$q$-state anti-ferromagnetic Potts model and in particular the
$q$-colorings, for families of sparse graphs with unbounded maximum
degree. We achieve this by relating strong spatial mixing on a family
of graphs to the function which is \emph{contracting} on the graphs
from the family, a notion that generalizes the connective
constant~\cite{sinclair2013spatial, sinclairspatial} and properly
measures the effective average degree that affects the decay of
correlation in general spin systems.

As connective constant, it is convenient to talk about contraction
function on infinite graphs.  Given a vertex $v$ in a locally finite
infinite graph $G(V,E)$, let $\SAW(v,\ell)$ denote the set of
self-avoiding walks in $G$ of length $\ell$ starting at $v$. The
function $\delta:\mathbb{N}\to\mathbb{R}^+$ is a \emph{contraction
  function} for graph $G$ if
\begin{align*}
  \sup_{v\in V}\limsup_{\ell\to\infty}\err_\delta(v,\ell)=0
  \qquad\text{where}\quad
  \err_\delta(v,\ell)
  :=
  \sum_{\substack{(v,v_i,\ldots,v_\ell) \\ \in\SAW(v,\ell)}}
  \prod_{i=1}^\ell \delta(\deg{v_i}),
\end{align*}
This definition can be naturally extended to a family $\mathcal{G}$ of
finite graphs, in such a way that $\delta(\cdot)$ is contracting for
$\mathcal{G}$ if $\err_\delta(v,\ell)$ is of exponential decay in
$\ell$ for all $G\in \mathcal{G}$ and any vertex $v$ in $G$ (see
Section~\ref{section-prelim}).


We can use the contraction function $\delta(\cdot)$ to describe
various families $\mathcal{G}$ of graphs.  For example, the families
$\mathcal{G}$ of graphs with maximum degree bounded by $\Delta$ can be
described precisely by the contraction function such that
$\delta(d)=\frac{1}{\Delta}$ if $d\le \Delta$ and $\delta(d)=\infty$
if otherwise.  The contraction function also gives a more robust way
than the connective constant to capture average degrees.

\begin{proposition}
  The families of graphs $\mathcal{G}$ with connective constant
  bounded strictly by $\Delta$ are precisely the families
  $\mathcal{G}$ for which the constant function
  $\delta(d)=\frac{1}{\Delta}$ is a contraction function.
\end{proposition}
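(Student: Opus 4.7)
The proof rests on a single observation: when $\delta$ is the constant function $1/\Delta$, the product inside the definition of $\err_\delta$ collapses, giving
\begin{align*}
\err_\delta(v,\ell) \;=\; \sum_{(v,v_1,\ldots,v_\ell)\in\SAW(v,\ell)} \prod_{i=1}^\ell \frac{1}{\Delta} \;=\; \frac{|\SAW(v,\ell)|}{\Delta^\ell}.
\end{align*}
So contraction of $1/\Delta$ is \emph{definitionally} an exponential-decay statement for the self-avoiding walk counts, which is precisely what the connective constant measures. The whole proof is then the two directions of this equivalence.

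For the forward direction, suppose $\mathcal{G}$ has connective constant strictly bounded by $\Delta$, so there exist $\Delta'<\Delta$ and a polynomial $p$ with $|\SAW(v,\ell)|\le p(n)(\Delta')^\ell$ for every $G\in\mathcal{G}$ on $n$ vertices and every $v\in V(G)$. Plugging into the identity above yields
\begin{align*}
\err_\delta(v,\ell) \;\le\; p(n)\left(\tfrac{\Delta'}{\Delta}\right)^\ell,
\end{align*}
which is exponential decay in $\ell$ with rate $\Delta'/\Delta<1$, so $\delta(d)=1/\Delta$ is contracting for $\mathcal{G}$.

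For the backward direction, suppose $\delta(d)=1/\Delta$ is contracting for $\mathcal{G}$, so $\err_\delta(v,\ell)\le q(n)\rho^\ell$ for some polynomial $q$ and some $\rho<1$ uniformly over $\mathcal{G}$. Multiplying the identity through by $\Delta^\ell$ gives $|\SAW(v,\ell)|\le q(n)(\rho\Delta)^\ell$, and since $\rho\Delta<\Delta$, the family has connective constant at most $\rho\Delta$, strictly less than $\Delta$.

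The only subtlety I anticipate is aligning the two quantitative forms of ``exponential decay'' --- the one implicit in Sinclair et al.'s definition of connective constant (which tolerates an $n^{O(1)}$ prefactor in the SAW count) and the one the paper uses for its contraction notion on families (exponential decay of $\err_\delta$ uniformly over $\mathcal{G}$, again allowing a polynomial prefactor in the graph size). Once these definitions are transcribed precisely in Section~\ref{section-prelim}, the reduction above is bookkeeping; no combinatorial argument beyond the trivial identity is needed.
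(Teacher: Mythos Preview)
Your proof is correct and is precisely the argument the paper has in mind: the paper does not give an explicit proof but simply states after the definitions in Section~\ref{section-prelim} that ``it is easy to see that graph families $\mathcal{G}$ with constant contraction function $\delta(d)=\frac{1}{\Delta}$ are precisely the families $\mathcal{G}$ of connective constant bounded strictly by $\Delta$.'' Your identity $\err_\delta(v,\ell)=|\SAW(v,\ell)|/\Delta^\ell$ and the two-line translation in each direction is exactly this trivial observation spelled out.
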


The FPTAS in~\cite{sinclairspatial} for the hardcore model with
activity $\lambda$ on the families of graphs with bounded connective
constant is actually an FPTAS for any family $\mathcal{G}$ of graphs
for which a critically defined $\delta(\cdot)$ is a contraction
function.\footnote{This $\delta(\cdot)$ is defined formally as
  follows:
  $\delta(d)=d^{\frac{1}{\rho}-1}\left(\frac{dx-1}{x+1}\right)^{\frac{1}{\rho}}$,
  where $x$ is the unique positive solution to
  $dx=\lambda(1+x)^{-d}+1$ and
  $\rho=2-(\Delta_c-1)\ln\left(1+\frac{1}{\Delta_c-1}\right)$ where
  $\Delta_c=\Delta_c(\lambda)$ is the critical (real) degree
  satisfying
  $\lambda=\frac{\Delta_c^{\Delta_c}}{(\Delta_c-1)^{\Delta_c+1}}$.}
It was proved in~\cite{sinclairspatial} that
$\delta(d)\le\frac{1}{\Delta_c}$ for all $d$ with the equality holds
precisely at the critical threshold $d=\Delta_c$, therefore with the
notion of contraction function we observe that the FPTAS
in~\cite{sinclairspatial} works on strictly broader families of graphs
than what was guaranteed in~\cite{sinclairspatial} with the connective
constant.



Our first result relates the decay of correlation (in the sense of
strong spatial mixing) of the anti-ferromagnetic Potts model on a
family $\mathcal{G}$ of general graphs to its contraction function.

\begin{theorem}[Main theorem: strong spatial mixing]\label{main-thm-ssm}
  Let $q\ge3$ be an integer and $0\le\beta< 1$. Let $\mathcal{G}$ be a
  family of finite graphs that satisfy the followings:
  \begin{itemize}
  \item the following $\delta(\cdot)$ is a contraction function for
    $\mathcal{G}$:
    \begin{align}
      \delta(d)=
      \begin{cases}
        \frac{2(1-\beta)}{q-1-(1-\beta)d}&\mbox{if }d\le\frac{q-1}{1-\beta}-2,\\
        1 & \mbox{otherwise;}
      \end{cases}\label{eq:delta-function}
    \end{align}
  \item (proper $q$-coloring) if $\beta =0$, the family $\mathcal{G}$
    also needs to be $q$-colorable.
  \end{itemize}
  Then the $q$-state Potts model with activity $\beta$ exhibits strong
  spatial mixing on all graphs in $\mathcal{G}$.
\end{theorem}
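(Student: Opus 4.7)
My plan is to prove this via the recursion-on-marginals paradigm that underlies existing strong spatial mixing results for multi-spin systems: (i) unfold the marginal at $v$ into a recursive computation whose branching tracks paths in $G$, (ii) establish a per-step contraction inequality with local factor $\delta(\deg{u})$ at each vertex $u$ encountered, and (iii) iterate along self-avoiding walks so that the error is controlled by $\err_\delta(v,\ell)$, which vanishes by hypothesis.

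Reduction and recursion. Fix a vertex $v$ and two boundary conditions $\sigma,\tau$ on a set $\Lambda$ agreeing outside a single vertex at distance $\ge\ell$ from $v$; the general case reduces to this by a coupling/union-bound argument on the boundary. To obtain a recursion for the marginal $\mu_v(\cdot)$, I would follow the Gamarnik--Katz style approach adapted to Potts: pin the neighbors $u_1,\ldots,u_d$ of $v$ one at a time to each of the $q$ possible values and express
\begin{align*}
\mu_v(\cdot)\;=\;F_v\!\left(\mu_{u_1}^{(1)}(\cdot),\ldots,\mu_{u_d}^{(d)}(\cdot)\right),
\end{align*}
where $\mu_{u_i}^{(i)}$ is the marginal at $u_i$ in the graph obtained from $G$ by deleting $v$ and pinning $u_1,\ldots,u_{i-1}$. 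Unfolding this recursion produces a computation tree whose branches enumerate self-avoiding walks starting at $v$.

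Per-step contraction. The central technical step is a per-vertex inequality of the form
\begin{align*}
\Phi\!\left(\mu_v^\sigma,\mu_v^\tau\right)\;\le\;\sum_{u\sim v}\delta(\deg{u})\cdot\Phi\!\left(\mu_u^{\sigma'},\mu_u^{\tau'}\right),
\end{align*}
for an appropriately chosen potential $\Phi$ on marginal vectors (either TV directly, or an $\ell_\infty$-type norm after a nonlinear change of variables). The explicit expression $\delta(d)=\frac{2(1-\beta)}{q-1-(1-\beta)d}$ in the low-degree regime is expected to drop out of a direct Jacobian analysis of $F_v$: the denominator $q-1-(1-\beta)d$ is precisely the local "uniqueness gap" at a vertex of degree $d$ on the regular tree (vanishing exactly at the conjectured threshold $q=(1-\beta)d+1$), and the factor of $2$ reflects that a single-coordinate perturbation at a neighbor propagates nontrivially to at most two color coordinates at $v$. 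When $d>\frac{q-1}{1-\beta}-2$ the gap closes and no local contraction is available, so one takes the trivial $\delta(d)=1$; global decay then rests entirely on the contraction-function hypothesis, which is what rules out families dominated by high-degree vertices.

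Iteration and conclusion. Iterating the per-step inequality $\ell$ times through the recursion aggregates into a sum over self-avoiding walks of length $\ell$ starting at $v$, yielding $\Phi(\mu_v^\sigma,\mu_v^\tau)\le C\cdot\err_\delta(v,\ell)$, where $C$ bounds the base-case potential at the perturbed vertex. Since $\delta$ is contracting on $\mathcal{G}$, $\err_\delta(v,\ell)\to 0$ exponentially, which is precisely strong spatial mixing. The hardest step will be establishing the per-vertex inequality with the stated $\delta$: this demands a carefully tuned potential so that the Jacobian bound holds uniformly over all feasible neighbor-marginals and across all $q$ color coordinates simultaneously. The $\beta=0$ case is additionally delicate because some Potts marginals vanish on non-$q$-colorable pinned subgraphs, which is exactly why $q$-colorability of $\mathcal{G}$ is imposed: it keeps the pinned marginals strictly positive, so that $F_v$ stays well-defined and the base-case constant $C$ remains bounded.
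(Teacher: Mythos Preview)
Your overall architecture---unfold the marginal via a recursion, prove a one-step contraction, and iterate along self-avoiding walks to land on $\err_\delta(v,\ell)$---is correct and matches the paper. The right potential is the log-difference (equivalently $\Phi(x)=1/x$ in the mean-value step), and with that choice the low-degree factor $\frac{2(1-\beta)}{q-1-(1-\beta)d}$ does fall out of a Jacobian computation, \emph{provided} one has the a~priori bound $p_u\le\frac{1}{q-(1-\beta)\deg{u}}$ on the child marginal (the paper's Lemma~\ref{lem:marg-upperbound}).

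The gap is the high-degree branch. You write that when $d>\frac{q-1}{1-\beta}-2$ ``no local contraction is available, so one takes the trivial $\delta(d)=1$.'' But you cannot simply \emph{take} $\delta(d)=1$: the per-step inequality with factor $1$ at a high-degree child is false in general. In the Jacobian bound the controlling term is $(1-\beta)\,p/(1-(1-\beta)p)$, and at a high-degree child $u$ the marginal $p$ can be arbitrarily close to $1$ (think of the caterpillar example in the introduction, where pinned leaves force $u$ to a single color), so this term blows up. The marginal upper bound that rescues the low-degree case requires $q>(1-\beta)\deg{u}$ and is unavailable here. Thus a purely vertex-by-vertex recursion with a uniform potential cannot produce the stated $\delta$.

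The paper's fix is structural rather than analytic: it replaces the vertex recursion by a \emph{block} recursion. One absorbs $v$ together with the surrounding connected cluster of high-degree vertices into a ``permissive block'' $B(v)$ whose vertex boundary consists only of low-degree vertices, writes the block marginal in closed form (Lemma~\ref{lem:block-recursion}), and recurses only at the low-degree boundary vertices $v_i$. Because each $v_i$ is low-degree by construction, the marginal upper bound applies there and the Jacobian analysis (Lemma~\ref{lem:tech}) delivers the honest factor $\delta(\deg{v_i})<1$. The factor $1$ that $\delta$ assigns to high-degree vertices then merely records that the self-avoiding walk traverses those vertices inside the block without any recursion step occurring there---it is an accounting device, not a Jacobian bound. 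This block mechanism is the key idea your proposal is missing; without it the per-step inequality you need at high-degree vertices does not hold.
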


\begin{remark*}
  The contraction function describes a notion of average degree and
  the theorem holds for graphs with unbounded maximum degree.  For
  example, for \ER{} random graph $\mathcal{G}(n,d/n)$, which with
  high probability has constant average degree $(1-o(1))d$ and
  unbounded maximum degree $\Theta(\log n/\log\log n)$, assuming
  $q>3(1-\beta)d+O(1)$ with high probability the above $\delta(\cdot)$
  is a contraction function for $\mathcal{G}(n,d/n)$. Note that
  $q=(1-\beta_c)\Delta$ is the uniqueness/nonuniqueness threshold for
  semi-translation invariant Gibbs measures of Potts model on the
  infinite $\Delta$-regular tree $\mathbb{T}_\Delta$, which is also
  conjectured to be the uniqueness/nonuniqueness
  threshold~\cite{galanis2013inapproximability}. \end{remark*}

In order for the SSM to imply an FPTAS for computing the partition
function, we need the graphs to be sparse in a slightly more
restrictive manner than what guaranteed by the above contraction
function. The conditions of~\eqref{eq:delta-function} classify the
vertices in a graph into low-degree vertices (if
$\deg{v}<\frac{q-1}{1-\beta}-2$) and high-degree vertices (if
otherwise). A graph $G=(V,E)$ is said to be \emph{locally sparse} if
for every path $P$ in $G$ of length $\ell$, the total size of clusters
of high-degree vertices growing from path $P$ is bounded by
$O(\ell+\log |V|)$ (see Section~\ref{section-prelim} for a formal
definition).  Intuitively, a graph is locally sparse if all clusters
of high-degree vertices are small (of size $O(\log |V|)$) and are
relatively far away from each other.

\begin{theorem}[Main theorem: approximate counting]\label{main-thm-FPTAS}
  Let $q\ge3$ be an integer and $0\le\beta< 1$. Let $\mathcal{G}$ be a
  family of locally sparse graphs satisfying the conditions of
  Theorem~\ref{main-thm-ssm}. Then there is an FPTAS for the partition
  function of the $q$-state Potts model with activity $\beta$ for all
  graphs in $\mathcal{G}$
\end{theorem}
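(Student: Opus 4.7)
The plan is to convert the strong spatial mixing from Theorem~\ref{main-thm-ssm} into an FPTAS via the standard correlation-decay recipe of Weitz and Gamarnik-Katz, with the local sparsity condition playing the role of a maximum-degree surrogate that ensures the resulting computation is polynomial-time.

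First, I reduce approximating $Z(G)$ to approximating single-vertex conditional marginals. Fixing any vertex ordering $v_1, \ldots, v_n$, we have the telescoping identity
\begin{equation*}
  Z(G) \;=\; q\cdot \prod_{i=1}^n \frac{Z(G,\sigma_{<i})}{Z(G,\sigma_{\le i})},
\end{equation*}
where each ratio equals a conditional marginal $\mu_{v_i}(\sigma(v_i)\mid \sigma_{<i})$ in $G$. A $(1\pm\eps)$-approximation of $Z(G)$ therefore follows from $(1\pm\eps/n)$-approximations of each marginal.

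Second, I approximate each conditional marginal by a tree recursion for the Potts model in the style of Gamarnik-Katz and Lu-Yin~\cite{lu2013improved}. The recursion tree is essentially indexed by self-avoiding walks from the target vertex $v$: at each step, one expands the neighborhood of the current vertex, and the $q$-dimensional marginal ratios propagate up the tree. Truncating at depth $L$ and substituting arbitrary boundary values at the frontier, the resulting error at the root is controlled in precisely the manner of the proof of Theorem~\ref{main-thm-ssm}, giving a bound proportional to $\err_\delta(v, L)$. Since $\delta(\cdot)$ is a contraction function for $\mathcal{G}$, this quantity decays exponentially in $L$, so depth $L = O(\log(n/\eps))$ suffices for the required accuracy.

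Third, I bound the running time of the depth-$L$ truncated recursion. Its size is comparable to the number of length-$\le L$ self-avoiding walks starting at $v$, times a $\text{poly}(q)$ overhead per node. With unbounded maximum degree this count could be super-polynomial. Here the local sparsity saves us: classify vertices as \emph{low-degree} if $\deg{v} < \frac{q-1}{1-\beta}-2$ and \emph{high-degree} otherwise, matching the two cases of~\eqref{eq:delta-function}. The local sparsity condition restricts the total size of high-degree clusters encountered along any length-$\ell$ path to $O(\ell + \log |V|)$, so the extra branching coming from high-degree vertices is absorbed into a $\text{poly}(|V|)$ overhead, while low-degree vertices contribute only a constant branching factor per step. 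The truncated tree thus has size $\text{poly}(n)\cdot c^L$ for some constant $c=c(q,\beta)$, which is $\text{poly}(n, 1/\eps)$ for our choice of $L$.

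The main obstacle is precisely this SAW-counting step: turning the abstract cluster-size form of local sparsity into a concrete polynomial bound on the truncated computation. A careful induction on walk length should track how many high-degree vertices (and their attached high-degree clusters) have been visited, charging their branching overhead against the $O(\log |V|)$ slack available from local sparsity, while the low-degree portion of the walk contributes the geometric factor that must be beaten by exponential decay of $\err_\delta(v,L)$. Combined with the error analysis of step two, this yields the claimed FPTAS.
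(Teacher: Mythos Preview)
Your high-level outline (telescoping reduction, depth-$L$ correlation-decay recursion with error controlled by $\err_\delta(v,L)$, running time polynomial via local sparsity) matches the paper's, but the concrete mechanism you describe has a gap and misidentifies what local sparsity is doing.

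You describe a vertex-by-vertex recursion (``at each step, one expands the neighborhood of the current vertex''). The paper's recursion, and the one used in the proof of Theorem~\ref{main-thm-ssm} that you invoke for the error bound, is instead a \emph{block} recursion: from $v$ one jumps to the minimal permissive block $B(v)$, enumerates all $\pi\in\mathcal{F}(B(v))$, and recurses on the boundary vertices $v_i\in\partial B(v)$. This is not a cosmetic difference. The derivative bound (Lemma~\ref{lem:tech}) that yields the factor $\delta(d_i)=\frac{2(1-\beta)}{q-1-(1-\beta)d_i}$ requires the hypothesis $p_{i,\rho}\le\frac{1}{q-(1-\beta)d_i}$, which comes from Lemma~\ref{lem:marg-upperbound} and is only available when $v_i$ is low-degree. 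The block structure is precisely what guarantees every recursion child is low-degree. A vertex-by-vertex recursion hits high-degree children, for which the marginal can be arbitrarily close to $1$ (think of the caterpillar example with $\beta=0$), and the step contraction is not bounded by $\delta(d)=1$ as the $\err_\delta$ bound would need; the error analysis simply breaks there.

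Once you switch to the block recursion, the role of local sparsity changes from what you wrote. The number of self-avoiding walks of length $\le\theta L$ is bounded \emph{by the contraction function itself}, using the uniform lower bound $\delta(d)\ge\frac{2(1-\beta)}{q-1}$ so that $\err_\delta(v,\ell)\ge\bigl(\tfrac{2(1-\beta)}{q-1}\bigr)^\ell|\SAW(v,\ell)|$; this is how the paper gets $|\SAW(v,\ell)|\le n^{O(1)}\exp(O(\ell))$. Local sparsity is \emph{not} used to bound branching. It is used to bound the per-walk cost: each block step costs $q^{O(|B(v)|)}$ to enumerate configurations, and an induction (Lemma~\ref{lem:time-potts}) shows the total time is at most $\sum_{P}q^{O(|B(P)|)}$ over length-$\le\theta L$ walks $P$. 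Local sparsity then gives $|B(P)|\le C(L+\log n)$, so this is $n^{O(1)}\exp(O(L))$. Your step~3 should be reorganized around these two separate bounds rather than charging ``extra branching'' to local sparsity.
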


\begin{remark*}
  This is the first FPTAS for the general Potts model on families of
  graphs of unbounded maximum degree.  And even for graphs with
  bounded maximum degree $\Delta$ Theorem~\ref{main-thm-FPTAS} holds
  with a much better bound $q> 3(1-\beta)\Delta+1$, which greatly
  improves the best previous bound $3(1-\beta)\Delta<\beta$ known for
  anti-ferromagnetic Potts model on graphs with bounded maximum
  degree~\cite{lu2013improved}. Recall that $q=(1-\beta_c)\Delta$ is
  the semi-translation invariant uniqueness threshold on the infinite
  $d$-regular tree $\mathbb{T}_\Delta$, and the problem is hard to
  approximate for all even $q$ satisfying
  $q<(1-\beta)\Delta$~\cite{galanis2013inapproximability}.
\end{remark*}

To evaluate how tight Theorem~\ref{main-thm-ssm}
and~\ref{main-thm-FPTAS} could be, we instead consider an idealized
goal: to find a $\delta(\cdot)$ (depending only on the model) such
that the strong spatial mixing rate with respect to any $v$ in a graph
$G$ is always bounded by $\err_\delta(v,\ell)$.  With this stronger
requirement, for $q$-coloring, the following $\delta(\cdot)$ is the
best we can hope for:
\begin{align*}
  \delta(d)=
  \begin{cases}
    \frac{1}{q-d-1}&\mbox{if }d\le q-2\\
    1 & \mbox{otherwise}
  \end{cases}
\end{align*}
because it is achieved by the SSM rate on caterpillars as illustrated
in Figure~\ref{fig:caterpillar} when the colors of leaves are fixed
properly.  This certainly gives a lower bound to any contraction
function for the above idealized goal.  Note that the $\delta(\cdot)$
function in~\eqref{eq:delta-function} in the case of $q$-coloring
(when $\beta=0$) is precisely twice this lower bound function. And
this factor 2 is due to an intrinsic obstacle in the current
approaches for correlation-decay based algorithms for multi-spin
systems.





As an application of Theorem~\ref{main-thm-FPTAS} we consider the
\ER{} random graph $\mathcal{G}(n,d/n)$ with constant average degree
$d$. It is well known that the partition function of the Potts model
on this random graph is highly concentrated to its expectation which
is easy to calculate~\cite{coja2013chasing}, so for this model
sampling is more interesting than counting. And because the FPTAS in
Theorem~\ref{main-thm-FPTAS} actually works for a broader
self-reducible family of instances (for example, the list-colorings
instead of just $q$-colorings), we also have efficient sampling
algorithms due to the standard Jerrum-Valiant-Vazirani
reduction~\cite{samp_JVV86}.
So we prove the following result.

\begin{theorem}\label{main-thm-random}
  Let $d$ be sufficiently large and $G\sim\mathcal{G}(n,d/n)$. Let
  $0\le\beta<1$ and $q> 3(1-\beta)d+4$.  There exists an algorithm
  $\mathcal{S}$ such that for any $\epsilon>0$ with high probability
  $\mathcal{S}$ returns a random configuration in $[q]^{V(G)}$ from a
  distribution that is within total variation distance $\epsilon$ from
  the Gibbs distribution $\mu_{G}$ for the $q$-state Potts model with
  activity $\beta$.  And the running time of $\mathcal{S}$ is in
  polynomial in $n$ and $\log\frac{1}{\epsilon}$.  When $\beta=0$,
  i.e.~for $q>3d+4$, with high probability $\mathcal{S}$ is an FPAUS
  (fully polynomial-time almost uniform sampler) for proper
  $q$-colorings of $G$.
\end{theorem}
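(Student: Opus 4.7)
}

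The overall strategy is to reduce to Theorem~\ref{main-thm-FPTAS}: I will verify that, with high probability over $G\sim \mathcal{G}(n,d/n)$, the graph $G$ satisfies every hypothesis required by Theorem~\ref{main-thm-FPTAS}, namely (i) the function $\delta(\cdot)$ from~\eqref{eq:delta-function} is contracting on $G$, (ii) $G$ is locally sparse, and (iii) when $\beta=0$, $G$ is $q$-colorable. Once these are established, the FPTAS from Theorem~\ref{main-thm-FPTAS} together with the self-reducibility of the (list-)Potts model and the standard Jerrum--Valiant--Vazirani reduction~\cite{samp_JVV86} converts the deterministic approximator of the partition function into the required FPAUS/approximate sampler with polynomial running time in $n$ and $\log(1/\epsilon)$.

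For (iii), it is classical that the chromatic number of $\mathcal{G}(n,d/n)$ concentrates at $(1+o(1))d/(2\ln d)$ for $d$ large, so $q>3d+4$ trivially yields $q$-colorability with high probability. For (i), I would estimate $\err_\delta(v,\ell)$ by a first-moment computation: enumerate the $\ell$-vertex sequences starting at $v$, use that each prescribed edge is present in $G$ independently with probability $d/n$, and condition on a sequence being a self-avoiding walk. This gives
\begin{align*}
\E{\err_\delta(v,\ell)}
\;\le\;
\left(\tfrac{d}{n}\right)^{\ell}\!\!\sum_{v_1,\ldots,v_\ell\text{ distinct}}\,\E[]{\,\prod_{i=1}^\ell\delta(\deg{v_i})\mid (v,v_1,\ldots,v_\ell)\in\SAW(v,\ell)}.
\end{align*}
Conditioning on the path fixes $\ell$ edges, and the remaining edges incident to each $v_i$ form an independent Binomial$(n-O(\ell),d/n)$, which for $\ell=O(\log n)$ is a Poisson$(d)$ plus $O(1)$. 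The critical estimate is then $d\cdot \E{\delta(\mathrm{Poisson}(d)+O(1))}<1$: for $D\sim\mathrm{Poisson}(d)$ the dominant mass of $\delta(D)$ sits near $D\approx d$, where $\delta(D)\approx 2(1-\beta)/((2(1-\beta)d+3))\approx 1/d$, while the high-degree tail $\Pr[D\ge (q-1)/(1-\beta)-2]$ decays super-polynomially in $d$ by Chernoff since the threshold exceeds $3d$. Choosing $d$ large and using $q>3(1-\beta)d+4$, one checks that $d\cdot\E{\delta(D)}\le \rho$ for some $\rho<1$, hence $\E{\err_\delta(v,\ell)}\le \rho^\ell$. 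A union bound over $v$ combined with Markov's inequality at depth $\ell=\Theta(\log n)$ (beyond which $\err_\delta$ is already zero because the diameter of $\mathcal{G}(n,d/n)$ is $O(\log n)$) then yields a uniform exponential decay over all vertices.

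For (ii), the set $H$ of ``high-degree'' vertices (degree $\ge (q-1)/(1-\beta)-2>3d+\Theta(1)$) has density at most $\Pr[\mathrm{Poisson}(d)\ge 3d]=e^{-\Theta(d)}$. Viewed as a subgraph of $\mathcal{G}(n,d/n)$, the induced graph on $H$ is dominated by a branching process with subcritical offspring mean $d\cdot e^{-\Theta(d)}\ll 1$ for large $d$, so its connected components have size $O(\log n)$ whp; moreover, the expected number of high-degree vertices within graph distance $r$ of any fixed path of length $\ell$ is at most $e^{-\Theta(d)}\cdot d^{O(r)}\cdot\ell$, and a standard union bound gives the $O(\ell+\log n)$ bound in the definition of local sparsity. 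The main technical obstacle throughout is (i): the bound $q>3(1-\beta)d+4$ just barely makes $d\cdot\E{\delta(D)}$ strictly less than $1$, and proving this uniformly over all vertices requires care in handling the mild correlations introduced by conditioning on a self-avoiding walk together with a path-counting argument tight enough to survive at the logarithmic depth needed to drive $\err_\delta$ below the threshold required by Theorem~\ref{main-thm-FPTAS}. Once this is in hand, invoking Theorem~\ref{main-thm-FPTAS} and the JVV reduction completes the proof.
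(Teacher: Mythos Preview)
Your plan mirrors the paper's route exactly: verify the three hypotheses of Theorem~\ref{main-thm-FPTAS} for $\mathcal{G}(n,d/n)$ with high probability (this is the paper's Theorem~\ref{thm:gnp-property}, proved via Lemmas~\ref{lem:gnp-decay} and~\ref{lem:gnp-sparse}), then apply the FPTAS together with the JVV reduction (Corollary~\ref{cor:fpaus}).

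Two corrections worth noting. First, your claim that $\err_\delta(v,\ell)=0$ once $\ell$ exceeds the diameter is false---self-avoiding walks can be far longer than the diameter (think of a Hamiltonian path in a graph of logarithmic diameter); the paper instead simply observes that the algorithm only ever invokes the contraction bound for $\ell=O(\log n)$, so establishing it for $\ell=o(\sqrt{n})$ suffices. Second, the decisive inequality $d\cdot\E{\delta(D)}<1$ that you correctly flag as the crux is exactly Lemma~\ref{lem:gnp-decay-tech}; the paper proves it not by sending $d\to\infty$ but by a sixth-order Taylor expansion of $\delta$ around $\E{D}$, valid at the sharp threshold $q=3(1-\beta)\Delta+2$ for every $\Delta>1$. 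To handle the within-path correlations you mention, the paper separates out the extra intra-path edges as a single binomial $Y$ and uses the pointwise bound $\prod_i\delta(\deg{v_i})\le 4^{Y}\prod_i f_{q-2}(X_i)$ with the $X_i$ independent, which is where the ``$+4$'' in the hypothesis $q>3(1-\beta)d+4$ is spent.
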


This is the first result for sampling in general Potts model on
$\mathcal{G}(n,d/n)$ and also improves the state of the arts for the
special case of sampling proper $q$-colorings of $\mathcal{G}(n,d/n)$.
For this special case of the problem, FPAUSes have been obtained
mostly by the rapidly mixing of certain block Glauber
dynamics~\cite{col_DFFV06, efthymiou2008random, mossel2010gibbs,
  efthymiou2013mcmc}.  The best bound $q>5.5 d$ was achieved
in~\cite{efthymiou2013mcmc}, whereas our bound for proper $q$-coloring
is $q>3d+4$.  Further, the results of~\cite{col_DFFV06,
  efthymiou2008random, mossel2010gibbs, efthymiou2013mcmc} are more
restricted to $\mathcal{G}(n,d/n)$. In contrast, our algorithm is more
generic and works for all families of locally sparse graphs with a
bounded growth rate of self-avoiding walks measured properly by the
contraction function.


In~\cite{efthymiou2012simple,efthymiou2014switching}, sampling
algorithms with a much weaker control of total variation errors than
FPAUS were considered, and with this weaker sampler a better bound
$q\ge (1+\eps)d$ was achieved, which almost approaches the uniqueness
threshold.

\subsection{Techniques}


The approximation algorithms for multi-spin systems (e.g.~graph
coloring) have been studied in the literature in the context of rapid
mixing of Glauber dynamics.  In a seminal work of Gamarnik and
Katz~\cite{GK07}, the correlation-decay based deterministic algorithms
for multi-spin systems are introduced. Our analysis of decay of
correlation utilizes the approach of~\cite{GK07} in an essential
manner.

Consider for example the proper $q$-colorings of graph $G$. If there
is a vertex $v$ in $G$ with degree much higher than $q$, then the
Glauber dynamics will have torpid mixing around $v$ since the color of
$v$ will be frozen at most of the time; and the bound on the decay of
correlation also breaks at this high-degree vertex because locally it
may have absolute correlation with the neighbors. Further, for graph
coloring with maximum degree unbounded, even the feasibility becomes
an issue.  These bad situations were dealt with in~\cite{col_DFFV06,
  mossel2010gibbs, efthymiou2013mcmc} by using block dynamics, where
the block contains the high-degree vertices as its core which is
separated from the boundary by a buffer of small-degree vertices.  As
the bound on $q$ getting tighter, the constructions of such blocks
have to be highly delicate to meet this requirement.

The novelty in our approach is that we give a correlation-decay based
deterministic algorithm that works in terms of blocks. A key
observation of us is that despite the correlation can be absolute
between a pair of high-degree vertices within the same block, its
contribution to the decay rate along a self-avoiding walk is at most a
factor 1 (hence the $\delta(d)=1$ branch for large $d$
in~\eqref{eq:delta-function}), while the low-degree vertices at the
boundaries of blocks contribute to the decay of correlation as in the
bounded degree case. While this observation is made to the original
Gibbs measure, algorithmically it can be witnessed by applying the
recursion in terms of marginal distributions on blocks.  In contrast
to the block construction in~\cite{col_DFFV06, efthymiou2008random,
  mossel2010gibbs, efthymiou2013mcmc}, the blocks in our algorithm are
extremely simple and generic: they are just clusters of high-degree
vertices.  This simple construction of blocks makes our algorithm more
generic and works on general graphs.

In a previous work~\cite{yin2014spatial}, this idea of block version
of decay of correlation was used to establish a ``spatial mixing
only'' result for random graphs, augmented from the result of
Gamarnik~\emph{et al.}~\cite{gamarnik2013strong} for graphs of bounded
maximum degree. The current work emphasizes the strong spatial mixings
that have algorithmic implications. Here the property of being locally
sparse is used to bound the running time.

\newcommand{\mono}{\mathsf{\# mon}}

\section{Preliminaries}\label{section-prelim}
Let $G=(V,E)$ be an undirected graph. For any subset $S\subseteq V$ of vertice, let $G[S]$ denote the subgraph of $G$ induced by $S$, and let $\partial B=\set{u\in V\setminus B\mid \exists w\in B,(u,w)\in E}$ denote the vertex boundary of $B$. Given a vertex $v$ in $G$, let $\dist[G]{v,S}$ denote the minimum distance from $v$ to any vertex $u\in S$ in $G$.

\paragraph{Potts model and spatial mixing.} The Potts model is parameterized by an integer $q\ge 2$ and a real $\beta\ge 0$ called the \emph{activity} parameter.  Each element of $[q]$ is called a \emph{color} or a \emph{state}. 
Let $G=(V,E)$ be a graph.
A \emph{configuration} $\sigma\in[q]^\Lambda$ on a subset $\Lambda\subseteq V$ of vertices assigns each vertex $v$ in $\Lambda$ one of the $q$ colors in $[q]$. 
In the Potts model on graph $G$, each configuration $\sigma\in[q]^V$ is assigned a weight
\[
w_G(\sigma)
=\beta^{\mono(\sigma)},
\]
where $\mono(\sigma)=|\set{(u,v)\in E \mid \sigma(u)=\sigma(v)}|$ gives the number of monochromatic (undirected) edges in the configuration $\sigma$. 

In order to study strong spatial mixing, we consider the instances of Potts model with boundary conditions. 
An instance of Potts model is a tuple $\Omega=\tuple{G,\Lambda,\sigma}$ where $G=(V,E)$ is an undirected graph, $\Lambda\subseteq V$ is a subset of vertices in $G$ and $\sigma\in[q]^\Lambda$ is a configuration on $\Lambda$.  
Given such an instance $\Omega=\tuple{G,\Lambda,\sigma}$, the weight function $w_\Omega$ assigns each configuration $\pi\in[q]^V$ the weight $w_\Omega(\pi)=w_G(\pi)$ if $\pi$ agrees with $\sigma$ over all vertices in $\Lambda$, and $w_\Omega(\pi)=0$ if otherwise.
An instance $\Omega$ is \emph{feasible} if there exists a configuration on $V$ with positive weight.  This gives rise to a nature probability distribution over all configurations on $V$ for a feasible Potts instance:
\[
\Pr[\Omega]{c(V)=\pi}=\frac{w_\Omega(\pi)}{Z(\Omega)},
\]
where $Z(\Omega)= \sum_{\sigma\in[q]^V}w_\Omega(\sigma)$ is call the \emph{partition function}. This probability distribution is called the \emph{Gibbs measure}. 
For a vertex $v\in V$ and any color $x\in[q]$, we use $\Pr[\Omega]{c(v)=x}$ to denote the marginal probability that $v$ is assigned color $x$ by a configuration sampled from the Gibbs measure. 
Similarly, for a set $S\subseteq V$ and $\pi\in [q]^S$, we use $\Pr[\Omega]{c(S)=\pi}$ to denote the marginal probability that $S$ is assigned configuration $\pi$ by a configuration sampled from the Gibbs measure. Next we define the notion of strong spatial mixing.

\begin{definition}[strong spatial mixing]
Let $\mathcal{G}$ be a family of graphs. We say that the $q$-state Potts model with activity $\beta$ exhibits \emph{strong spatial mixing} on all graphs in $\mathcal{G}$ if there exist positive constants $C>0, \gamma<1$ such that for every $G=(V,E)\in\mathcal{G}$,  every $v\in V$, $\Lambda\subseteq V$, $x\in[q]$, and any configurations $\sigma,\tau\in[q]^\Lambda$ that $\Omega_1=(G,\Lambda,\sigma)$ and $\Omega_2=(G,\Lambda,\tau)$ are both feasible, it holds that
\[
\left| \Pr[\Omega_1]{c(v)=x}-\Pr[\Omega_2]{c(v)=x}\right|\le n^C\gamma^{\dist[G]{v,\Delta}},
\]
where $n=|V|$ and $\Delta\subseteq\Lambda$ is the set of vertices on which $\sigma$ and $\tau$ differ. 
\end{definition}
If we change $\dist[G]{v,\Delta}$ to $\dist[G]{v,\Lambda}$ in the definition it becomes the definition of \emph{weak spatial mixing}.

\paragraph{Permissive block and locally sparse.}
Fix any $q\ge 2$ and $0\le\beta<1$. Let $\Omega=(G,\Lambda,\sigma)$ be an instance of $q$-state Potts model with activity $\beta$ and $v$ a vertex in $G$. 
We call $v$ a \emph{low-degree} vertex if $\mathrm{deg}_G(u)<\frac{q-1}{1-\beta}-2$, and otherwise we call it a \emph{high-degree} vertex.

\begin{definition}[permissive block]
Let $\Omega=(G,\Lambda,\sigma)$ be a Potts instance where $G=(V,E)$.
A vertex set $B\subseteq V\setminus \Lambda$ is a \emph{permissive block} in $\Omega$ if every boundary vertex $u\in\partial B\setminus\Lambda$ is a low-degree vertex. 
For any subset of vertices $S\subseteq V\setminus \Lambda$, we denote $B(S)=B_{\Omega}(S)$ the \emph{minimal permissive block} containing $S$. And we write $B(v)=B(S)$ if $S=\{v\}$ is a singleton.
\end{definition}

\begin{definition}
  A family $\mathcal{G}$ of finite graphs is \emph{locally sparse} if
  there exists a constant $C>0$ such that for every $G=(V,E)$ in the
  family and every path $P$ in $G$ of length $\ell$ we have
  $|B(P)|\le C(\ell+\log |V|)$.
\end{definition}

\paragraph{Feasibility and local feasibility.}

Let $\Omega=(G,\Lambda,\sigma)$ where $G=(V,E)$ be an instance of
Potts model, $v\in V\setminus\Lambda$ be a vertex. For a subset of
vertices $S\subseteq V\setminus\Lambda$, a configuration $\pi\in[q]^S$
is (globally) \emph{feasible} if there exists a configuration on $V$
with positive weight and agrees with $\pi$ on $S$. A configuration
$\pi\in[q]^S$ is \emph{locally feasible}, if
$w_{G[\Lambda\cup S]}(\sigma\cup\pi)>0$, where $\sigma\cup\pi$ is the
configuration over $\Lambda\cup S$ that agrees with both $\sigma$ and
$\pi$.

The disucssion of feasibility and local feasibility is meaningful only
when $\beta=0$. In this case, the local feasibility of a configuration
on a permissive block implies the (global) feasibility.

\begin{proposition}\label{prop:local-feasibility}
  Let $\Omega=(G,\Lambda,\sigma)$ where $G=(V,E)$ be an feasible
  instance of Potts model with $\beta=0$, $v\in V\setminus\Lambda$ be
  a vertex and $\pi\in [q]^{B(v)}$ be a locally feasible
  configuration. Then $\pi$ is also feasible.
\end{proposition}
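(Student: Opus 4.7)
The plan is to start from an existing proper $q$-coloring of $G$ guaranteed by feasibility of $\Omega$, swap in $\pi$ on the block $B(v)$, and then repair the resulting ``seam'' at the outer boundary $\partial^* \defeq \partial B(v)\setminus\Lambda$ via a greedy list-colouring argument that crucially exploits the permissive-block property. Throughout, recall that for $\beta=0$ a vertex $u$ is \emph{low-degree} iff $\deg_G(u)< q-3$, i.e.\ $\deg_G(u)\le q-4$, and by definition of permissive block every $u\in\partial^*$ is low-degree.

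\textbf{Step 1 (global extension exists).} Since $\Omega$ is feasible there is a proper $q$-colouring $\tau\in[q]^V$ with $\tau|_\Lambda=\sigma$. Define a partial colouring $\rho$ on $V\setminus\partial^*$ by
\[
\rho(x)=
\begin{cases}
\sigma(x) & x\in\Lambda,\\
\pi(x) & x\in B(v),\\
\tau(x) & x\in V\setminus(\Lambda\cup B(v)\cup\partial^*).
\end{cases}
\]
I would then check that $\rho$ is a proper colouring of $G[V\setminus\partial^*]$: inside $\Lambda\cup B(v)$ this is exactly the local feasibility assumption; on the ``far'' set $V\setminus(\Lambda\cup B(v)\cup\partial B(v))$ it inherits properness from $\tau$; and edges crossing from $\Lambda$ to the far set are handled because $\rho$ still agrees with $\tau$ on both endpoints. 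The only missing edges are those between $B(v)$ and some neighbour outside $B(v)$, but any such neighbour lies in $\partial B(v)=(\partial^*)\cup(\partial B(v)\cap\Lambda)$, and the $\Lambda$-side is again covered by local feasibility; so no conflict in $\rho$ reaches outside $\partial^*$.

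\textbf{Step 2 (greedy list-colouring of $\partial^*$).} For each $u\in\partial^*$ let
\[
L(u)=[q]\setminus\{\rho(w) : w\in N_G(u)\setminus\partial^*\}.
\]
Then $|L(u)|\ge q-(\deg_G(u)-\deg_{G[\partial^*]}(u))$. Order the vertices of $\partial^*$ arbitrarily and colour them one by one, always choosing a colour in $L(u)$ that differs from the already-chosen colours of $u$'s neighbours in $\partial^*$. The number of forbidden colours at $u$ is at most $\deg_{G[\partial^*]}(u)+\big(\deg_G(u)-\deg_{G[\partial^*]}(u)\big)=\deg_G(u)$, which is strictly less than $q$ because $u$ is low-degree; hence a valid colour is always available. (If $q=3$ there are no low-degree vertices at all and $\partial^*=\emptyset$, so the step is vacuous and $B(v)$ is just the connected component of $v$ in $G-\Lambda$.) Assembling $\rho$ with this colouring of $\partial^*$ yields a proper $q$-colouring of $G$ extending both $\sigma$ and $\pi$, which is exactly global feasibility of $\pi$.

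\textbf{Where the difficulty sits.} The only non-routine step is Step~2, and it is precisely the place where the definition of ``permissive block'' is bought into the proof: without the guarantee that every $u\in\partial^*$ has $\deg_G(u)<q-3$ (so in particular $\deg_G(u)<q$), the greedy list-colouring could get stuck even though local feasibility holds on $B(v)$. Thus the entire content of the proposition is to notice that the boundary low-degree condition gives strictly more colours than the worst-case obstruction from already-coloured neighbours, allowing a clean one-pass extension.
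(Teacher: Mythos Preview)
Your proposal is correct and follows essentially the same approach as the paper: take a global proper colouring $\tau$ witnessing feasibility of $\Omega$, overwrite $B(v)$ with $\pi$, erase the unfrozen boundary $\partial B(v)\setminus\Lambda$, and greedily re-colour it using that each such boundary vertex has degree strictly less than $q$. The paper's own proof is a terse one-paragraph version of exactly this argument; your Step~1 verification of properness of the partial colouring and your explicit list-colouring accounting in Step~2 simply spell out details that the paper leaves implicit.
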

\begin{proof}
  Denote $B=B(v)$. Fix a configuration $\eta\in[q]^V$ such that $w_\Omega(\eta)>0$,
  this is possible since $\Omega$ is feasible. We denote by $\eta'$
  the restriction of $\eta$ to
  $V\setminus\tuple{(B\cup\partial B)\setminus\Lambda}$, i.e., the set
  of vertices that are either in $\Lambda$, or not in
  $B\cup\partial B$.

  Consider the configuration
  $\eta=\pi\cup \eta'\in[q]^{V\setminus\tuple{\partial B\setminus
      \Lambda}}$,
  it can be extended to a configuration $\rho\in[q]^{V}$ with
  $w_{\Omega}(\rho)>0$ in a greedy fashion, since every vertex in
  $\partial B\setminus\Lambda$ is of low-degree. Thus $\rho$ witness
  that $\pi$ is feasible.
\end{proof}

With this proposition, we do not distinguish between local feasibility
and feasiblity of configurations on permissive blocks. For a
permissive block $B$, we use $\mathcal{F}(B)$ to denote the set of
feasible configuration. Note that when $\beta>0$, the set
$\mathcal{F}(B)$ is simply $[q]^B$.

\paragraph{Self-avoiding walk tree.} 
Given a graph $G=(V,E)$ and a vertex $v\in V$,  a rooted tree $T$ can be naturally constructed from all self-avoiding walks starting from $v$ in $G$ as follows: Each vertex in $T$ corresponds to a self-avoiding walk (simple path in $G$) $P=(v,v_1,v_2,\dots,v_k)$ starting from $v$, whose children correspond to all self-avoiding walks $(v,v_1,v_2,\dots,v_k,v_{k+1})$ in $G$ extending $P$, and the root of $T$ corresponds to the trivial walk $(v)$. The resulting tree, denoted by $\Tsaw{G}{v}$, is called the \emph{self-avoiding walk} (SAW) tree constructed from vertex $v$ in graph $G$.

From this construction, every vertex in $\Tsaw{G}{v}$ can be naturally identified with the vertex in $V$ (many-to-one) at which the corresponding self-avoiding walk ends. 

\paragraph{Connective constant and contraction function.}
Given a vertex $v$ in a locally finite graph $G(V,E)$, let $\SAW(v,\ell)$ denote the set of self-avoiding walks in $G$ of length $\ell$ starting at $v$. The following notion of connective constant of families of finite graphs is introduced in~\cite{sinclair2013spatial}.
\begin{definition}[connective constant~\cite{sinclair2013spatial,sinclairspatial}]
Let $\mathcal{G}$ be a family of finite graphs. 
The \emph{connective constant} of $\mathcal{G}$ is bounded by $\Delta$ if there exists a positive constant $C>0$ such that for any graph $G=(V,E)$ in $\mathcal{G}$ and any vertex $v$ in $G$, we have $|\SAW(v,\ell)|\le n^C\Delta^\ell$ where $n=|V|$ for all $\ell\ge 1$.
\end{definition}

Let $\delta:\mathbb{N}\to\mathbb{R}^+$ be a function.
Given a vertex $v$ in a locally finite graph $G(V,E)$, let
\[
\err_\delta(v,\ell)
:=
\sum_{\substack{(v,v_i,\ldots,v_\ell) \\ \in\SAW(v,\ell)}}
\prod_{i=1}^\ell \delta(\deg{v_i}).
\]

\begin{definition}[contraction function]
%
Let $\mathcal{G}$ be a family of finite graphs. The $\delta:\mathbb{N}\to\mathbb{R}^+$ is a \emph{contraction function} for $\mathcal{G}$ if there exist positive constants $C>0,\gamma<1$ such that for any graph $G=(V,E)$ in $\mathcal{G}$ and any vertex $v$ in $G$, we have $\err_\delta(v,\ell)<n^C\gamma^\ell$ where $n=|V|$ for all $\ell\ge 1$.
\end{definition}
It is easy to see that graph families $\mathcal{G}$ with constant contraction function $\delta(d)=\frac{1}{\Delta}$ are precisely the families $\mathcal{G}$ of connective constant  bounded strictly by $\Delta$.

\section{Recursion}\label{sec:recursion}

In this section, we introduce recursions to compute the marginal
probability on a vertex and on a permissive block in Potts model
respectively.

Let $\Omega=(G,\Lambda,\sigma)$ where $G=(V,E)$ be an instance of
Potts model and $v\in V\setminus\Lambda$ be a vertex. Let $B=B(v)$ be
the minimal permissive block containing $v$. Let
$\delta B=\set{u_iv_i\mid i\in[m]}$ be an enumeration of boundary
edges of $B$ where $v_i\not\in B$ for every $i\in[m]$.  In this
notation, more than one $u_i$ or $v_i$ may refer to the same
vertex. We denote $E(B)\defeq\set{uv\in E\mid u,v\in B}$ the edges in
$B$.  We use $\bar B$ to denote the inner boundary of $B$, i.e.,
$\bar B=\set{u\in B\mid uv\in E\mbox{ and }v\not\in B}$.


Recall that we use $\mathcal{F}(B)$ to denote the set of feasible
configurations on a permissive block $B$, it is easy to see that, for
every $x\in[q]$,
\[
  \Pr[\Omega]{c(v)=x}=\sum_{\substack{\pi\in
      \mathcal{F}(B)\\\pi(v)=x}}\Pr[\Omega]{c(B)=\pi}.
\]
This identity relates the marginal probability on a vertex to marginal
probabilities on a block. We now define notations for some
sub-instances and give a block-to-vertices identity.

Let $\pi\in \mathcal{F}(B)$ be a configuration on a permissive block
$B$. For every $i\in[m]$, denote $\pi_i=\pi(u_i)$.  Let
$G_B=(V_B,E_B)$ denote the graph obtained from $G$ by removing
$B\setminus\bar B$ and edges in $E(B)$, i.e.,
$V'=(V\setminus B)\cup \bar B$, $E'=E\setminus E(B)$. Let
$\Omega_B=(G_B,\Lambda,\sigma)$. For every $i=1,2,\dots,m+1$, define
$\Omega^\pi_i=(G_i^\pi,\Lambda_i^\pi,\sigma_i^\pi)$ as the instance
obtained from $\Omega_B$ by fixing $u_j$ to color $\pi_j$ for every
$j\in[i-1]$ and by removing edges $u_jv_j$ for every
$j=i,i+1,\dots,m$.

\begin{lemma}\label{lem:block-recursion}
  Assuming above notations, it holds that
  \begin{equation}
    \label{eq:block-vertex-recursion}
    \Pr[\Omega]{c(B)=\pi}
    =\frac{w_{G[B]}(\pi)\cdot\prod_{i=1}^m\tuple{1-(1-\beta)\Pr[\Omega_i^\pi]{c(v_i)=\pi_i}}}
    { \sum_{\rho\in
        \mathcal{F}(B)}w_{G[B]}(\rho)\cdot\prod_{i=1}^m\tuple{1-(1-\beta)\Pr[\Omega_i^\rho]{c(v_i)=\rho_i}}
    }.
  \end{equation}

\end{lemma}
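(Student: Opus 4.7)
The plan is to unfold the marginal $\Pr[\Omega]{c(B)=\pi}$ as $Z_\pi/Z(\Omega)$, where $Z_\pi=\sum_{\eta\in[q]^V,\;\eta|_B=\pi}w_\Omega(\eta)$, factor out the internal contribution $w_{G[B]}(\pi)$ coming from edges in $E(B)$, and recognize the remainder as a partition function on $G_B$ in which every $u_j$ has been pinned to $\pi_j$. Since every $u\in\bar B$ appears as some $u_j$ and each vertex of $B\setminus\bar B$ has all of its incident edges inside $E(B)$, this yields the clean identity $Z_\pi = w_{G[B]}(\pi)\cdot Z(\Omega^\pi_{m+1})$, where $\Omega^\pi_{m+1}$ denotes the instance on $G_B$ obtained from $\Omega_B$ by additionally fixing every $u_j$ to $\pi_j$.

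Next I would telescope $Z(\Omega^\pi_{m+1})/Z(\Omega^\pi_1)=\prod_{i=1}^{m}Z(\Omega^\pi_{i+1})/Z(\Omega^\pi_i)$ and compute each factor using the pointwise identity $\beta^{\mathbf{1}(a=b)}=1-(1-\beta)\mathbf{1}(a=b)$. Passing from $\Omega^\pi_i$ to $\Omega^\pi_{i+1}$ reinstates the edge $u_iv_i$ and additionally fixes $u_i$ to $\pi_i$. When $u_i=u_j$ for some $j<i$, so $u_i$ is already pinned (consistently, since $\pi_j=\pi_i$), a direct expansion gives
\[
\frac{Z(\Omega^\pi_{i+1})}{Z(\Omega^\pi_i)}=1-(1-\beta)\Pr[\Omega^\pi_i]{c(v_i)=\pi_i}.
\]
Otherwise $u_i$ is free; the key structural observation is that in that case $u_i$ is \emph{isolated} in $G^\pi_i$, because its only potential neighbors in $G_B$ are the vertices $v_j$ with $u_j=u_i$ (necessarily $j\ge i$ since $u_i$ is unpinned), and every such edge $u_jv_j$ has been removed. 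Summing out the $q$ uniform colors of $u_i$ therefore produces an extra factor $1/q$:
\[
\frac{Z(\Omega^\pi_{i+1})}{Z(\Omega^\pi_i)}=\frac{1}{q}\bigl(1-(1-\beta)\Pr[\Omega^\pi_i]{c(v_i)=\pi_i}\bigr).
\]

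Finally, both $Z(\Omega^\pi_1)$ (which is manifestly $\pi$-independent, since $\Omega^\pi_1$ pins no $u_j$ and omits every boundary edge) and the set of indices on which the $1/q$ factor appears (determined purely by coincidences among the $u_j$'s) are independent of $\pi$. They therefore cancel between the numerator $Z_\pi$ and the denominator $Z(\Omega)=\sum_\rho Z_\rho$, which by Proposition~\ref{prop:local-feasibility} may be restricted to $\rho\in\mathcal F(B)$ since infeasible $\rho$ contribute zero on both sides. This produces exactly~\eqref{eq:block-vertex-recursion}. The step that will require the most care is the case analysis above: one has to verify that a free $u_i$ really is isolated in $G^\pi_i$ (which requires inspecting precisely which boundary edges survive at stage $i$) and that the resulting $1/q$ prefactor is genuinely uniform in $\pi$, so that the cancellation between numerator and denominator goes through globally.
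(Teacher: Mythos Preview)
Your proposal is correct and follows essentially the same route as the paper: factor $Z_\pi = w_{G[B]}(\pi)\,Z(\Omega^\pi_{m+1})$, divide numerator and denominator by the $\pi$-independent quantity $Z(\Omega^\pi_1)$, telescope, and evaluate each ratio $Z(\Omega^\pi_{i+1})/Z(\Omega^\pi_i)$ via $\beta^{\mathbf{1}(a=b)}=1-(1-\beta)\mathbf{1}(a=b)$. In fact you are more careful than the paper on one point: the paper writes the intermediate identity $Z(\Omega^\rho_{i+1})=\sum_{y}Z(\Omega^\rho_i\mid c(v_i)=y)\,\beta^{\mathbf{1}(y=\rho_i)}$ without the extra $1/q$ that appears when $u_i$ is newly pinned and isolated, whereas you explicitly isolate that factor and observe that it is $\pi$-independent and hence cancels between numerator and denominator---which is exactly why the paper's slightly loose intermediate step still yields the correct final formula.
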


\begin{proof}
  \begin{align*}
    \Pr[\Omega]{c(B)=\pi}
      &=\frac{w_{G[B]}(\pi)\cdot Z(\Omega_{m+1}^\pi)}{\sum_{\rho\in\mathcal{F}(B)}w_{G[B]}(\rho)\cdot Z(\Omega_{m+1}^\rho)}\\
      &=\frac{w_{G[B]}(\pi)\cdot \frac{Z(\Omega_{m+1}^\pi)}{Z(\Omega_1^\pi)}}{\sum_{\rho\in\mathcal{F}(B)}w_{G[B]}(\rho)\cdot \frac{Z(\Omega_{m+1}^\rho)}{Z(\Omega_1^\rho)}}\\
      &=\frac{w_{G[B]}(\pi)\cdot \prod_{i=1}^m\frac{Z(\Omega_{i+1}^\pi)}{Z(\Omega_i^\pi)}}{\sum_{\rho\in\mathcal{F}(B)}w_{G[B]}(\rho)\cdot \prod_{i=1}^m\frac{Z(\Omega_{i+1}^\rho)}{Z(\Omega_i^\rho)}}.
  \end{align*}
  Since for every $\rho\in\mathcal{F}(B)$ and $i\in[d]$,
  \[
    Z(\Omega_{i+1}^\rho)=\sum_{y\in[q]}Z\tuple{\Omega_{i}^\rho\mid
      c(v_i)=y}\cdot \beta^{\mathbf{1}(y=\rho(u_i))}
  \]
  where $Z\tuple{\Omega_{i}^\rho\mid c(v_i)=y}$ stands for the sum of
  the weights of all feasible configurations $\sigma$ on
  $\Omega_i^\rho$ satisfying $\sigma(v_i)=y$ and $\mathbf{1}(\cdot)$
  is the indicator function.  With this identity, we can further write
  \begin{align*}
    \Pr[\Omega]{c(B)=\pi}
    &=\frac{w_{G[B]}(\pi)\cdot \prod_{i=1}^m\frac{\sum_{y\in[q]}Z\tuple{\Omega_{i}^\pi\mid c(v_i)=y}\cdot
      \beta^{\mathbf{1}(y=\pi(u_i))}}{Z(\Omega_i^\pi)}}{\sum_{\rho\in\mathcal{F}(B)}w_{G[B]}(\rho)\cdot \prod_{i=1}^m\frac{\sum_{y\in[q]}Z\tuple{\Omega_{i}^\rho\mid c(v_i)=y}\cdot
      \beta^{\mathbf{1}(y=\rho(u_i))}}{Z(\Omega_i^\rho)}}\\
    &=\frac{w_{G[B]}(\pi)\cdot\prod_{i=1}^m\tuple{1-(1-\beta)\Pr[\Omega_i^\pi]{c(v_i)=\pi_i}}}
      { \sum_{\rho\in
      \mathcal{F}(B)}w_{G[B]}(\rho)\cdot\prod_{i=1}^m\tuple{1-(1-\beta)\Pr[\Omega_i^\rho]{c(v_i)=\rho_i}}
      }.
  \end{align*}
\end{proof}

This identity expresses the marginal probability on a permissive block
as the function of marginal probabilities on its incident vertices,
with modified instances. We now analyze the derivatives of this
function.

\begin{lemma}\label{lem:tech}
  Let
  $
  \mathbf{p}=\tuple{p_{i,\rho}}_{i\in[m],\rho\in\mathcal{F}(B)},\mathbf{\hat
    p}=\tuple{\hat p_{i,\rho}}_{i\in[m],\rho\in\mathcal{F}(B)}$
  be two tuples of variables and
  \[
    f(\mathbf{p})\defeq
    \frac{w_{G[B]}(\pi)\prod_{i=1}^m\tuple{1-(1-\beta)p_{i,\pi}}}
    {\sum_{\rho\in\mathcal{F}(B)}w_{G[B]}(\rho)\prod_{i=1}^m\tuple{1-(1-\beta)p_{i,\rho}}}.
  \]
  Assume for every $i\in[m],\rho\in \mathcal{F}(B(v))$,
  $p_{i,\rho},\hat p_{i,\rho}\le \frac{1-\beta}{q-(1-\beta)d_i}$, then
  \[
    \abs{\log f(\mathbf{p})-\log f(\mathbf{\hat p})}\le
    \sum_{i\in[d]}\frac{2(1-\beta)}{q-(1-\beta)d_i-1}\cdot\max_{\rho\in[q]^{B(v)}}\abs{\log
      p_{i,\rho}-\log\hat p_{i,\rho}}.
  \]
\end{lemma}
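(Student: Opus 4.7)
The plan is to apply the mean value theorem to $\log f$ regarded as a function of the logarithmic variables $s_{j,\tau}\defeq\log p_{j,\tau}$, and then to bound each partial derivative explicitly. Writing
\[
  \log f(\mathbf{p}) = \log w_{G[B]}(\pi) + \sum_{i=1}^m \log(1-(1-\beta)p_{i,\pi}) - \log Z(\mathbf{p}),
\]
where $Z(\mathbf{p})=\sum_{\rho\in\mathcal{F}(B)}w_{G[B]}(\rho)\prod_i(1-(1-\beta)p_{i,\rho})$ is the denominator of $f$, a direct computation yields the clean expression
\[
  \frac{\partial\log f}{\partial\log p_{j,\tau}} = \frac{(1-\beta)\,p_{j,\tau}\,\bigl(q_\tau-\mathbf{1}[\tau=\pi]\bigr)}{1-(1-\beta)\,p_{j,\tau}},
\]
where $q_\tau\defeq w_{G[B]}(\tau)\prod_i(1-(1-\beta)p_{i,\tau})/Z(\mathbf{p})$ forms a probability distribution over $\mathcal{F}(B)$ (the induced Gibbs marginal on $B$). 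Applying MVT in log-coordinates along the segment between $\log\mathbf{p}$ and $\log\hat{\mathbf{p}}$, and then using the trivial inequality $\sum_{j,\tau}a_{j,\tau}b_{j,\tau}\le\sum_j(\sum_\tau a_{j,\tau})\max_\tau b_{j,\tau}$, I reduce the lemma to bounding the inner sum $\sum_\tau\abs{\partial\log f/\partial\log p_{j,\tau}}$ by $2(1-\beta)/(q-(1-\beta)d_j-1)$ uniformly in $j$ at any point on the segment; the hypothesis on the $p_{i,\rho}$ transfers to such an intermediate point by convexity of the constraint in log-coordinates.

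For the inner sum, I note that $\phi(x)\defeq (1-\beta)x/(1-(1-\beta)x)$ is increasing in $x$, so the hypothesis gives $\phi(p_{j,\tau})\le \phi\!\left(\tfrac{1-\beta}{q-(1-\beta)d_j}\right)$ uniformly in $\tau$. Factoring out this uniform bound and using the total-variation estimate $\sum_\tau \abs{q_\tau-\mathbf{1}[\tau=\pi]}=2(1-q_\pi)\le 2$ for the probability distribution $\{q_\tau\}$ yields
\[
  \sum_\tau\abs{\frac{\partial\log f}{\partial\log p_{j,\tau}}}\le 2\phi\!\left(\frac{1-\beta}{q-(1-\beta)d_j}\right).
\]
A short algebraic manipulation then shows $\phi\!\left(\tfrac{1-\beta}{q-(1-\beta)d_j}\right)\le\tfrac{1-\beta}{q-(1-\beta)d_j-1}$, which after clearing denominators reduces to $\beta(q-(1-\beta)d_j)\ge -\beta(1-\beta)$, a trivially true inequality for $\beta\ge 0$ and $q>(1-\beta)d_j$. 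Combining the two steps gives the claimed bound.

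The argument is overall quite direct; the conceptually important move is recognizing that the partial derivative factors through a Gibbs-like distribution $\{q_\tau\}$ on the block configurations, which turns the sum over $\tau$ into a total variation distance to a point mass. Without this observation, the sum over $\tau\in\mathcal{F}(B)$ might naively look of size up to $q^{\abs{B}}$, whereas it is in fact bounded by $2$. The remaining work---the monotonicity of $\phi$ and the algebraic reduction from $(1-\beta)^2$ in the denominator to the cleaner $1$---is routine.
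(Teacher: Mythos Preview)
Your proof is correct and is essentially the same as the paper's: the paper uses the potential $\Phi(x)=1/x$ together with the mean value theorem, which is exactly equivalent to your change to log-coordinates, and then bounds the sum over $\rho$ by splitting off $\rho=\pi$ and using $\sum_{\rho\neq\pi}q_\rho=1-f$, which is the same $2(1-q_\pi)\le 2$ total-variation estimate you write down. Your presentation is arguably a bit cleaner in that you identify the $\{q_\tau\}$ as a Gibbs distribution and the sum as a TV distance explicitly, and you also carry through the stated hypothesis $p_{i,\rho}\le\frac{1-\beta}{q-(1-\beta)d_i}$ faithfully and show the resulting $\frac{(1-\beta)^2}{q-(1-\beta)d_j-(1-\beta)^2}$ still implies the claimed bound, whereas the paper's last displayed line tacitly uses the bound $p_{i,\rho}\le\frac{1}{q-(1-\beta)d_i}$ from Lemma~\ref{lem:marg-upperbound}.
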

\begin{proof}
  For every $i\in[m]$, we have
  \[
    \frac{\partial f}{\partial
      p_{i,\pi}}=-(1-\beta)f(1-f)\cdot\frac{1}{1-(1-\beta)p_{i,\pi}}.
  \]
  For every $i\in[m]$ and $\rho\ne \pi$, we have
  \[
    \frac{\partial f}{\partial p_{i,\rho}}
    =(1-\beta)f\cdot\frac{w_{G[B]}(\rho)\prod_{i=1}^m(1-(1-\beta)p_{i,\rho})}{\sum_{\sigma\in\mathcal{F}(B)}w_{G[B]}(\sigma)\prod_{i=1}^m(1-(1-\beta)p_{i,\sigma})}\cdot\frac{1}{1-(1-\beta)p_{i,\rho}}.
  \]
  Thus,
  \[
    \sum_{\substack{\rho\in\mathcal{F}(B)\\\rho\ne\pi}}\frac{\partial
      f}{\partial p_{i,\rho}}\le (1-\beta)f(1-f)\cdot
    \max_{\substack{\rho\in\mathcal{F}(B)\\\rho\ne
        \pi}}\frac{1}{1-(1-\beta)p_{i,\rho}}.
  \]
  Let $\Phi=\frac{1}{x}$, by mean value theorem, for some
  $\mathbf{\tilde p}=(\tilde p_{i,\rho})_{i\in[m],\rho\in[q]^{B(v)}}$
  where each $\tilde p_{i,\rho}\le \frac{1-\beta}{q-(1-\beta)d_i}$, we
  have
  \begin{align*}
    &\quad\;\abs{\log f(\mathbf{p})-\log f(\mathbf{\hat p})}\\
    &=\sum_{i\in[m]}\sum_{\rho\in\mathcal{F}(B)}\left.\tuple{\frac{\Phi(f)}{\Phi(p_{i,\rho})}\abs{\frac{\partial
      f}{\partial p_{i,\rho}}}}\right|_{\mathbf{p}=\mathbf{\tilde p}}\cdot \abs{\log p_{i,\rho}-\log\hat p_{i,\rho}}\\
    &\le\sum_{i\in[m]}\left.\tuple{\frac{\Phi(f)}{\Phi(p_{i,\pi})}\abs{\frac{\partial
      f}{\partial p_{i,\pi}}}+\sum_{\substack{\rho\in\mathcal{F}(B)\\\rho\ne \pi}}\frac{\Phi(f)}{\Phi(p_{i,\rho})}\abs{\frac{\partial
    f}{\partial p_{i,\rho}}}}\right|_{\mathbf{p}=\mathbf{\tilde p}}\cdot\max_{\rho\in[q]^{B(v)}}\abs{\log p_{i,\rho}-\log\hat p_{i,\rho}}\\
    &\le \sum_{i\in[m]}\left.\tuple{(1-\beta)\tuple{\frac{p_{i,\pi}}{1-(1-\beta)p_{i,\pi}}+\max_{\substack{\rho\in\mathcal{F}(B)\\\rho\ne\pi}}\frac{p_{i,\rho}}{1-(1-\beta)p_{i,\rho}}}}\right|_{\mathbf{p}=\mathbf{\tilde p}}\cdot\max_{\rho\in[q]^{B(v)}}\abs{\log p_{i,\rho}-\log\hat p_{i,\rho}}\\
    &\le \sum_{i\in[m]}\frac{2(1-\beta)}{q-(1-\beta)d_i-(1-\beta)}\cdot \max_{\rho\in[q]^{B(v)}}\abs{\log p_{i,\rho}-\log\hat p_{i,\rho}}.
  \end{align*}  

\end{proof}

The following lemma gives an upper bound for the probability
$\Pr[\Omega]{c(v)=x}$.

\begin{lemma}\label{lem:marg-upperbound}
  Assume $q> (1-\beta) d$. For every color $x\in [q]$, it holds that
  \[
    \Pr[\Omega]{c(v)=x}\le \frac{1}{q-(1-\beta) d}.
  \]
  where $d$ is the degree of $v$ in $G$.
\end{lemma}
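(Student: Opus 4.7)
The plan is to reduce the marginal at $v$ to a purely local quantity by conditioning on the colors of all neighbors of $v$, and then control this local quantity by Bernoulli's inequality.

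Let $u_1,\ldots,u_d$ be the neighbors of $v$ in $G$. For any $\tau=(\tau_1,\ldots,\tau_d)\in[q]^d$ consistent with $\sigma$ on the boundary (i.e.\ $\tau_i=\sigma(u_i)$ whenever $u_i\in\Lambda$) and carrying positive probability under the Gibbs measure, the Markov property of the Potts model gives
\[
  \Pr[\Omega]{c(v)=x\mid c(u_i)=\tau_i\text{ for all }i\in[d]}
  =\frac{\beta^{k_x(\tau)}}{\sum_{y\in[q]}\beta^{k_y(\tau)}},
\]
where $k_y(\tau)\defeq\abs{\set{i\in[d]:\tau_i=y}}$ and $\sum_{y\in[q]}k_y(\tau)=d$. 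This is because, once all neighbors of $v$ are fixed, the conditional weight of configurations extending $\tau$ with $c(v)=y$ factors as a constant independent of $y$ times $\beta^{k_y(\tau)}$, with all edges not incident to $v$ cancelling in numerator and denominator.

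The crux of the argument is a uniform lower bound on the denominator. Bernoulli's inequality $\beta^k\ge 1-(1-\beta)k$ holds for every integer $k\ge 0$ and every $\beta\in[0,1]$, so summing over $y\in[q]$,
\[
  \sum_{y\in[q]}\beta^{k_y(\tau)}
  \ge q-(1-\beta)\sum_{y\in[q]}k_y(\tau)
  =q-(1-\beta)d,
\]
which is positive by the hypothesis $q>(1-\beta)d$. Combined with the trivial bound $\beta^{k_x(\tau)}\le 1$, this yields
\[
  \Pr[\Omega]{c(v)=x\mid c(u_i)=\tau_i\text{ for all }i\in[d]}\le\frac{1}{q-(1-\beta)d}
\]
for every feasible neighbor configuration $\tau$.

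Finally, averaging over $\tau$ with respect to the Gibbs marginal on $\set{u_1,\ldots,u_d}$ preserves the bound, since infeasible $\tau$ contribute zero mass:
\[
  \Pr[\Omega]{c(v)=x}
  =\sum_{\tau}\Pr[\Omega]{c(u_i)=\tau_i\,\forall i\in[d]}\cdot\Pr[\Omega]{c(v)=x\mid c(u_i)=\tau_i\,\forall i\in[d]}
  \le\frac{1}{q-(1-\beta)d}.
\]
There is no real obstacle here: the whole proof is essentially a single application of Bernoulli to the local Boltzmann factor, and the only thing to keep track of is that the neighbor counts $k_y(\tau)$ sum to exactly $d$ regardless of $\tau$, which is what allows Bernoulli to linearize the denominator into the clean bound $q-(1-\beta)d$.
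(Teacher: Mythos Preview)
Your proof is correct and follows the same overall strategy as the paper: condition on the colors of the $d$ neighbors of $v$, reduce to bounding the local ratio $\beta^{k_x}/\sum_{y}\beta^{k_y}$, and control the denominator via Bernoulli's inequality. The one difference is in how the denominator is handled: the paper first sets $k_x=0$, then splits into the cases $d\ge q-1$ and $d<q-1$, applying Jensen (to replace all $k_y$ by their average $d/(q-1)$) followed by Bernoulli with real exponent $\ge 1$ in the first case, and an explicit integrality argument in the second. Your termwise use of $\beta^{k}\ge 1-(1-\beta)k$ for each nonnegative integer $k_y$ and then summing to $q-(1-\beta)d$ handles both cases at once and avoids Jensen entirely, so it is a slightly more elementary execution of the same idea.
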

\begin{proof}
  Assume $x=1$.  For every $i\in [q]$, let $x_i$ denote the number of
  neighbors of $v$ that are of color $i$. Then
  $p_{v,1}\le \max \frac{\beta^{x_1}}{\sum_{i\in[q]}\beta^{x_i}}$
  subject to the constraints that all $x_i$ are nonnegative integers
  and $\sum_{i=1}^{q}x_i=d$. Since $\beta\le 1$, we can assume
  $x_1=0$, thus
  $p_{v,1}\le \max \frac{1}{1+\sum_{i=2}^{q}\beta^{x_i}}$. We now
  distinguish between two cases:
  \begin{enumerate}
  \item (If $d\ge q-1$) In this case, let $\lambda=1-\beta$, then
    \[
      \frac{1}{1+\sum_{i=2}^q\beta^{x_i}}\le
      \frac{1}{1+(q-1)(1-\lambda)^{\frac{d}{q-1}}}
      \overset{\heartsuit}{\le} \frac{1}{1+(q-1)\tuple{1-\frac{\lambda
            d}{q-1}}} =\frac{1}{q-(1-\beta) d},
    \]
    where $\heartsuit$ is due to the fact that the inequality
    $(1-a)^b\ge 1-ab$ holds when $0\le a\le 1$ and $b\ge 1$.
  \item (If $d< q-1$) In this case, due to the integral constraint of
    $x_i$'s, the term $\sum_{i=2}^q\beta^{x_i}$ minimizes when $d$ of
    $x_i$'s are set to one and remaining $x_i$'s are set to
    zero. Therefore, we have
    \[
      \frac{1}{1+\sum_{i=2}^q\beta^{x_i}}\le
      \frac{1}{1+d\beta+(q-1-d)} =\frac{1}{q-(1-\beta) d},
    \]
  \end{enumerate}
\end{proof}

The recursion \eqref{eq:block-vertex-recursion} holds for arbitrary
set of vertices $B$ (not necessary a permissive block), thus if one
takes $B$ as a single vertex, it implies the following simple lower
bound for marginal probabilities on a vertex.

\begin{lemma}\label{lem:marg-lowerbound}
  For every feasible $x\in[q]$, it holds that
  \[
    \Pr[\Omega]{c(v)=x}\ge \frac{\beta^d}{q},
  \]
  where $d$ is the degree of $v$ in $G$.
\end{lemma}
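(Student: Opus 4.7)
The plan is to deduce this lemma from the block recursion of Lemma~\ref{lem:block-recursion} by specializing the block to the singleton $B = \{v\}$. As the paragraph immediately preceding the lemma statement observes, the identity~\eqref{eq:block-vertex-recursion} is valid for any subset $B \subseteq V \setminus \Lambda$, not only for permissive blocks; a glance at its proof confirms that the derivation is a purely combinatorial telescoping via the intermediate instances $\Omega_i^\pi$ and makes no use of the low-degree property of $\partial B$.

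For $B = \{v\}$ we have $E(B) = \emptyset$, so $w_{G[B]}(\pi) = 1$ for every $\pi \in [q]^B$; the enumeration $\delta B = \{u_i v_i : i \in [m]\}$ consists of exactly the $m = d$ edges incident to $v$ (with $u_i = v$ throughout), so $\pi_i = \pi(v)$; and identifying configurations $\pi \in [q]^{\{v\}}$ with colors $y \in [q]$, the recursion reads
\[
\Pr[\Omega]{c(v) = x}
=
\frac{\prod_{i=1}^{d} \bigl(1 - (1-\beta)\Pr[\Omega_i^x]{c(v_i) = x}\bigr)}
{\sum_{y \in \mathcal{F}(\{v\})} \prod_{i=1}^{d} \bigl(1 - (1-\beta)\Pr[\Omega_i^y]{c(v_i) = y}\bigr)}.
\]
Since each marginal probability lies in $[0,1]$ and $1 - \beta \in [0,1]$, every factor $1 - (1-\beta)p$ lies in the interval $[\beta, 1]$. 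Hence the numerator is bounded below by $\beta^d$, while each of the at most $q$ products in the denominator is bounded above by $1$, giving $\Pr[\Omega]{c(v) = x} \ge \beta^d / q$.

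There is essentially no obstacle: the only subtlety is the edge case $\beta = 0$, where we need $x$ to be feasible for the numerator to be nonzero (otherwise the claimed bound $\beta^d / q = 0$ is trivial for $d \ge 1$). A cleaner-looking alternative would be the direct bijective argument: for each $y \in [q]$, the map that flips $\sigma(v)$ from $y$ to $x$ affects at most $d$ edges, so $w_\Omega(\sigma) \le \beta^{-d} w_\Omega(\sigma^{(x)})$, and summing over $y$ yields $q \sum_{\sigma(v)=x} w_\Omega(\sigma) \ge \beta^d Z(\Omega)$. I would present the recursion-based proof, however, since it matches the flow of the section and immediately reuses machinery already developed.
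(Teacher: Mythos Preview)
Your proposal is correct and follows precisely the approach the paper indicates: the paper does not give a detailed proof but states just before the lemma that the recursion~\eqref{eq:block-vertex-recursion} holds for an arbitrary set $B$ and that taking $B$ to be the singleton $\{v\}$ yields the bound. Your write-up supplies the missing two lines (numerator $\ge \beta^d$, denominator $\le q$), which is exactly what was intended.
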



\section{Strong Spatial Mixing}\label{sec:ssm}

We prove the strong spatial mixing property for Potts model in this
section. Recall that
\[
  \delta(d)=
  \begin{cases}
    \frac{2(1-\beta)}{q-1-(1-\beta)d} &\mbox{if }d\le\frac{q-1}{1-\beta}-2\\
    1 &\mbox{otherwise.}
  \end{cases}
\]
Theorem \ref{main-thm-ssm} is restated in a formal way:

\begin{theorem}\label{thm:ssm}
  Let $q\ge 3$ be an integer and $0\le\beta< 1$. Let $\mathcal{G}$ be
  a family of finite graphs that satisfy the followings:
  \begin{itemize}
  \item the function $\delta(\cdot)$ is a contraction function for
    $\mathcal{G}$;
  \item (proper $q$-coloring) if $\beta =0$, then $\mathcal{G}$ is a
    family of $q$-colorable graphs.
  \end{itemize}
  Then there exist two constants $C_1,C_2>0$ such that the following
  holds: For every graph $G(V,E)\in \mathcal{G}$ with $\abs{V}=n$,
  every vertex $v\in V$, every color $x\in[q]$, every set of vertices
  $\Lambda\subseteq V\setminus\set{v}$ and two feasible instances
  $\Omega_1=(G,\Lambda,\rho)$, $\Omega_2=(G,\Lambda,\pi)$ with
  $\rho,\pi\in [q]^\Lambda$ being two configurations on $\Lambda$, it
  holds that
  \[
    \abs{\Pr[\Omega_1]{c(v)=x}-\Pr[\Omega_2]{c(v)=x}}\le
    n^{C_1}\cdot\exp{-C_2\cdot\ell},
  \]
  where $\ell\defeq\mathrm{dist}(v,\Delta)$ and
  $\Delta\subseteq \Lambda$ is the subset of $\Lambda$ on which $\rho$
  and $\pi$ differ.
\end{theorem}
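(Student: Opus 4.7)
The plan is to prove the theorem by a contraction analysis on log-ratios of \emph{block} marginals, unrolled along a self-avoiding-walk tree whose total weight is controlled by the assumed contraction function via $\err_\delta$. The key structural observation that drives the argument is that the minimal permissive block $B(u)$ satisfies $B(u)\setminus\{u\} \subseteq \{$high-degree vertices$\}$, so interior block traversals will contribute only factors $\delta(\cdot)=1$.

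As a first step I would reduce the vertex difference at $v$ to a block log-ratio at $B(v)$. Writing $f(\pi)=\Pr[\Omega_1]{c(B(v))=\pi}$ and $g(\pi)=\Pr[\Omega_2]{c(B(v))=\pi}$ for $\pi\in\mathcal{F}(B(v))$, the elementary inequality $|a-b|\le\max(a,b)|\log a-\log b|$ combined with $\sum_\pi(f(\pi)+g(\pi))=2$ yields
\[
\abs{\Pr[\Omega_1]{c(v)=x}-\Pr[\Omega_2]{c(v)=x}}\le 2\max_{\pi\in\mathcal{F}(B(v))}\abs{\log f(\pi)-\log g(\pi)}.
\]
Define $R(u,\Omega_1,\Omega_2):=\max_{\pi\in\mathcal{F}(B(u))}|\log\Pr[\Omega_1]{c(B(u))=\pi}-\log\Pr[\Omega_2]{c(B(u))=\pi}|$. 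The goal becomes to prove that $R(v,\Omega_1,\Omega_2)\le n^{C_1'}\exp(-C_2\,\mathrm{dist}(v,\Delta))$.

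Next I would establish the main recursion. For each boundary vertex $v_i\in\partial B(u)\setminus\Lambda$ the permissiveness of $B(u)$ ensures $v_i$ is low-degree, and Lemma~\ref{lem:marg-upperbound} verifies the hypothesis $p_{i,\rho}\le 1/(q-(1-\beta)d_{v_i})$ needed to apply Lemma~\ref{lem:tech}. Combining Lemmas~\ref{lem:block-recursion} and~\ref{lem:tech}, and converting vertex log-ratios back to block log-ratios via the inequality $|\log\sum_\pi f(\pi)-\log\sum_\pi g(\pi)|\le\max_\pi|\log f(\pi)-\log g(\pi)|$ at each recursive call, I obtain
\[
R(u,\Omega_1,\Omega_2)\le\sum_{v_i\in\partial B(u)\setminus\Lambda}\delta(d_{v_i})\,R(v_i,\Omega_1^{(i)},\Omega_2^{(i)})+\Phi_u,
\]
where $\Phi_u$ is a direct contribution from edges $u_i v_i$ with $v_i\in\Lambda$ (zero when $v_i\notin\Delta$; for $v_i\in\Delta$ it is at most $|\log\beta|$ per such edge, read off the Lemma~\ref{lem:block-recursion} formula), and $\Omega_j^{(i)}$ are the subinstances of Lemma~\ref{lem:block-recursion}. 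Unrolling this recursion produces a sum over branches in a ``block-SAW tree'' rooted at $v$, where the product along a branch of block-depth $k$ is $\prod_{j=1}^{k}\delta(d_{v_{i_j}})$.

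To dominate this block-tree sum by $\err_\delta(v,\ell)$, I would identify each block branch with a self-avoiding walk in $G$: the branch from $v$ to $v_{i_1}$ traverses a simple path in $B(v)\cup\{v_{i_1}\}$ whose internal vertices are high-degree (hence $\delta=1$); concatenating with the next branch segment through $B(v_{i_1})\cup\{v_{i_2}\}$ produces a strict self-avoiding walk in $G$ of length at least $k$ and of $\delta$-weight exactly equal to the block-branch weight. Summing over all branches that reach $\Delta$ gives an upper bound $\sum_{\ell\ge\mathrm{dist}(v,\Delta)}\err_\delta(v,\ell)$, which by the contraction-function hypothesis is at most $n^{C}\gamma^{\mathrm{dist}(v,\Delta)}/(1-\gamma)$. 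Multiplying by the polynomial bound on $\Phi_u$-accumulations along the way yields the theorem.

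The main obstacle will be the base case for the $\beta=0$ proper-$q$-coloring setting, where $|\log\beta|=\infty$ and the direct $\Phi_u$ estimate breaks. The $q$-colorability assumption together with Proposition~\ref{prop:local-feasibility} ensures that every subinstance encountered in the recursion is feasible, so that for any feasible $\pi\in\mathcal{F}(B(u))$ the block marginals on both sides of the log are strictly positive; Lemma~\ref{lem:marg-lowerbound} combined with the explicit block-marginal formula then lower-bounds them by $1/n^{O(1)}$, giving an $O(\log n)$ base case that is absorbed into the $n^{C_1}$ factor. A secondary technical point is the careful bijection in the ``splicing'' step that identifies block branches with graph SAWs, which is where the high-degree-interior structure of minimal permissive blocks is essential; without it, an additional factor per block vertex would destroy the decay.
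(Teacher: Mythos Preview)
Your outline for $\beta>0$ is essentially sound and runs on the same skeleton as the paper (Lemmas~\ref{lem:block-recursion}, \ref{lem:tech}, \ref{lem:marg-upperbound} plus the block-to-SAW embedding), but it takes a different route: you compare $\Omega_1$ and $\Omega_2$ directly and let the recursion run until a block first touches $\Delta$, whereas the paper compares each $\Omega_j$ to a \emph{common} truncated estimate $\mathtt{marg}(\Omega_j,v,x,L)$ with $L$ chosen so that the algorithm never sees $\Delta$ (hence the two estimates coincide) and then bounds the truncation error. Your route is pleasant in that it does not need Lemma~\ref{lem:cut-exist}; the paper needs that lemma to guarantee the truncation frontier consists of low-degree vertices so that the base-case error is finite. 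Conversely, the paper's route cleanly separates ``decay'' from ``where $\Omega_1$ and $\Omega_2$ differ'', which is what makes the $\beta=0$ case go through.

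For $\beta=0$ there is a genuine gap in your plan. When a block $B(u)$ encountered in the recursion has a boundary edge $u_iv_i$ with $v_i\in\Delta$, the sets of locally feasible configurations on $B(u)$ in $\Omega_1$ and in $\Omega_2$ are \emph{different}: any $\pi$ with $\pi(u_i)=\rho(v_i)$ is infeasible in $\Omega_1$ but may well be feasible in $\Omega_2$. For such $\pi$ one of the two block marginals is exactly $0$ and the other is positive, so $R(u)=+\infty$; there is no finite $\Phi_u$ to serve as a base case, and the vertex-to-block inequality $|\log\sum f-\log\sum g|\le\max_\pi|\log f(\pi)-\log g(\pi)|$ you invoke also breaks, since the two sums now range over different index sets. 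Your proposed fix does not touch this: Proposition~\ref{prop:local-feasibility} ensures each subinstance is globally feasible, not that the \emph{same} $\pi$ is feasible in both; and Lemma~\ref{lem:marg-lowerbound} gives $\beta^d/q=0$ here (even the crude bound is $q^{-n}$, not $1/n^{O(1)}$, so the base case would be $O(n)$, not $O(\log n)$---but that is moot once one side is exactly $0$). The paper's device is precisely engineered around this obstruction: by comparing both $\Pr_{\Omega_j}$ to $\mathtt{marg}$ truncated strictly before $\Delta$ becomes visible, the local-feasibility checks the algorithm performs are identical in $\Omega_1$ and $\Omega_2$, so one never compares a positive quantity to zero, and the base case becomes the finite $n\log q$ of Definition~\ref{def:S-error-color}.
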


We prove the theorem by using the recursion introduced in Section
\ref{sec:recursion} to estimate marginal probability on a vertex
$v$. In each step, we show that the difference between the (logarithm
of) marginal probabilities caused by different configurations on
$\Delta$ contracts by a factor of $\delta(\cdot)$, and therefore
relate the difference of marginal probabilities to the contraction
function.

The following observation is useful: If $\delta(\cdot)$ is a
contraction function for $\mathcal{G}$, then for every graph
$G\in\mathcal{G}$, every sufficiently long path in $G$ must contain a
low-degree vertex. The property is formally stated as:

\begin{lemma}\label{lem:cut-exist}
  Let $\mathcal{G}$ be a family of finite graphs for which
  $\delta(\cdot)$ is a contraction function.  Then for some constants
  $\theta>1$ and $C>0$, for every $G=(V,E)\in\mathcal{G}$ with
  $\abs{V}=n$, every $v\in V$ and every $L\ge C\log n$, there exists a
  low-degree $S$ in $T=\Tsaw{G}{v}$ such that for every $u\in S$,
  $L< \dist[T]{u,v}\le \theta L$ and every self-avoiding walk in $T$
  from $v$ of length $\theta L$ intersects $S$.
\end{lemma}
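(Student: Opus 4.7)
The plan is to exploit a simple but crucial feature of the definition~\eqref{eq:delta-function} of $\delta(\cdot)$: for every high-degree vertex $w$ one has $\delta(\deg{w}) = 1$ (a direct check confirms this even at the threshold $\deg{w} = \tfrac{q-1}{1-\beta} - 2$, where both branches of~\eqref{eq:delta-function} evaluate to $1$). Consequently every self-avoiding walk whose vertices are all high-degree contributes exactly $1$ to $\err_\delta$, so the analytic contraction bound immediately becomes a combinatorial bound on the number of such walks.

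The first step is an auxiliary claim. Let $C_0 > 0$ and $\gamma < 1$ be the constants witnessing that $\delta$ is a contraction function for $\mathcal{G}$. Then for every $G = (V,E) \in \mathcal{G}$ with $\abs{V} = n$, every $u \in V$, and every integer $\ell > C_0 \log n / \log(1/\gamma)$, there is no walk $(u, w_1, \ldots, w_\ell) \in \SAW(u,\ell)$ with $w_1, \ldots, w_\ell$ all high-degree: each such walk contributes $\prod_{i=1}^{\ell} \delta(\deg{w_i}) = 1$ to $\err_\delta(u,\ell)$, while contraction gives $\err_\delta(u,\ell) \le n^{C_0} \gamma^\ell < 1$, so the count of such walks is strictly less than one, hence zero.

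Next I exhibit the cut. Take $\theta = 2$ and any constant $C > C_0 / \log(1/\gamma)$, so that $L \ge C \log n$ implies $L > C_0 \log n / \log(1/\gamma)$. Define
\[
  S \defeq \set{u \in T : u \text{ is identified with a low-degree vertex of } G \text{ and } L < \dist[T]{u,v} \le \theta L}.
\]
The depth condition in the lemma is immediate from this definition, so it only remains to check that every root-to-depth-$\theta L$ path in $T$ meets $S$. Suppose not: some SAW $(v,v_1,\ldots,v_{\theta L})$ in $T$ (equivalently, a SAW of length $\theta L$ in $G$ starting at $v$) avoids $S$, forcing $v_{L+1},\ldots,v_{\theta L}$ to all be high-degree in $G$. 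Then the suffix $(v_L, v_{L+1}, \ldots, v_{\theta L})$ lies in $\SAW(v_L, L)$ and its $L$ non-initial vertices are all high-degree, directly contradicting the auxiliary claim with $u = v_L$ and $\ell = L$.

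I do not anticipate a real obstacle; the conceptual crux is just the observation that $\delta = 1$ on the high-degree regime converts the analytic bound $\err_\delta(u,\ell) < n^{C_0}\gamma^\ell$ into a bound on the \emph{number} of all-high-degree SAWs out of $u$, after which the rest is routine bookkeeping of depths and of the constants $\theta$, $C$.
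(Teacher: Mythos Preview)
Your proof is correct and follows essentially the same contradiction strategy as the paper: a long stretch of high-degree vertices in a self-avoiding walk has $\delta$-product equal to~$1$, which is incompatible with the contraction bound. The one notable difference is that you shift the basepoint to $v_L$ and apply the contraction bound to the suffix, so you only ever need $\delta=1$ on high-degree vertices and can take $\theta=2$ universally; the paper instead bounds the full walk from $v$, using the lower bound $\delta(d)\ge\frac{2(1-\beta)}{q-1}$ on the first $L$ steps, and consequently must choose $\theta=\max\{\lceil\log_{1/\alpha}\tfrac{q-1}{2(1-\beta)}\rceil,2\}$ depending on the model parameters. Your variant is a mild simplification of the same idea.
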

\begin{proof}
  Let $G(V,E)\in\mathcal{G}$ be a graph. It follows from the
  definition of contraction function that for some constant $C>0$, for
  every $\ell\ge C\log n$, $\err_\delta(v,\ell)<\alpha^\ell$ for some
  constant $0<\alpha<1$.
  
  It is sufficient to show that, for some constant integer $\theta>0$
  it holds that for every $v\in V$, every $L\ge C\log n$, every
  $P=(v,v_1,\dots,v_{\theta L})\in\SAW(v,\theta L)$, there exists a
  low-degree vertex $v_j$ among
  $\set{v_{L+1},v_{\theta L},\dots,v_{\theta L}}$.
  
  Let
  $\theta=\max\set{\lceil
    \log_{1/\alpha}\tuple{\frac{q-1}{2(1-\beta)}}\rceil,2}$.
  Assume for the contradiction that every vertex in\\
  $\set{v_{L+1},v_{L+2},\dots,v_{\theta L}}$ has high-degree. Since
  $\theta L> L\ge C\log n$, we have
  $\prod_{i=1}^{\theta L}\delta(\deg{v_i})\le \alpha^{\theta L}$.

  On the other hand, since
  $\delta(d)\ge\delta(0)=\frac{2(1-\beta)}{q-1}$, we have
  $\prod_{i=1}^{\theta L}\delta(\deg{v_i})\ge
  \tuple{\frac{2(1-\beta)}{q-1}}^{L}$.
  This is a contradiction for our choice of $\theta$.
\end{proof}

\subsection{The $\beta>0$ case}

To implement the recursion introduced in Section \ref{sec:recursion},
we define two procedures $\mathtt{marg}(\Omega,v,x,\ell)$ and
$\mathtt{marg\mbox{-}block}(\Omega,B(v),\pi,\ell)$ calling each other
to estimate vertex and block marginal respectively.  We assume
$\Omega=(G,\Lambda,\sigma)$ where $G=(V,E)$ is a feasible instance of
Potts model, $v\in V\setminus \Lambda$ is a vertex, $x\in[q]$ is a
color and $\ell$ is an integer. Recall that for a permissive block
$B(v)$, we use $\mathcal{F}(B)$ to denote the set of feasible
configurations over $B(v)$.

\begin{algorithm}[H]\label{algo:marg-prob-ssm}
  \caption{$\mathtt{marg}(\Omega,v,x,\ell)$}
  \label{algo:marg-prob-ssm:fixed}
  If $v$ is fixed to be color $y$, then return $1$ if $x=y$ and return
  $0$ if $x\ne y$\;%
  If $\ell<0$ return $1/q$\;%
  \label{algo:marg-prob-ssm:ell}
  Compute $B(v)$\;%
  For every $\rho\in \mathcal{F}(B(v))$, let
  $\hat p_\rho\gets
  \mathtt{marg\mbox{-}block}(\Omega,B(v),\rho,\ell)$\;
  Return
  $\min\set{\sum_{\substack{\pi\in
        \mathcal{F}(B(v))\\s.t.~\pi(v)=x}}\hat{p}_{\pi},\frac{1}{\max\set{1,q-(1-\beta)\deg[G]{v}}}}$
  \label{algo:marg-prob-ssm:return}
\end{algorithm}

To describe the algorithm for estimating the block marginals, we need
to introduce some notations. Let $B=B(v)$, and we enumerate the
boundary edges in $\delta B$ by $e_i=u_iv_i$ for $i=1,2,\ldots, m$,
where $v_i\not\in B$. With this notation more than one $u_i$ or $v_i$
may refer to the same vertex, which is fine. For every $i\in[m]$ and
$\rho\in \mathcal{F}(B)$, define $\Omega_B$ and $\Omega_i^\rho$ as in
Lemma \ref{lem:block-recursion}.

Let $P_i=(v, w_1, w_2,\ldots, w_k, v_i)$ be a self-avoiding walk from
$v$ to $v_i$ such that all intermediate vertices $w_i$ are in $B(v)$.
Since $B(v)$ is a minimal permissive block, such walk always exists,
and let $P_i$ be an arbitrary one of them if there are multiple ones.



\begin{algorithm}[H]\label{algo:marg-prob-block-ssm}
  \caption{$\mathtt{marg\mbox{-}block}(\Omega,B(v),\pi,\ell)$}
  Compute $P_i$ for every $i\in[m]$\;
  $\hat
  p_{i,\rho}\gets\mathtt{marg}(\Omega_i^\rho,v_i,\rho_i,\ell-\abs{P_i})$
  for every $i\in[m]$ and $\rho\in \mathcal{F}(B)$\;%
  Return
  $ \frac{w_{G[B]}(\pi) \prod_{i\in[m]}\tuple{1-(1-\beta)\hat
      p_{i,\pi}}}
  {\sum_{\rho\in\mathcal{F}(B)}w_{G[B]}(\rho)\prod_{i\in[m]}\tuple{1-(1-\beta)\hat
      p_{i,\rho}}}$\;
\end{algorithm}

We need a few definitions to analyze the two procedures.

\begin{definition}\label{def:computation-tree}
  Given an instance $\Omega=(G,\Lambda,\sigma)$ of Potts model where
  $G=(V,E)$, a vertex $v\in V\setminus \Lambda$, a color $x\in[q]$ and
  an integer $\ell$. The computation tree of $\marg(\Omega,v,x,\ell)$,
  denoted by $\mathcal{CT}(\Omega,v,x,\ell)$, is a rooted tree
  recursively defined as follows:
  \begin{itemize}
  \item The root of $\mathcal{CT}(\Omega,v,x,\ell)$ is labeled
    $(\Omega,v,x,\ell)$;
  \item For every recursive call to $(\Omega',v',x',\ell')$ by
    $\marg(\Omega,v,x,\ell)$ (in the subroutine
    $\mathtt{marg\mbox{-}block}$), $(\Omega,v,x,\ell)$ has a children
    which is the computation tree of $(\Omega',v',x',\ell')$.
  \end{itemize}
  Define the \emph{termination set} of $\marg(\Omega,v,x,\ell)$ as the
  set of vertices $u$ in the self-avoiding walk tree
  $\Tsaw{G[V\setminus\Lambda]}{v}$ that $\marg(\Omega',u,x',\ell')$
  returns at step \ref{algo:marg-prob-ssm:ell} for some leaf
  $(\Omega',u,x',\ell')$ of $\mathcal{CT}(\Omega,v,x,\ell)$.
\end{definition}
Thus the computation of $\marg(\Omega,v,x,\ell)$ ends either at
trivial instance (including vertex with fixed color and one-vertex
graph), or at vertices in termination set.

\begin{definition}\label{def:S-error}
  Given an instance $\Omega=(G,\Lambda,\sigma)$ of Potts model with
  $q\ge 3$ and activity $0<\beta<1$ where $G=(V,E)$ with $\abs{V}=n$, a vertex
  $v\in V\setminus \Lambda$.  Let $T=\Tsaw{G[V\setminus\Lambda]}{v}$
  be the self-avoiding walk tree rooted at $v$ in
  $G[V\setminus\Lambda]$ and $S$ be a set of low-degree vertices in
  $T$. Assume $v$ has $m$ children $v_1,v_2,\dots,v_m$ in $T$, let
  $T_i$ denote the subtree of $T$ rooted at $T_i$. We recursively
  define the error function:
  \[
    \err_{T,S}\defeq
    \begin{cases}
      \sum_{i=1}^m\delta(\deg[G]{v_i})\cdot\err_{T_i,S}& \mbox{if }v\not\in S,\\
      q+n\log\frac{1}{\beta}.  &\mbox{otherwise.}
    \end{cases}
  \]
\end{definition}

\begin{definition}\label{def:log-error}
  Given an instance $\Omega=(G,\Lambda,\sigma)$ of Potts model where
  $G=(V,E)$, a vertex $v\in V\setminus \Lambda$, a color $x\in[q]$, an
  assignment $\pi\in\mathcal{F}(B(v))$ and an integer $\ell$. We
  denote $p_{\Omega,v,\ell}(x)=\mathtt{marg}(\Omega,v,x,\ell)$ and
  $p_{\Omega,v,B(v),\ell}(\pi)=\mathtt{marg\mbox{-}block}(\Omega,B(v),\pi,\ell)$.
  Define
  \begin{align*}
    \mathcal{E}_{\Omega,\ell}(v)
    &\defeq \max_{x\in[q]}
      \abs{\log\tuple{p_{\Omega,v,\ell}(x)}-\log\tuple{\Pr[\Omega]{c(v)=x}}};\\
    \mathcal{E}_{\Omega,\ell}(B(v))
    &\defeq \max_{\pi\in\mathcal{F}(B(v))}\abs{\log\tuple{p_{\Omega,B(v),\ell}(\pi)}-\log\tuple{\Pr[\Omega]{c(B(v))=\pi}}}.
  \end{align*}
  We use the convention that $\log 0-\log 0=0$.
\end{definition}

The following key lemma relates the error functions we introduced
above.

\begin{lemma}\label{lem:err-uppbound}
  Let $\hat\Omega=(\hat G,\hat \Lambda,\hat\sigma)$ be an instance of
  Potts model with $q\ge 3$ and activity $0<\beta<1$ where $\hat G=(\hat V,\hat E)$.
  Assume $\abs{\hat V}=n$. Let $\hat v\in \hat V\setminus\hat \Lambda$
  be a vertex and $x\in[q]$ be a color. Let $L>0$ be an integer and
  $\hat S$ be the termination set of $\marg(\hat\Omega, \hat v,x,L)$.
  Denote $\hat T=\Tsaw{\hat G[\hat V\setminus\hat\Lambda]}{\hat v}$ as
  the self-avoiding walk tree rooted at $\hat v$ in
  $\hat G[\hat V\setminus\hat\Lambda]$. Then
  $\mathcal{E}_{\hat \Omega,L}(v)
  \le \mathcal{E}_{\hat T,\hat S}.$
\end{lemma}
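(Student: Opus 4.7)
\emph{Plan.} I would prove the bound by induction on $L$, with base case $\hat v\in\hat S$ (i.e., the algorithm has bottomed out on the $\ell<0$ return) and inductive step $\hat v\notin\hat S$. The argument combines the block recursion (Lemma~\ref{lem:block-recursion}), the log-Lipschitz estimate (Lemma~\ref{lem:tech}), and the structural fact that every vertex of $B(\hat v)\setminus\{\hat v\}$ is high-degree, so $\delta=1$ across the interior of a block.

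\emph{Base case.} If $\hat v\in\hat S$, then $\marg(\hat\Omega,\hat v,x,L)$ returned $1/q$ at step~\ref{algo:marg-prob-ssm:ell}. Lemma~\ref{lem:marg-lowerbound} gives $\Pr[\hat\Omega]{c(\hat v)=x}\ge \beta^{\deg[\hat G]{\hat v}}/q$, and the trivial upper bound is $1$, so
\[
\mathcal{E}_{\hat\Omega,L}(\hat v)\le \log q+\deg[\hat G]{\hat v}\log(1/\beta)\le q+n\log(1/\beta),
\]
matching the definition of $\err_{\hat T,\hat S}$ at a root in the termination set.

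\emph{Inductive step.} Assume $\hat v\notin\hat S$. The algorithm returns $\min\bigl\{\sum_{\pi:\pi(\hat v)=x}\hat p_\pi,\; 1/\max\{1,q-(1-\beta)\deg[\hat G]{\hat v}\}\bigr\}$. By Lemma~\ref{lem:marg-upperbound} the second argument is an upper bound on $\Pr[\hat\Omega]{c(\hat v)=x}$, so taking the $\min$ can only shrink the log-gap to the truth, and a standard bound on the log-difference of sums of positives gives $\mathcal{E}_{\hat\Omega,L}(\hat v)\le \mathcal{E}_{\hat\Omega,L}(B(\hat v))$. Since both block marginals are the same rational function $f$ evaluated at different inputs (Lemma~\ref{lem:block-recursion}), Lemma~\ref{lem:tech} yields
\[
\mathcal{E}_{\hat\Omega,L}(B(\hat v))\le \sum_{i=1}^{m}\frac{2(1-\beta)}{q-(1-\beta)d_i-1}\cdot \max_{\rho\in\mathcal{F}(B(\hat v))}\mathcal{E}_{\hat\Omega_i^\rho,L-|P_i|}(v_i),
\]
where $d_i=\deg[G_i^\rho]{v_i}\le \deg[\hat G]{v_i}$. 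Each $v_i\in\partial B(\hat v)$ is low-degree by the permissive property, and $\delta$ is monotone on its first branch, so the coefficient is at most $\delta(\deg[\hat G]{v_i})$. The induction hypothesis applied to each sub-call with the strictly smaller parameter $L-|P_i|$ gives $\mathcal{E}_{\hat\Omega_i^\rho,L-|P_i|}(v_i)\le \err_{T_i^\rho,S_i^\rho}$, where $T_i^\rho=\Tsaw{G_i^\rho[V_i^\rho\setminus\Lambda_i^\rho]}{v_i}$ and $S_i^\rho$ is the termination set of that sub-call.

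\emph{Reconciliation with $\err_{\hat T,\hat S}$, and the main obstacle.} It remains to show $\sum_i\delta(\deg[\hat G]{v_i})\,\err_{T_i^\rho,S_i^\rho}\le \err_{\hat T,\hat S}$. The key combinatorial fact is that $B(\hat v)$, being the minimal permissive block, is grown from $\{\hat v\}$ by iteratively absorbing high-degree boundary vertices, so every $w\in B(\hat v)\setminus\{\hat v\}$ is high-degree and satisfies $\delta(\deg[\hat G]{w})=1$. Unfolding the recursive definition of $\err_{\hat T,\hat S}$ along the $P_i$-branch of $\hat T$, the intermediate $\delta$-factors collapse to $1$, leaving the summand $\delta(\deg[\hat G]{v_i})\cdot \err_{\hat T_{v_i^\ast},\hat S}$, where $v_i^\ast$ is the descendant of $\hat v$ in $\hat T$ reached along $P_i$. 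Since the modifications producing $G_i^\rho$ only delete edges (those in $E(B(\hat v))$ and the boundary edges $u_j v_j$ for $j\ge i$), remove interior block vertices, and fix $u_j$ for $j<i$, every SAW from $v_i$ in $G_i^\rho$ concatenates with $P_i$ to a genuine SAW from $\hat v$ in $\hat G$, giving an embedding $T_i^\rho\hookrightarrow \hat T_{v_i^\ast}$ under which $S_i^\rho\subseteq \hat S$; monotonicity of $\err_{T,S}$ in both arguments then closes the induction. The delicate point is precisely this embedding: one must verify that no SAW from $v_i$ in $G_i^\rho$ can revisit a vertex of $P_i$, which is where the removal of $E(B(\hat v))$ and the fixing of the $u_j$'s is essential. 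Everything else is algebra driven by Lemma~\ref{lem:tech} together with the high/low-degree bookkeeping packaged into $\delta$.
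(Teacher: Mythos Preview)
Your proposal is correct and follows essentially the same approach as the paper: both argue by induction over the recursive structure of $\marg$, combine Lemma~\ref{lem:block-recursion} with Lemma~\ref{lem:tech} to pass from $\mathcal{E}_{\Omega,\ell}(v)$ to $\mathcal{E}_{\Omega,\ell}(B(v))$ to a weighted sum of sub-call errors, and then push this sum under $\err_{\hat T,\hat S}$ using that each $P_i$ is a self-avoiding walk through $B(\hat v)$. The only cosmetic differences are that the paper inducts on the depth of the computation tree (and thus separately records the trivial base cases ``$v$ fixed'' and ``no boundary edges'', both giving error $0$) rather than on $L$, and that the paper leaves the final reconciliation as a one-line remark whereas you spell out explicitly why the interior vertices of $B(\hat v)$ carry $\delta=1$ and why the sub-SAW-trees embed into $\hat T_{v_i^*}$.
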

\begin{proof}
  Let $\mathcal{CT}=\mathcal{CT}(\hat\Omega,\hat v,x,L)$. For every
  vertex $(\Omega,v,z,\ell)$ in $\mathcal{CT}$ where
  $\Omega=(G=(V,E),\Lambda,\sigma)$, we apply induction on the depth
  of $\mathcal{CT}(\Omega,v,z,\ell)$ to show
  \[
    \mathcal{E}_{\Omega,\ell}(v)\le
    \mathcal{E}_{T,\hat S\cap V(T)},
  \]
  where $T=\Tsaw{G[V\setminus\Lambda]}{v}$ and $V(T)$ is the set of
  vertices in $T$.
  

  The base case is that $\marg(\Omega,v,z,\ell)$ is itself a leaf,
  namely it returns without any further recursive call to
  $\mathtt{marg}$.  Then
  \begin{itemize}
  \item if it is returned at step \ref{algo:marg-prob-ssm:fixed},
    $\err_{\Omega,\ell}(v)=0$;
  \item if it is returned at step \ref{algo:marg-prob-ssm:return},
    $\err_{\Omega,\ell}(v)=0$;
  \item if it is returned at step \ref{algo:marg-prob-ssm:ell}, due to
    Lemma \ref{lem:marg-lowerbound},
    $\err_{\Omega,\ell}(v)\le q+\deg[G]{v}\log\frac{1}{\beta}$.
  \end{itemize}


  Assume the lemma holds for smaller depth and
  $\mathtt{marg}(\Omega,v,z,\ell)$ is not a leaf. In
  Algorithm~\ref{algo:marg-prob-ssm}, the estimation of marginal is
  computed as:
  \[
    p_{\Omega,v,\ell}(z)= \min\set{\sum_{\substack{\pi\in
          \mathcal{F}(B(v))\\s.t.~\pi(v)=z}}p_{\Omega,B(v),\ell}(\pi),\frac{1}{\max\set{1,q-(1-\beta)\deg[G]{v}}}}
  \]

  By Lemma~\ref{lem:marg-upperbound}, it always holds that
  $\Pr[\Omega]{c(v)=z}\le\frac{1}{\max\set{1,q-(1-\beta)\deg[G]{v}}}$.
  Thus assuming
  $p_{\Omega,v,\ell}(z)=\sum_{\substack{\pi\in
      \mathcal{F}(B(v))\\s.t.~\pi(v)=z}} p_{\Omega,B(v),\ell}(\pi) $
  will not make the error $\err_{T,S}(v)$ smaller, and hence we have
  \begin{align*}
    \err_{\Omega,\ell}(v)
    &=
      \max_{x\in [q]}\abs{\log\tuple{\Pr[\Omega]{c(v)=x}}
      -\log p_{\Omega,v,\ell}(x)}\\
    &=
      \max_{x\in [q]}\abs{\log\tuple{\sum_{\substack{\pi\in \mathcal{F}(B(v))\\s.t.~\pi(v)=x}}\Pr[\Omega]{c(B(v))=\pi}} -\log\tuple{\sum_{\substack{\pi\in \mathcal{F}(B(v))\\s.t.~\pi(v)=x}}p_{\Omega,B(v),\ell}(\pi)}}\\
    &\le \max_{\pi\in \mathcal{F}(B(v))}
      \abs{\log\tuple{\Pr[\Omega]{\sigma(B(v))=\pi}}- \log\tuple{p_{\Omega,B(v),\ell}(\pi)}} \\
    &= \err_{\Omega,\ell}(B(v)).
  \end{align*}
  where the last inequality is due to that for every positive
  $a_1,\dots,a_n$, $b_1,\dots,b_n$,
  $\frac{\sum_{i\in[n]}a_i}{\sum_{i\in[n]}b_i}\le\max_{i\in[n]}\frac{a_i}{b_i}$.

  Since $\mathtt{marg}(\Omega,v,z,\ell)$ is not a leaf, the value of
  $p_{\Omega,v,\ell}(z)$ is returned at step
  \ref{algo:marg-prob-ssm:return} of
  Algorithm~\ref{algo:marg-prob-ssm}, and it is computed from the
  recursion in Algorithm~\ref{algo:marg-prob-block-ssm}.  Recall
  $\delta B(v)=\set{u_iv_i\mid i\in[m]}$. For every $i\in[m]$, we let
  $\ell_i=\ell-\abs{P_i}$. We claim that Lemma \ref{lem:tech} implies
  \begin{align*}
    \mathcal{E}_{\Omega,\ell}(B(v)) 
    &\le
      \sum_{i\in[m]}\frac{2(1-\beta)}{q-(1-\beta)\deg[G]{v_i}-1}\cdot\max_{\rho\in
      \mathcal{F}(B(v))}\abs{\log\tuple{\Pr[\Omega_i^\rho]{\sigma(v_i)=\rho_i}}-\log\tuple{p_{\Omega^\rho_i,v_i,\ell_i}(\rho_i)}}\\
    &=
      \sum_{i\in[m]}\frac{2(1-\beta)}{q-\deg[G]{v_i}-1}\cdot\max_{\rho\in
      \mathcal{F}(B(v))}\mathcal{E}_{\Omega^\rho_i,\ell_i}(v_i),
  \end{align*}
  where $\Omega^\rho_i$ is obtained from $\Omega$ as in Lemma
  \ref{lem:tech}.

  To see this, note all $v_i$ is on the boundary of a permissive
  block, thus either $\err_{\Omega_i^\rho,\ell-\abs{P_i}}(v_i)=0$ for
  every $\rho$ (in case that the color of $v_i$ is fixed), or by Lemma
  \ref{lem:marg-upperbound},
  \[
    \Pr[\Omega_i^\rho]{c(v_i)=\rho_i}\le\frac{1}{q-(1-\beta)\mathrm{deg}_G(v_i)}.
  \]
  Also from step \ref{algo:marg-prob-ssm:return} of Algorithm
  \ref{algo:marg-prob-ssm}, we have
  \[
    p_{\Omega^\rho_i,v_i,\ell_i}(\rho_i)\le\frac{1}{q-(1-\beta)\mathrm{deg}_G(v_i)}.
  \]
  
  With this upper bound for $\err_{\Omega,\ell}(v)$, and note that
  every $P_i$ is a self-avoiding walk from $v$ to $v_i$ with every
  vertex in $B(v)$, we can then apply the induction hypothesis to
  complete the proof.
\end{proof}

We are now ready to prove the main theorem of this section.

\begin{proof}[Proof of Theorem \ref{thm:ssm} when $\beta>0$]
  By the definition of the strong spatial mixing, it is sufficient to
  prove the theorem for $\ell=\Omega(\log n)$. Let $C$ and $\theta$ be
  the constants in Lemma \ref{lem:cut-exist} and assume
  $\ell>\theta\lceil C\log n\rceil$ be an integer. Let
  $L=\lfloor \ell/\theta\rfloor\ge C\log n$.  Consider
  $p_{\Omega_1,v,L}(x)=\marg(\Omega_1,v,x,L)$. Let $S$ denote the
  termination set of $\marg(\Omega_1,v,x,L)$. By our choice of $L$,
  Lemma \ref{lem:cut-exist} implies that the set $S$ satisfies
  \begin{enumerate}
  \item every vertex in $S$ is of distance $(L,2L]$ to $v$ in
    $T=\Tsaw{G[V\setminus \Lambda]}{v}$, i.e.,
    $L<\dist[T]{v,u}\le \theta L$ for every $u\in S$;
  \item every path from $v$ to $\Delta$ in $T$ intersects $S$.
  \end{enumerate}
  It follows from Lemma \ref{lem:err-uppbound} that
  \[
    \abs{\log\tuple{p_{\Omega_1,v,L}(x)}-\log\tuple{\Pr[\Omega_1]{c(v)=x}}}\le
    \err_{\Omega_1,L}(v)\le \err_{T,S},
  \]
  and similarly
  \[
    \abs{\log\tuple{p_{\Omega_2,v,L}(x)}-\log\tuple{\Pr[\Omega_2]{c(v)=x}}}\le
    \err_{\Omega_2,L}(v)\le \err_{T,S}.
  \]
  On the otherhand, we have
  \[
    \err_{T,S}\le \tuple{q+\deg[G]{v}\log\frac{1}{\beta}}\cdot
    \sum_{k=L+1}^{\theta L}\err_{\delta}(v,k)\le
    n^{C_1'}\cdot\exp{-C_2'\cdot\ell}
  \]
  for some constants $C_1',C_2'>0$.
  
  Note that by the second property of $S$,
  $p_{\Omega_1,v,x,L}=p_{\Omega_2,v,x,L}$, thus
  \begin{align*}
    \abs{\Pr[\Omega_1]{c(v)=x}-\Pr[\Omega_2]{c(v)=x}}
    &\le\abs{\Pr[\Omega_1]{c(v)=x}-p_{\Omega_1,v,L}(x)}+\abs{\Pr[\Omega_2]{c(v)=x}-p_{\Omega_1,v,L}(x)}\\
    &\le n^{C_1}\cdot\exp{-C_2\cdot\ell}
  \end{align*}
  for some constants $C_1,C_2>0$.
\end{proof}



\subsection{The $\beta=0$ case (Coloring model)}

Since the lower bound for marginal probability in Lemma
\ref{lem:marg-lowerbound} is zero for $\beta=0$, the quantity
$\err_{T,S}$ defined in Definition \ref{def:S-error} is no longer
bounded above. We slightly modify the procedure $\mathtt{marg}$ to
deal with this case.

Let $\Omega=(G,\Lambda,\sigma)$ be an instance of Potts model with
$q\ge 3$ and activity $\beta=0$ where $G=(V,E)$, $v\in V\setminus\Lambda$ be a
vertex, $x\in[q]$ be a color and $\ell$ be an integer. We define

\begin{algorithm}[H]\label{algo:marg-prob-ssm-color}
  \caption{$\mathtt{marg}(\Omega,v,x,\ell)$}
  If $v$ is fixed to be color $y$, then return $1$ if $x=y$ and return
  $0$ if $x\ne y$\;%
  \label{algo:marg-prob-ssm-color:fixed}
  Compute $B(v)$\;%
  If $\ell<0$, then return $1/q$ if there is a feasible
  $\pi\in \mathcal{F}(B(v))$ such that $\pi(v)=x$ and return $0$ if no
  such $\pi$ exists\;
  \label{algo:marg-prob-ssm-color:ell}
  For every $\rho\in \mathcal{F}(B(v))$, let
  $\hat p_\rho\gets
  \mathtt{marg\mbox{-}block}(\Omega,B(v),\rho,\ell)$\;
  Return
  $\min\set{\sum_{\substack{\pi\in
        \mathcal{F}(B(v))\\s.t.~\pi(v)=x}}\hat{p}_{\pi},\frac{1}{\max\set{1,q-(1-\beta)\deg[G]{v}}}}$
  \label{algo:marg-prob-ssm-color:return}
\end{algorithm}

The only difference of this version of $\marg$ is at
step \label{algo:marg-prob-ssm-color:ell}, where we check whether the
color $x$ is locally feasible. We return $1/q$ if so and return $0$
otherwise.

Let $T=\Tsaw{G[V\setminus\Lambda]}{v}$ be the self-avoiding walk tree
rooted at $v$ in $G[V\setminus\Lambda]$. With our new version of
$\marg$, define the computation tree
$\mathcal{CT}=\mathcal{CT}(\Omega,v,x,\ell)$ the same as in Definition
\ref{def:computation-tree}, while the termination set of
$\marg(\Omega,v,x,\ell)$ is defined as the set of vertices $u$ in $T$
that $\marg(\Omega',u,x',\ell')$ returns at step
\ref{algo:marg-prob-ssm-color:ell} in Algorithm
\ref{algo:marg-prob-ssm-color} for some leaf $(\Omega',u,x',\ell')$ of
$\mathcal{CT}$.

We can similarly define error functions as the $\beta>0$ case, with
difference on the base case.

\begin{definition}\label{def:S-error-color}
  Given an instance $\Omega=(G,\Lambda,\sigma)$ of Potts model with
  $q\ge 3$ and activity $\beta=0$ where $G=(V,E)$ with $\abs{V}=n$, a vertex
  $v\in V\setminus \Lambda$.  Let $T=\Tsaw{G[V\setminus\Lambda]}{v}$
  be the self-avoiding walk tree rooted at $v$ in
  $G[V\setminus\Lambda]$ and $S$ be a set of low-degree vertices in
  $T$. Assume $v$ has $m$ children $v_1,v_2,\dots,v_m$ in $T$, let
  $T_i$ denote the subtree of $T$ rooted at $T_i$. We recursively
  define the error function:
  \[
    \err_{T,S}\defeq
    \begin{cases}
      \sum_{i=1}^m\delta(\deg[G]{v_i})\cdot\err_{T_i,S}& \mbox{if }v\not\in S,\\
      n\log q.  &\mbox{otherwise.}
    \end{cases}
  \]
\end{definition}

\begin{lemma}\label{lem:err-uppbound-color}
  Let $\hat\Omega=(\hat G,\hat \Lambda,\hat\sigma)$ be an instance of
  Potts model with $q\ge 3$ and activity $\beta=0$ where $\hat G=(\hat V,\hat E)$.
  Assume $\abs{\hat V}=n$. Let $\hat v\in \hat V\setminus\hat \Lambda$
  be a vertex and $x\in[q]$ be a color. Let $L>0$ be an integer and
  $\hat S$ be the termination set of $\marg(\hat\Omega, \hat v,x,L)$.
  Denote $\hat T=\Tsaw{\hat G[\hat V\setminus\hat\Lambda]}{\hat v}$ as
  the self-avoiding walk tree rooted hat $\hat v$ in
  $G[\hat V\setminus\hat\Lambda]$. Then
  $\mathcal{E}_{\hat \Omega,L}(v)
  \le \mathcal{E}_{\hat T,\hat S}.$
\end{lemma}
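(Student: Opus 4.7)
The plan is to mirror the inductive argument of Lemma~\ref{lem:err-uppbound}, with two modifications tailored to the coloring case: a new treatment of the ``$\ell<0$'' base case (whose error bound $q+d\log(1/\beta)$ from the $\beta>0$ proof blows up at $\beta=0$), and a feasibility bookkeeping that lets Lemma~\ref{lem:tech} still be invoked in the inductive step even though marginal probabilities may now vanish. I will induct on the depth of the computation tree $\mathcal{CT}(\hat\Omega,\hat v,x,L)$, showing for every node $(\Omega,v,z,\ell)$ with $T=\Tsaw{G[V\setminus\Lambda]}{v}$ the bound $\mathcal{E}_{\Omega,\ell}(v)\le \mathcal{E}_{T,\hat S\cap V(T)}$ in the sense of Definition~\ref{def:S-error-color}.

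For the base case, $\marg(\Omega,v,z,\ell)$ is a leaf of $\mathcal{CT}$. If it returned because $v$ was fixed, the approximation and the true marginal both equal $\mathbf{1}(z=y)$ and the error is $0$. If it returned at the ``$\ell<0$'' step, I split on local feasibility: when $z$ is locally infeasible, both $p_{\Omega,v,\ell}(z)=0$ and, by Proposition~\ref{prop:local-feasibility}, $\Pr[\Omega]{c(v)=z}=0$, giving error $0$ by the $\log 0-\log 0=0$ convention; when $z$ is locally feasible, Proposition~\ref{prop:local-feasibility} produces a globally feasible extension, so the marginal is a ratio of positive integers each bounded by $q^n$, whence $\Pr[\Omega]{c(v)=z}\ge q^{-n}$. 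Since the algorithm outputs $1/q$ in this case, the log-error is at most $n\log q$, exactly matching Definition~\ref{def:S-error-color}.

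For the inductive step, I replay the calculation of Lemma~\ref{lem:err-uppbound} with $\beta=0$. The descent from $\mathcal{E}_{\Omega,\ell}(v)$ to $\mathcal{E}_{\Omega,\ell}(B(v))$ goes through verbatim: the truncation by $\frac{1}{\max\{1,q-\deg{v}\}}$ only decreases log-error by Lemma~\ref{lem:marg-upperbound}, and the inequality $\frac{\sum a_i}{\sum b_i}\le\max_i\frac{a_i}{b_i}$ passes from vertex to block error. Then Lemma~\ref{lem:tech} applied at $\beta=0$ yields $\mathcal{E}_{\Omega,\ell}(B(v))\le \sum_{i\in[m]} \frac{2}{q-\deg{v_i}-1}\max_\rho \mathcal{E}_{\Omega_i^\rho,\ell_i}(v_i)$; because $B(v)$ is permissive, every $v_i$ is low-degree, so each coefficient equals $\delta(\deg{v_i})$, and the path $P_i$ inside $B(v)$ lets me extend self-avoiding walks in $T$ into the subtrees rooted at $v_i$, so the induction hypothesis closes the recursion.

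The main obstacle is the applicability of Lemma~\ref{lem:tech}, whose proof rests on a mean-value estimate for $\log p$ and therefore does not directly tolerate vanishing marginals. I expect to sidestep this using Proposition~\ref{prop:local-feasibility} together with the explicit feasibility check in Algorithm~\ref{algo:marg-prob-ssm-color}: these ensure $\Pr[\Omega_i^\rho]{c(v_i)=\rho_i}=0$ precisely when $p_{\Omega_i^\rho,v_i,\ell_i}(\rho_i)=0$, so such terms contribute $0$ by convention and may be dropped before invoking Lemma~\ref{lem:tech} on the remaining strictly positive coordinates. Maintaining this synchronization of algorithmic and true feasibility down the recursion is the only genuinely new ingredient versus the $\beta>0$ case; once it is secured, the argument closes exactly as before.
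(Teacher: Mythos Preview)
Your proposal is correct and follows the paper's own approach: the paper's proof only explicitly modifies the base case (using Proposition~\ref{prop:local-feasibility} to get the leaf bound $n\log q$) and otherwise declares the inductive step ``almost identical'' to Lemma~\ref{lem:err-uppbound}. Your additional care about the applicability of Lemma~\ref{lem:tech} when some $p_{i,\rho}$ vanish---establishing that the algorithmic and true marginals are zero simultaneously so those coordinates can be dropped before the mean-value estimate---is a detail the paper glosses over, and your resolution via Proposition~\ref{prop:local-feasibility} and the feasibility check in Algorithm~\ref{algo:marg-prob-ssm-color} is sound.
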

\begin{proof}
  The proof of this lemma is almost identical to the proof of Lemma
  \ref{lem:err-uppbound}, except at the base case of the induction.

  Consider the situation that $(\Omega,v,z,\ell)$ is a leaf in
  $\mathcal{CT}=\mathcal{CT}(\hat\Omega,\hat v,x,L))$, then
  \begin{itemize}
  \item if it is returned at step
    \ref{algo:marg-prob-ssm-color:fixed}, then
    $\err_{\Omega,\ell}(v)=0$;
  \item if it is returned at step
    \ref{algo:marg-prob-ssm-color:return}, then
    $\err_{\Omega,\ell}(v)=0$;
  \item if it is returned at step \ref{algo:marg-prob-ssm-color:ell},
    then we claim that $\err_{\Omega,\ell}(v)\le n\log q$. To see
    this, it is sufficient to show that if there is a feasible
    $\pi\in\mathcal{F}(B(v))$ such that $\pi(v)=x$, then
    $\Pr[\Omega]{c(v)=x}>0$, and if no such $\pi$ exists, then
    $\Pr[\Omega]{c(v)=x}=0$ (We use the convention that
    $\log 0-\log 0=0$ and the fact that $\Pr[\Omega]{c(v)=x}>0$
    implies $\Pr[\Omega]{c(v)=x}\ge q^{-n})$. This is a consequence of
    Proposition \ref{prop:local-feasibility}
  \end{itemize}
\end{proof}




\begin{proof}[Proof of Theorem \ref{thm:ssm} when $\beta=0$] 
  With Lemma \ref{lem:err-uppbound} replaced by Lemma
  \ref{lem:err-uppbound-color}, the proof is almost identical to the
  $\beta>0$ case.
\end{proof}

\section{Approximate Counting and Sampling}

In this section, we prove Theorem \ref{main-thm-FPTAS}. We first show
how to estimate the marginal probability in Potts model and it is
routine to obtain FPTAS from this estimation.

\subsection{Estimate the marginals}

\begin{theorem}\label{thm:comp-marg}

  Let $q\ge 3$ be an integer and $0\le\beta<1$. Let $\mathcal{G}$ be a
  family of finite graphs that satisfies the followings:
  \begin{itemize}
  \item the function $\delta(\cdot)$ is a contraction function for
    $\mathcal{G}$;
  \item (proper $q$-coloring) if $\beta=0$, the family $\mathcal{G}$
    is $q$-colorable;
  \item the family $\mathcal{G}$ is locally sparse.
  \end{itemize}
  Then for every feasible instance $\Omega=(G,\Lambda,\sigma)$ of
  Potts model where $G=(V,E)\in\mathcal{G}$ with $\abs{V}=n$,
  $\Lambda\subseteq V$ and $\sigma\in[q]^\Lambda$, for every vertex
  $v\in V$ and every color $x\in[q]$, there exists an algorithm that
  can compute an estimation $\hat p$ of $\Pr[\Omega]{c(v)=x}$ in time
  polynomial in $n$, satisfying
  \[
    1-O\tuple{\frac{1}{n^3}}\le \frac{ \hat p}{\Pr[\Omega]{c(v)=x}}
    \le 1+O\tuple{\frac{1}{n^3}}.
  \]
\end{theorem}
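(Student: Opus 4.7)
The plan is to invoke the algorithm $\marg(\Omega, v, x, L)$ from Section~\ref{sec:ssm} with depth $L = C\log n$, for a constant $C = C(q,\beta)$ to be chosen, and return the output $\hat{p}$. The analysis splits into an accuracy part (a direct consequence of the SSM machinery already developed) and an efficiency part (where local sparsity is essential).

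For accuracy, I would apply Lemma~\ref{lem:err-uppbound} (or Lemma~\ref{lem:err-uppbound-color} when $\beta=0$) to bound $\abs{\log \hat{p} - \log \Pr[\Omega]{c(v)=x}}\le \err_{T,S}$, and then bound $\err_{T,S}\le n^{C_1'}\exp{-C_2'L}$ exactly as in the proof of Theorem~\ref{thm:ssm}, via Lemma~\ref{lem:cut-exist} and the contraction hypothesis. Choosing $C$ large enough that this is at most $1/n^4$, and then exponentiating, yields the desired multiplicative bound $1\pm O(1/n^3)$ through $\abs{e^x-1}\le 2\abs{x}$ for $\abs{x}\le 1$.

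For efficiency, I need bounds on (i) the number of recursive calls and (ii) the work per call. For (i), observe that each recursive $\marg$ call is made on a vertex lying outside all previously chosen permissive blocks, so every such call corresponds to a self-avoiding walk from $v$ of length at most $L$ in $G$. Since $\delta(d)\ge \frac{2(1-\beta)}{q-1}$ for every $d$ (the high-degree branch $\delta=1$ is even larger), the contraction bound yields $\abs{\SAW(v,\ell)}\le n^{C}\bigl(\tfrac{\gamma(q-1)}{2(1-\beta)}\bigr)^{\ell}$, which is polynomial in $n$ for $\ell \le L=O(\log n)$. For (ii), local sparsity guarantees $\abs{B(w)}\le C'(L+\log n)=O(\log n)$ for every encountered vertex $w$ (since $w$ lies on a path of length at most $L$ from $v$), so enumerating $\mathcal{F}(B(w))$ and its boundary edges costs only $q^{O(\log n)}=n^{O(1)}$.

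The main obstacle is that the branching over $\rho\in\mathcal{F}(B(w))$ inside $\mathtt{marg\mbox{-}block}$ produces distinct sub-instances $\Omega_i^\rho$ at every level, which naively gives a quasi-polynomial computation tree of size $n^{O(\log n)}$. I plan to handle this by noting that, along any root-to-leaf path of the computation tree, the data needed to specify the current sub-instance is a sequence of block configurations whose supports are all contained in $B(P)$ for a single path $P$ in $G$ of length at most $L$ from $v$; by local sparsity $\abs{B(P)}=O(\log n)$, so the total number of distinct reachable sub-instances is at most $\bigl(\sum_{\ell\le L}\abs{\SAW(v,\ell)}\bigr)\cdot q^{O(\log n)}=n^{O(1)}$, keeping the overall running time polynomial in $n$.
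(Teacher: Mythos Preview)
Your proposal is correct and matches the paper's approach closely. The accuracy argument is identical to the paper's: invoke Lemma~\ref{lem:err-uppbound} (resp.\ Lemma~\ref{lem:err-uppbound-color}) to bound the log-error by $\err_{T,S}$, then use Lemma~\ref{lem:cut-exist} and the contraction hypothesis to get $\err_{T,S}\le n^{C_1}\exp{-C_2 L}$, and take $L=O(\log n)$ large enough.

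For efficiency, the paper packages your observations into a dedicated Lemma~\ref{lem:time-potts}, proved by direct induction on the computation tree to obtain $\tau_{\Omega,v}\le C'\sum_{P} q^{2|B_\Omega(P)|}$, with the sum over self-avoiding walks $P$ from $v$ of length at most $\theta L$. The two ingredients are exactly the ones you identify: the bound on $\abs{\SAW(v,\ell)}$ via $\delta(d)\ge\frac{2(1-\beta)}{q-1}$, and the bound $\abs{B(P)}=O(L+\log n)$ from local sparsity. One caveat on your last paragraph: the phrase ``distinct reachable sub-instances'' suggests memoization, but none is needed and the paper uses none. The point is that the quantity you bound already controls the \emph{number of nodes} of the computation tree: along any root-to-leaf computation path the successive blocks are pairwise disjoint and all contained in $B_\Omega(P)$ for the corresponding walk $P$, so the accumulated configuration-branching is at most $q^{|B_\Omega(P)|}$ per walk. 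The paper's induction makes this explicit and also handles the bookkeeping you would otherwise have to supply (disjointness of successive blocks, the containment $B_{\Omega_i^\rho}(\cdot)\subseteq B_\Omega(\cdot)$, and that different boundary edges extend to distinct self-avoiding walks).
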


Let $\mathcal{G}$ be a family of finite graphs satisfying condition in
Theorem \ref{thm:ssm}. Let $\Omega=(G,\Lambda,\sigma)$ be an instance
of Potts model where $G=(V,E)\in\mathcal{G}$. Then for every vertex
$v\in V$, color $x\in[q]$, set of vertices
$\Delta\subseteq V\setminus\set{\Lambda\cup\set{v}}$ and a feasible
configuration $\rho\in [q]^\Delta$, we have shown in the proof of
Theorem \ref{thm:ssm} that we can compute an estimate $\hat p$ of
$\Pr[\Omega]{c(v)=x\mid c(\Delta)=\rho}$ such that, for some universal
constants $C_1,C_2>0$,
\[
  \abs{\log \hat p-\log\tuple{\Pr[\Omega]{c(v)=x\mid
        c(\Delta)=\rho}}}\le n^{C_1}\cdot\exp{-C_2\cdot\ell},
\]
where $\ell=\dist[G]{v,\Delta}$, as long as $\ell\ge C\log n$ for some
constant $C>0$.

To prove Theorem \ref{thm:comp-marg}, we show that if $\mathcal{G}$ is
locally sparse, then our estimation algorithm is also efficient, i.e.,
terminates in polynomial time for $L=O(\log \abs{V})$.


\begin{lemma}\label{lem:time-potts}
  Let $q\ge 3$ be an integer and $0\le \beta<1$. Assume $\mathcal{G}$
  is a family of graphs satisfying condition in Theorem
  \ref{thm:comp-marg}. Then there exists a constant $C>0$ such that
  for every feasible instance
  $\hat\Omega=(\hat G(\hat V,\hat E),\hat \Lambda,\hat \sigma)$ with
  $\hat G\in \mathcal{G}$, every vertex
  $\hat v\in \hat V\setminus \hat \Lambda$, every color $x\in[q]$ and
  every $L\ge C\log \abs{\hat V}$, the procedure
  $\mathtt{marg}(\hat \Omega,\hat v,x,L)$ (both $\beta>0$ and
  $\beta=0$ versions) terminates in time
  $\abs{\hat V}^{O(1)}\exp{O(L)}$.
\end{lemma}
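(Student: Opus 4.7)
The plan is to bound the running time of $\marg(\hat\Omega,\hat v,x,L)$ by the product of (a) the number of nodes in its recursion tree and (b) the polynomial-in-$|\hat V|$ work performed per node (computing $B(v)$, enumerating $\mathcal{F}(B(v))$, computing the paths $P_i$, checking local feasibility in the $\beta=0$ case, and the final arithmetic). The analysis is the same for both versions of $\marg$.

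First I would observe that the recursion depth is at most $L$, since every recursive call to $\marg$ strictly decreases $\ell$ by $|P_i|\ge1$. The branching from one $\marg$-level to the next is at most $|\mathcal{F}(B(v))|^2\cdot m\le q^{2|B(v)|}\cdot m$, because $\marg$ invokes $\mathtt{marg\mbox{-}block}$ once per $\pi\in\mathcal{F}(B(v))$, and each such call spawns $m\cdot|\mathcal{F}(B(v))|$ deeper $\marg$ calls, where $m=|\delta B(v)|$. Consequently the size of the entire computation tree is bounded by the product of these branching factors along a root-to-leaf path.

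The key step is to combine local sparsity with the contraction function to control this product. Fix a root-to-leaf path $\hat v=v_0,v_1,\dots,v_k$, with connecting self-avoiding walks $P_j$ from $v_{j-1}$ to $v_j$ through $B_j=B(v_{j-1})$, and let $P=P_1\cdots P_k$ be the concatenation. Since $\sum_j|P_j|\le L$ and $E(B(v_{j-1}))$ is deleted from the reduced graph before step $j+1$, $P$ is an edge-self-avoiding walk in $\hat G$ of length at most $L$, and each $B_j$ is contained in $B(P)$. Local sparsity then yields $\sum_j|B_j|\le|B(P)|+(\text{inner-boundary overlap})=O(L+\log|\hat V|)$, and hence $\prod_j q^{2|B_j|}\le q^{O(L+\log|\hat V|)}=|\hat V|^{O(1)}\exp{O(L)}$. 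For the remaining factor $\prod_j m_j$ I would invoke the contraction bound $\err_\delta(\hat v,\ell)\le|\hat V|^{C}\gamma^\ell$ together with the uniform positive lower bound $\delta(d)\ge\delta_{\min}=\frac{2(1-\beta)}{q-1}>0$ valid for every degree $d$: these imply that the number of self-avoiding walks of length $\ell$ from $\hat v$ in $\hat G$ is at most $|\hat V|^{C}(\gamma/\delta_{\min})^{\ell}=|\hat V|^{O(1)}\exp{O(\ell)}$, since $\gamma/\delta_{\min}$ is a constant depending only on $q$ and $\beta$. Because distinct computation paths determine distinct boundary-edge sequences, hence distinct (edge-)self-avoiding walks from $\hat v$ of length $\le L$, their number is at most $|\hat V|^{O(1)}\exp{O(L)}$.

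The main obstacle will be the walk-counting step: the concatenation $P$ is only edge-self-avoiding (consecutive $P_j$'s can share vertices on the inner boundary $\bar B(v_{j-1})$), whereas $\err_\delta$ enumerates vertex-self-avoiding walks. One must either verify that each distinct computation path maps to a distinct vertex-SAW from $\hat v$ of length $\le L$, or extend the contraction estimate to edge-SAWs by splitting them into a bounded number of SAW segments, each controlled by $\err_\delta$. Once the configuration factor $|\hat V|^{O(1)}\exp{O(L)}$ and the path-count factor $|\hat V|^{O(1)}\exp{O(L)}$ are multiplied, the computation tree has size $|\hat V|^{O(1)}\exp{O(L)}$, and multiplying by the per-node polynomial work gives the claimed running time.
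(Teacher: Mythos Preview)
Your plan is essentially the paper's proof: bound the running time by a sum over root-to-leaf self-avoiding walks $P$ of $q^{O(|B(P)|)}$, control the number of such walks via the contraction hypothesis together with the uniform lower bound $\delta(d)\ge\frac{2(1-\beta)}{q-1}$, and control each exponent via local sparsity. The paper organizes this as an induction giving $\tau_{\Omega,v}\le C'\sum_{P\in\mathcal{P}_{\Omega,v}}q^{2|\bigcup_{u\in P}B_\Omega(u)|}$, which is just a tidier packaging of your two factors.

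Two small points. First, your claim $\sum_j|P_j|\le L$ is off by one block: at termination $\ell<0$, so $\sum_j|P_j|>L$, with the overshoot equal to the last $|P_k|$. The paper handles this with Lemma~\ref{lem:cut-exist} (giving path length $\le\theta L$); you could equally invoke local sparsity to bound $|P_k|\le|B(v_{k-1})|+1=O(L+\log n)$. Second, your ``main obstacle'' dissolves once you look at the subinstance $\Omega_i^\rho$: there the interior $B\setminus\bar B$ and all edges inside $B$ are removed, and every vertex of $\bar B$ is either pinned (if it equals some $u_j$ with $j<i$) or isolated (its internal edges are gone and its boundary edges $u_jv_j$ with $j\ge i$ are deleted). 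Hence no self-avoiding walk from $v_i$ in $\Omega_i^\rho$ can re-enter $B$, and the concatenation $P_1\cdots P_k$ is a genuine vertex-self-avoiding walk in $\hat G$. This is exactly the content of the paper's ``fact~1'' in its inductive step.
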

\begin{proof}
  Let $C$ and $\theta$ be the constants in Lemma
  \ref{lem:cut-exist}. Fix $\hat S$ as the termination set of
  $\marg(\hat \Omega,\hat v,x,L)$.  Let
  $\hat T=\Tsaw{G[\hat V\setminus\hat\Lambda]}{v}$.  Then it follows
  from Lemma \ref{lem:cut-exist} that
  $L< \dist[\hat T]{v,\hat S}\le \theta L$. Let
  $\mathcal{CT}=\mathcal{CT}(\hat \Omega,\hat v,x,L)$ denote the
  computation tree of $\marg(\hat \Omega,\hat v,x,L)$.

  For every $(\Omega,v,z,\ell)\in\mathcal{CT}$, where
  $\Omega=(G(V,E),\Lambda,\sigma)$, consider the self-avoiding walk
  tree $T$ that is obtained from $\Tsaw{G[V\setminus \Lambda]}{v}$ by
  removing all descendants of $\hat S$. We use
  $\mathcal{P}_{\Omega,z}$ to denote the set of self-avoiding walks
  corresponding to the leaves of $T$.
  
  We claim that $\abs{\mathcal{P}_{\hat\Omega,\hat v}}=\exp{O(L)}$.
  To see this, note that
  $\abs{\mathcal{P}_{\hat\Omega,\hat v}}\le \sum_{k=1}^{\theta
    L}\SAW_{\hat G}(\hat v,k)$,
  where $\SAW_{\hat G}(\hat v,k)$ is the set of self-avoiding walks of
  length $k$ from $\hat v$ in $\hat G$.  On the other hand, we have
  \[
    \abs{\hat V}^{O(1)}\ge \sum_{k=1}^{\theta L}\err_{\delta}(\hat
    v,k)\ge \tuple{\frac{2(1-\beta)}{q-1}}^{\theta
      L}\sum_{k=1}^{\theta L}\SAW_{\hat G}(\hat v,k),
  \]
  where the last inequality is due to
  $\delta(d)\ge \frac{2(1-\beta)}{q-1}$ for every $d\ge 0$.
  
  The time cost of each vertex $\mathtt{marg}\tuple{\Omega,v,z,\ell}$
  in $\mathcal{CT}$ besides the recursive calls is at most
  $C'\cdot mq^{B_{\Omega}(v)}$ for some constant $C'>0$ where
  $m=\abs{\delta B_{\Omega}(v)}$ is the size of edge boundary of
  $B_\Omega(v)$.  We use $\tau_{\Omega,v}$ to denote the maximum
  running time of $\mathtt{marg}(\Omega,v,z,\ell)$ over all colors
  $z\in[q]$. We apply induction on the depth of
  $\mathcal{CT}(\marg(\Omega,v,z,\ell))$ to show that
  $\tau_{\Omega,v}\le
  C'\sum_{P\in\mathcal{P}_{\Omega,v}}q^{2\abs{\bigcup_{u\in
        P}B_\Omega(u)}}$.
  If the depth of $\mathcal{CT}(\Omega,v,z,\ell)$ is one, the upper
  bound is trivial. Now assume the lemma holds for smaller
  depth. Denote $B=B_\Omega(v)$ and assume
  $\delta B=\set{u_iv_i\mid i\in[m]}$ be the edge boundary of $B$.
  Then
  \[
    \tau_{\Omega,v}
    \le \sum_{\pi\in\mathcal{F}(B)}\sum_{i\in[m]}\tau_{\Omega_i^{\pi},v_i}+C'\cdot mq^{\abs{B}}\\
    \le q^{\abs{B}}\sum_{i\in[m]}\tuple{C'+\max_{\pi\in\mathcal{F}(B)}\tau_{\Omega_i^{\pi},v_i}}\\
  \]
  Applying the induction hypothesis, we have for some $\pi\in\mathcal{F}(B)$
  \begin{align*}
    \tau_{\Omega,v}
    &\le C'\cdot
      q^{\abs{B}}\cdot\sum_{i\in[m]}\tuple{1+\sum_{P\in
      \mathcal{P}_{\Omega_i^\pi,v_i}}q^{2\abs{\bigcup_{u\in
      P}B_{\Omega_i^{\pi}}(u)}}}\\
    &\le
      C'\sum_{i\in[m]}\sum_{P\in\mathcal{P}_{\Omega_i^\pi,v_i}}q^{2\tuple{\abs{\bigcup_{u\in
      P}B_{\Omega_i^\pi}(u)}+\abs{B}}} \\
    &\le
      C'\cdot\sum_{P\in\mathcal{P}_{\Omega,v}}q^{2\abs{\bigcup_{u\in
      P}B_{\Omega}(u)}},
  \end{align*}
  where the last inequality is due to the following three facts:
  \begin{enumerate}
  \item each path in $\mathcal{P}_{\Omega_i^\pi,v_i}$ is a part of
    some path in $\mathcal{P}_{\Omega,v}$, and all these paths in
    $\mathcal{P}_{\Omega,v}$ are distinct;
  \item $B\cap B_{\Omega_i^\pi(u)}=\varnothing$ for every $i,u$ and
    $\pi$;
  \item $B_{\Omega}(u)$ is at least as large as $B_{\Omega_i^\pi}(u)$
    for every $u$.
  \end{enumerate}
  Therefore, we have
  \[
    \tau_{\hat \Omega,\hat v}\le
    C'\cdot\sum_{P\in\mathcal{P}_{\Omega,v}}q^{2\abs{\bigcup_{u\in
          P}B_{\Omega}(u)}}=\abs{\hat V}^{O(1)}\exp{O(L)}
  \]
\end{proof}

\begin{proof}[Proof of Theorem \ref{thm:comp-marg}]
  Assume $\abs{V}=n$. Let $C$ and $\theta$ be the constants in
  Lemma \ref{lem:cut-exist}.  Fix some $L\ge C\log n$ and denote $S$
  the termination set of $\marg(\Omega,v,x,L)$.  Let
  $p_{\Omega,v,L}(x)=\marg(\Omega,v,x,L)$ and
  $T=\Tsaw{G[V\setminus\Lambda]}{v}$.  Then it follows from Lemma
  \ref{lem:err-uppbound} and Lemma \ref{lem:err-uppbound-color} that
  \[
    \abs{\log\tuple{p_{\Omega,v,L}(x)}-\log\tuple{\Pr[\Omega]{c(v)=x}}}\le\err_{T,S}.
  \]
  We also have
  \[
    \err_{T,S}\le n^{O(1)}\cdot\sum_{k=L+1}^{\theta
      L}\err_\delta(v,k)\le n^{C_1}\cdot\exp{-C_2\cdot L}
  \]
  for some universal constants $C_1,C_2>0$.

  Thus for some $L=O(\log n)$ and $L\ge C\log n$, it holds that
  \[
    1-O\tuple{\frac{1}{n^3}}\le \frac{\hat
      p_{\Omega,v,S_v}(x)}{\Pr[\Omega]{c(v)=x}} \le
    1+O\tuple{\frac{1}{n^3}}.
  \]
  The running time of the algorithm directly follows from Lemma
  \ref{lem:time-potts}.
\end{proof}

\subsection{The sampling algorithm}

\begin{theorem}
  Let $\mathcal{G}$ be a family of graphs satisfying the conditions in
  Theorem \ref{thm:comp-marg}. There exists an FPTAS to compute the
  partition function of Potts model with parameter $q$ and $\beta$ for
  every graph in $\mathcal{G}$.
\end{theorem}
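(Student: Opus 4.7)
The plan is to apply the standard self-reducibility reduction from approximate counting to marginal estimation, using Theorem~\ref{thm:comp-marg} as the oracle. Fix any instance $\Omega=(G,\varnothing,\varnothing)$ with $G=(V,E)\in\mathcal{G}$ and $|V|=n$, and an arbitrary ordering $v_1,\ldots,v_n$ of the vertices. Choose a reference configuration $\sigma^*\in[q]^V$ with $w_G(\sigma^*)>0$; when $\beta>0$ any $\sigma^*$ works, and when $\beta=0$ we use that $\mathcal{G}$ is $q$-colorable (such a $\sigma^*$ can be found efficiently in our setting because the algorithm of Theorem~\ref{thm:comp-marg} can itself be used to choose, one vertex at a time, a color $x$ with $\Pr[\Omega]{c(v_i)=x\mid c(v_1\ldots v_{i-1})=\sigma^*(v_1\ldots v_{i-1})}>0$, which by Proposition~\ref{prop:local-feasibility} extends to a globally feasible configuration). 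Then
\[
Z(\Omega)=\frac{w_G(\sigma^*)}{\Pr[\Omega]{c(V)=\sigma^*}}=w_G(\sigma^*)\cdot\prod_{i=1}^{n}\frac{1}{\Pr[\Omega_{i}]{c(v_i)=\sigma^*(v_i)}},
\]
where $\Omega_i=(G,\{v_1,\ldots,v_{i-1}\},\sigma^*|_{\{v_1,\ldots,v_{i-1}\}})$ is feasible by construction.

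Next I would argue that each instance $\Omega_i$ still fits the hypothesis of Theorem~\ref{thm:comp-marg}: the underlying graph is still $G\in\mathcal{G}$, so the contraction function and local sparsity properties are unchanged, and feasibility has already been arranged. Hence Theorem~\ref{thm:comp-marg} produces, in time polynomial in $n$, estimates $\hat p_i$ with $1-O(n^{-3})\le \hat p_i/\Pr[\Omega_i]{c(v_i)=\sigma^*(v_i)}\le 1+O(n^{-3})$. Taking $\hat Z \defeq w_G(\sigma^*)\prod_{i=1}^n \hat p_i^{-1}$, the relative error telescopes: $(1\pm O(n^{-3}))^n=1\pm O(n^{-2})$, which already gives a PTAS.

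To obtain a genuine FPTAS with relative error $\eps$, I would sharpen the internal recursion depth in Theorem~\ref{thm:comp-marg}: inspecting its proof, the per-marginal relative error is $n^{C_1}\exp(-C_2 L)$, which can be driven below $\eps/(2n)$ by choosing $L=O(\log(n/\eps))$. The running time in Lemma~\ref{lem:time-potts} is $n^{O(1)}\exp(O(L))$, which then becomes $\mathrm{poly}(n,1/\eps)$ as required. Multiplying $n$ such estimates yields an $(1\pm\eps)$-approximation of $Z(\Omega)$ in time $\mathrm{poly}(n,1/\eps)$, completing the FPTAS.

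The main conceptual step is the self-reducibility closure: one must verify that conditioning on partial configurations preserves the contraction function and local sparsity hypotheses. This is easy here because both properties depend only on the graph $G$ (not on the boundary condition), so the family remains the original $\mathcal{G}$ throughout the chain; the only mild subtlety is ensuring that the chain of feasible instances can actually be built when $\beta=0$, which is handled by Proposition~\ref{prop:local-feasibility} together with the greedy choice of $\sigma^*$ described above.
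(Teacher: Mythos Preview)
Your proof is correct and follows essentially the same self-reducibility approach as the paper: write $Z(\Omega)$ as a telescoping product of conditional marginals and estimate each factor via Theorem~\ref{thm:comp-marg}, after checking that pinning vertices leaves the underlying graph (hence the contraction-function and local-sparsity hypotheses) unchanged. The only differences are in peripheral details: the paper constructs the feasible configuration $\sigma$ (when $\beta=0$) by a block-by-block greedy coloring of the $B(v)$'s using local sparsity, rather than via the marginal estimator as you suggest, and it first obtains a fixed $(1\pm O(n^{-2}))$-approximation and then invokes the self-embedding boosting of \cite{sinclair1989approximate} to get an FPTAS, whereas you directly tune the recursion depth $L=O(\log(n/\eps))$ inside Lemma~\ref{lem:time-potts}---both routes are standard and valid.
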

\begin{proof}
  Let $\Omega=(G,\varnothing,\varnothing)$ be an instance of Potts
  model, where $G(V,E)\in\mathcal{G}$. Without loss of generality, we
  give an algorithm to compute an approximation of the partition
  function $\hat Z(\Omega)$ satisfying
  \[
    1-O\tuple{\frac{1}{n^2}}\le \frac{\hat Z(\Omega)}{Z(\Omega)}\le
    1+O\tuple{\frac{1}{n^2}}.
  \]
  Since our family of instances of Potts model is ``self-embeddable''
  in the sense of \cite{sinclair1989approximate}, the algorithm can be boosted
  into an FPTAS.
  
  Assume $V=\set{v_1,\dots,v_n}$. First find a configuration
  $\sigma\in [q]^V$ such that $w_G(\sigma)>0$.  This task is trivial
  when $\beta>0$. When $\beta=0$, since $G$ is $q$-colorable, we can
  also do it in polynomial time:
  \begin{itemize}
  \item If the graph is not empty, then choose a vertex $v$ and find a
    feasible coloring of $B(v)$. Then remove $B(v)$ from the graph and
    repeat the process.
  \end{itemize}
  If $G$ is $q$-colorable, then $G[V\setminus B(v)]$ is colorable as
  the boundary of $B(v)$ consists of low-degree vertices, thus the
  above process will end with a proper coloring of $G$, which is the
  union of colorings found at each step. The process terminates in
  polynomial time since $\mathcal{G}$ is locally sparse and thus the
  size of every $B(v)$ is $O(\log n)$.

  With $\sigma$ in hand, we have
  \begin{align*}
    Z(\Omega)=w_G(\sigma)/\Pr[\Omega]{c(V)=\sigma}
    &=w_G(\sigma)\tuple{\Pr[\Omega]{\bigwedge_{i=1}^nc(v_i)=\sigma(v_i)}}^{-1}\\
    &=w_G(\sigma)\tuple{\prod_{i=1}^n\Pr[\Omega]{c(v_i)=\sigma(v_i)\mid
      \bigwedge_{j=1}^{i-1}c(v_j)=\sigma(v_j)}}^{-1}
  \end{align*}
  For every $i\in[n]$, let $\Omega_i=(G,\Lambda_i,\sigma_i)$ where
  $\Lambda_i=\set{v_1,\dots,v_{i-1}}$ and $\sigma_i(v_j)=\sigma(v_j)$
  for every $j=1,\dots,i-1$. We have
  \[
    Z(\Omega)=w_G(\sigma)\tuple{\prod_{i=1}^n\Pr[\Omega_i]{c(v_i)=\sigma(v_i)}}^{-1}.
  \]
  Note that the graph class $\mathcal{G}$ is closed under the
  operation of fixing some vertex to a specific color, we can apply
  Theorem \ref{thm:comp-marg} for every $\Omega_i$ and obtain
  $\hat p_i$ such that
  \[
    1-O\tuple{\frac{1}{n^3}}\le \frac{\hat
      p_i}{\Pr[\Omega_i]{c(v_i)=\sigma(v_i)}}\le
    1+O\tuple{\frac{1}{n^3}}.
  \]
  Let $\hat Z(\Omega)=w_G(\sigma)\tuple{\prod_{i=1}^n\hat p_i}^{-1}$,
  then Theorem \ref{thm:comp-marg} implies that
  \[
    1-O\tuple{\frac{1}{n^2}}\le \frac{\hat Z(\Omega)}{Z(\Omega)}\le
    1+O\tuple{\frac{1}{n^2}}.
  \]
\end{proof}

Our approximate counting algorithm implies a sampling algorithm via
Jerrum-Valiant-Vazirani reduction\cite{samp_JVV86}.

\begin{corollary}\label{cor:fpaus}
  Let $q>2$ and $0\le \beta<1$ be two constants. For a family of
  graphs $\mathcal{G}$ satisfying conditions in Theorem
  \ref{thm:comp-marg}, and every graph $G(V,E)\in\mathcal{G}$ with
  $\abs{V}=n$, there exists an algorithm $\mathcal{S}$ such that for
  any $\epsilon>0$ with high probability $\mathcal{S}$ returns a
  random configuration in $[q]^{V(G)}$ from a distribution that is
  within total variation distance $\epsilon$ from the Gibbs
  distribution $\mu_{G}$ for the $q$-state Potts model with activity
  $\beta$. And the running time of $\mathcal{S}$ is in polynomial in
  $n$ and $\log\frac{1}{\epsilon}$. When $\beta=0$, i.e.~for $q>3d+4$,
  with high probability $\mathcal{S}$ is an FPAUS (fully
  polynomial-time almost uniform sampler) for proper $q$-colorings of
  $G$.
\end{corollary}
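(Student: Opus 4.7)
The plan is to invoke the standard Jerrum-Valiant-Vazirani reduction~\cite{samp_JVV86}, which turns a self-reducible FPTAS for the partition function into an FPAUS for the corresponding Gibbs measure. The preceding theorem already provides the FPTAS, and its proof exhibits the required self-reducibility: fixing the color of a vertex yields a Potts instance $(G,\Lambda',\sigma')$ on the same underlying graph $G\in\mathcal{G}$, and our hypotheses on $\delta(\cdot)$ and local sparsity depend only on $G$, not on the boundary condition. Hence, by Theorem~\ref{thm:comp-marg}, each conditional marginal $\Pr[\Omega_i]{c(v_i)=x}$ can be approximated within a multiplicative factor $1\pm O(1/n^3)$ in polynomial time.

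From this, the sampler $\mathcal{S}$ is obtained in the standard way. Fix an arbitrary enumeration $v_1,\ldots,v_n$ of $V$. For $i=1,\ldots,n$, having already assigned colors $\sigma(v_1),\ldots,\sigma(v_{i-1})$, form the instance $\Omega_i=(G,\{v_1,\ldots,v_{i-1}\},\sigma|_{\{v_1,\ldots,v_{i-1}\}})$, compute approximations $\hat{p}_{i,x}$ of $\Pr[\Omega_i]{c(v_i)=x}$ for each $x\in[q]$ using Theorem~\ref{thm:comp-marg} tuned to produce error $O(\epsilon/n)$ per step (rather than $O(1/n^3)$), and sample $\sigma(v_i)$ from the normalized distribution proportional to $\hat{p}_{i,x}$. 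The running time of each step is polynomial in $n$ and $\log(1/\epsilon)$, so the total running time is of the claimed form.

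For the accuracy analysis, the key observation is that the total variation distance between the output distribution of $\mathcal{S}$ and the true Gibbs measure $\mu_G$ can be bounded by a telescoping sum over the $n$ sampling steps, where at each step the contribution is controlled by the multiplicative error of the marginal approximation. Choosing the internal precision parameter of Theorem~\ref{thm:comp-marg} so that each marginal is approximated within a factor $1\pm\epsilon/(2n)$ (which only adds a factor polylogarithmic in $1/\epsilon$ to the running time) yields total variation distance at most $\epsilon$. In the case $\beta=0$, the Gibbs measure is uniform over proper $q$-colorings, so the resulting $\mathcal{S}$ is an FPAUS for $q$-colorings of $G$.

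The one place requiring genuine care is the step of finding an initial feasible configuration $\sigma$ when $\beta=0$, needed implicitly in bootstrapping the marginal computation; this is handled exactly as in the proof of the preceding theorem, by greedily $q$-coloring minimal permissive blocks one at a time, which runs in polynomial time since local sparsity ensures $|B(v)|=O(\log n)$ for each such block. Once this is in place, the rest of the argument is entirely routine JVV-reduction bookkeeping, and the ``with high probability'' qualifier in the statement refers solely to the fact that for $G\sim\mathcal{G}(n,d/n)$ the hypotheses of Theorem~\ref{thm:comp-marg} only hold with high probability over the choice of $G$.
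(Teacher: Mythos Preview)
Your proposal is correct and follows the same approach as the paper, which simply cites the Jerrum--Valiant--Vazirani reduction~\cite{samp_JVV86} without further detail; you have in fact supplied considerably more of the routine bookkeeping than the paper does. One small remark: the paragraph about finding an initial feasible configuration when $\beta=0$ is not actually needed for the sampler---that step was required in the FPTAS to express $Z(\Omega)$ as $w_G(\sigma)/\Pr[\Omega]{c(V)=\sigma}$, but the sequential sampler only needs the marginal estimator of Theorem~\ref{thm:comp-marg}, which does not rely on any precomputed feasible configuration.
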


\section{Random Graphs}

In this section, we prove Theorem \ref{main-thm-random}. We first prove the following
properties of $\mathcal{G}(n,d/n)$.

\begin{theorem}\label{thm:gnp-property}
  Let $d$ be a sufficiently large constant, $q>3(1-\beta)+4$ and
  $G=(V,E)\sim\mathcal{G}(n,d/n)$. Then with probability $1-o(1)$, the
  following holds
  \begin{itemize}
  \item there exist two universal positive constants $C>0,\gamma<1$
    such that $\err_\delta(v,\ell)<n^C\gamma^\ell$ for all $v\in V$
    and for all $\ell=o(\sqrt{n})$;
  \item if $\beta=0$, then $G$ is $q$-colorable;
  \item there exists a universal constant $C>0$ such that for every
    path $P$ in $G$ of length $\ell$, $\abs{B(P)}\le C(\ell+\log n)$.
  \end{itemize}
\end{theorem}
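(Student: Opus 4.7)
The plan is to verify each of the three assertions via first-moment calculations combined with standard concentration estimates for the degree distribution of $\mathcal{G}(n,d/n)$.

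I would begin with the degree structure. Set the high-degree threshold $D \defeq \lceil\frac{q-1}{1-\beta}\rceil - 2$, so that $D > 3d + 2$ under the hypothesis $q > 3(1-\beta)d + 4$. Since $\deg[G]{v} \sim \mathrm{Binomial}(n-1, d/n)$, a Chernoff bound gives $\Pr{\deg[G]{v} \ge D} \le p_D$ where $p_D \le e^{-\alpha d}$ for some $\alpha = \alpha(q,\beta) > 0$ when $d$ is sufficiently large. Moreover, these events are nearly independent across disjoint vertices because the relevant edges are disjoint.

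For property (3), I would show first that every connected component in the subgraph induced by high-degree vertices has size $O(\log n)$ with high probability. The expected number of rooted trees of $k$ high-degree vertices is at most $n \cdot (ed)^{k} \cdot p_D^{k}$ (a rooted spanning tree of the cluster has at most $(ed)^k$ realizations by the standard tree-counting bound, and each of its $k$ vertices is independently high-degree with probability $\le p_D$). For $d$ large enough, $ed \cdot p_D < 1/2$, so the first moment is $o(1)$ once $k \ge C_0 \log n$. Consequently, with probability $1 - o(1)$, every maximal high-degree cluster has size at most $C_0\log n$. Now, given a path $P$ of length $\ell$, the permissive block $B(P)$ is the union of $P$ and all high-degree clusters touching $P$. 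The number of distinct such clusters is at most $\ell + 1$; bounding the total size of $r$ such clusters by $C \ell + C\log n$ follows from a Bernstein-type large-deviation bound (the number of high-degree vertices adjacent to or within a distance-$1$ shell of $P$ is stochastically dominated by a sum of independent Bernoullis of mean $O(p_D)$), applied uniformly over all paths using a union bound (of which there are at most $n^{O(1)} d^\ell$ to consider when $\ell = O(\log n)$; longer paths follow by concatenation). This is the most delicate step, as one must take the union over all paths simultaneously.

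For property (1), I would compute $\E{\err_\delta(v,\ell)}$ directly. For any fixed sequence $(v, v_1,\dots,v_\ell)$ in $K_n$, the probability that it is a SAW in $G$ is $(d/n)^{\ell}$, and, conditioned on this, the degrees $\deg[G]{v_i}$ are still dominated by $\mathrm{Binomial}(n-\ell, d/n) + \ell$, hence concentrated near $d$. Separating the contribution of the $h$ high-degree vertices among $v_1,\dots,v_\ell$ from the low-degree ones (which contribute at most $\frac{2(1-\beta)}{2(1-\beta)d + 3} \le \frac{1}{d+c}$ with $c = c(q,\beta) > 0$), I get
\[
  \E{\err_\delta(v,\ell)} \le \sum_{h=0}^{\ell} \binom{\ell}{h} (d+o(1))^\ell \cdot \left(\tfrac{1}{d+c}\right)^{\ell - h} \cdot p_D^{h}.
\]
For large $d$ we have $p_D \cdot (d+c) \ll 1$, so the sum is bounded by $\gamma_0^\ell$ for some $\gamma_0 < 1$. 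Markov's inequality applied to $\err_\delta(v,\ell)$, followed by a union bound over $v \in V$ and over $\ell \le \sqrt{n}$, gives $\err_\delta(v,\ell) < n^C \gamma^\ell$ for a slightly larger $\gamma < 1$ with probability $1-o(1)$.

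For property (2), when $\beta = 0$, I would use the peeling procedure suggested by the FPTAS section: since every permissive block $B$ has low-degree boundary (each boundary vertex has degree $< D \le q-2$), one can greedily extend any coloring of $V \setminus B$ to a valid $q$-coloring of $V$ provided $G[B]$ itself admits a valid $q$-coloring. Using property (3), all blocks to be colored have size $O(\log n)$ and are locally tree-like: another first-moment computation shows that w.h.p.\ every subgraph on $k = O(\log n)$ vertices of $\mathcal{G}(n,d/n)$ has at most $k+o(k)$ edges (i.e., is tree-like with at most a small constant excess), and any such sparse subgraph on $k$ vertices is $q$-colorable since $q \ge 4$. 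Peeling blocks in order produces a proper $q$-coloring of $G$. The main obstacle across all three parts is the uniform control of the numerous bad events over all vertices and all starting paths; handling this requires working at scales $\ell, k = O(\log n)$ and invoking the mild constraint $\ell = o(\sqrt{n})$ in part (1) to ensure path-probabilities remain essentially Poissonian.
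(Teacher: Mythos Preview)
Your argument for property~(1) has a genuine gap. You claim that the low-degree vertices along the walk contribute at most $\frac{2(1-\beta)}{2(1-\beta)d+3}$ each. But this is the value of $\delta$ at degree \emph{exactly} $d$; the function $\delta$ is increasing, and ``low-degree'' only means $d_i \le D = \frac{q-1}{1-\beta}-2$, at which point $\delta(D)=1$. So your high/low dichotomy collapses: vertices with degree between $d$ and $D$ are low-degree yet have $\delta$ close to $1$. Even if you reset the threshold (say to $2d$), the crude bound $\delta(d_i)\le\frac{2(1-\beta)}{q-1-2(1-\beta)d}\approx\frac{2}{d}$ for $d_i\le 2d$ gives $d\cdot\E{\delta}\approx 2$, not $<1$, so Markov's inequality does not yield decay. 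The paper's proof needs exactly the inequality $\E{\delta(X)}<1/\Delta$ for $X\sim\mathrm{Bin}(n,\Delta/n)$ and $q\ge 3(1-\beta)\Delta+2$; this is not obtainable by any pointwise bound on $\delta$ and instead requires a sixth-order Taylor expansion of $1-\delta(x)$ around $x=\Delta$ to capture enough cancellation (this is Lemma~\ref{lem:gnp-decay-tech}, and it is precisely where the constant $3$ in the hypothesis comes from). Without this computation your first-moment bound on $\err_\delta(v,\ell)$ does not contract.

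For property~(3) your approach differs from the paper's (which couples the BFS exploration of $B^*(P)$ with a forest of $L$ independent branching processes and bounds a random-walk hitting time). Your sketch is plausible in spirit but incomplete as written: knowing that each high-degree cluster has size $O(\log n)$ and that at most $\ell+1$ clusters touch $P$ gives only $|B(P)|=O(\ell\log n)$, not $O(\ell+\log n)$. You would need to show that the \emph{total} cluster mass attached to $P$ is $O(\ell+\log n)$ uniformly over all paths, and the ``Bernstein-type bound'' you invoke is vague about which random variables are being summed and why the union bound over $\sim n\cdot d^\ell$ paths is beaten. This can be made to work, but it needs the exponential tail of the size of a cluster rooted at a single vertex, not just the $O(\log n)$ maximum. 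For property~(2) the paper simply cites the classical result that $\chi(\mathcal{G}(n,d/n))\sim d/(2\ln d)$ w.h.p., which is far below $q$; your peeling argument is unnecessary.
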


Note that the first property in above theorem impose an upper bound on
$\ell$.  This is not harmful as our algorithms for FPTAS and sampling
only require the property holds for $\ell=O(\log n)$. Thus Theorem
\ref{thm:gnp-property} and Corollary \ref{cor:fpaus} together imply
Theorem \ref{main-thm-random}.




It is well-known that when $\beta=0$, $G$ is $q$-colorable with high
probability (see e.g.,~\cite{grimmett1975colouring}), we verify the
first property in Lemma \ref{lem:gnp-decay} and the third property in
Lemma \ref{lem:gnp-sparse}.

\subsection{Correlation decay in random graphs}

\begin{lemma}\label{lem:gnp-decay}
  Let $d>1$, $0\le\beta< 1$ and $q> 3(1-\beta)d+4$ be
  constants. Let
  $G(V,E)\sim\mathcal{G}(n,d/n)$. 
  There exist two positive constants $C>0$ and $\gamma<1$ such that
  with probability $1-O\tuple{\frac{1}{n}}$, for every $v\in V$ and
  every $\ell=o(\sqrt{n})$, it holds that
  \[
    \err_\delta(v,\ell)\le n^C\gamma^\ell
  \]

\end{lemma}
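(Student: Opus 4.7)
}
The plan is to bound the expectation $\E{\err_\delta(v,\ell)}$ and then use Markov's inequality together with a union bound over all vertices $v$ and all lengths $\ell=1,\ldots,\lfloor\sqrt{n}\rfloor$. Writing $\err_\delta(v,\ell)$ as a sum over ordered sequences of $\ell+1$ distinct vertices, for each fixed sequence $P=(v=v_0,v_1,\ldots,v_\ell)$ I would first condition on the event that all $\ell$ path edges $(v_{i-1},v_i)$ are in $E$, which has probability exactly $(d/n)^\ell$. Given this conditioning, for each interior $v_i$ I decompose $\deg(v_i)=2+Y_i+Z_i$, where $Y_i$ counts edges from $v_i$ to vertices outside $P$ and $Z_i$ counts chord edges from $v_i$ to other path vertices; the key observation is that the $Y_i$'s depend on pairwise disjoint edge sets and are jointly independent of $K:=\tfrac12\sum_iZ_i$, since $K$ depends only on edges internal to $P$. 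In particular each $Y_i\sim\mathrm{Bin}(n-\ell-1,d/n)$, which is close to $\mathrm{Poisson}(d)$.

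Since $\delta(\cdot)$ is nondecreasing on the low-degree range and capped at $1$, for any $y,z\ge 0$ one has $\delta(2+y+z)\le\delta(2+y)\cdot(1+O(1/d))^{z}$, and hence $\prod_i\delta(\deg(v_i))\le \prod_i\delta(2+Y_i)\cdot(1+O(1/d))^{2K}$. Using independence between the $Y_i$'s and $K$, the chord correction becomes a moment generating function of $K\sim\mathrm{Bin}(\binom{\ell+1}{2}-\ell,d/n)$ evaluated at $s=O(1/d)$; a direct Chernoff-type bound yields $\E{(1+O(1/d))^{2K}}\le \exp(O(\ell^2/n))=1+o(1)$ throughout the regime $\ell=o(\sqrt{n})$. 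Combining this with the product factorization over $i$ gives
\[
\E{\err_\delta(v,\ell)}\;\le\;(1+o(1))\cdot n^\ell(d/n)^\ell\cdot\E{\delta(2+Y)}^\ell\;=\;(1+o(1))\,\alpha^\ell,
\]
where $\alpha:=d\cdot\E{\delta(2+Y)}$ with $Y\sim\mathrm{Poisson}(d)$.

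The core analytic step is then to verify $\alpha<1$ strictly. For large $d$ and $q\ge 3(1-\beta)d+4$ the denominator $q-1-(1-\beta)(d+2)\ge 2(1-\beta)d+1+2\beta$, so $d\cdot\delta(2+d)\le 1-(1+2\beta)/(4(1-\beta)d)$, and the Taylor expansion of $\delta(2+\cdot)$ around $d$ with the Poisson variance $d$ contributes only a $O(1/d^2)$ correction (since $\delta''(2+d)=O(1/d^3)$), together with an exponentially small tail contribution $\E{\delta(2+Y)\mathbf{1}(Y>3d)}=e^{-\Theta(d)}$ from the $\delta=1$ regime. This yields $\alpha\le 1-\Theta(1/d)<1$ as an absolute constant $\gamma<1$ once $d$ is large enough, uniformly for $\beta\in[0,1)$ (with the implicit constant depending on $\beta$).

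Finally, Markov's inequality gives $\Pr{\err_\delta(v,\ell)>n^C\gamma^\ell}\le (1+o(1))/n^C$ for each $(v,\ell)$, and a union bound over the $n$ choices of $v$ and $\lfloor\sqrt{n}\rfloor$ choices of $\ell$ yields the desired conclusion with probability $1-O(n^{3/2-C})=1-O(1/n)$ by taking $C>5/2$. I expect the main obstacle to be the careful sign-sensitive computation in Step~3: since $\delta$ is \emph{increasing} in the degree, naive monotonicity is in the wrong direction, so both the chord multiplicative bound and the Jensen-type correction from the Poisson fluctuations must be tracked with sufficient precision to preserve the strict inequality $\alpha<1$ guaranteed by the hypothesis $q>3(1-\beta)d+4$. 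The restriction $\ell=o(\sqrt n)$ appears exactly in the chord bound $\exp(O(\ell^2/n))=1+o(1)$, which is tight in this analysis.
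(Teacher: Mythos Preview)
Your overall strategy---first moment on $\err_\delta(v,\ell)$, decomposing each path-vertex degree into external edges plus chords, bounding the chord contribution via an MGF, then Markov plus a union bound---is exactly the paper's. Two points where the paper differs or where your sketch needs repair:

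First, your chord bound $\delta(2+y+z)\le\delta(2+y)\cdot(1+O(1/d))^z$ is not correct as written: when $2+y$ is near the high-degree threshold the one-step ratio $\delta(x+1)/\delta(x)$ is $\Theta(1)$ (up to $3/2$), not $1+O(1/d)$. The paper simply uses the crude uniform bound $f_q(x)/f_q(x-1)\le 2$; since $K\sim\mathrm{Bin}(O(\ell^2),d/n)$ one still has $\E{4^K}\le\exp(O(d\ell^2/n))=1+o(1)$ for $\ell=o(\sqrt n)$, so your conclusion survives with the corrected constant.

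Second, and more substantively, the paper does not attempt your perturbative argument for the key inequality $d\cdot\E{\delta(\,\cdot\,)}<1$. Instead it (i) absorbs the ``$+2$'' by the shift $f_q(x+2)\le f_{q-2}(x)$, (ii) stochastically dominates each $X_i\sim\mathrm{Bin}(n-\ell,d/n)$ by $X\sim\mathrm{Bin}(n,d'/n)$ with $d':=(q-4)/(3(1-\beta))>d$, and (iii) invokes the separate technical Lemma~\ref{lem:gnp-decay-tech}, which proves $\E{f_{q-2}(X)}<1/d'$ for \emph{all} $d'>1$ via an explicit sixth-order expansion in the binomial moments. Your second-order sketch is sound for large $d$ (once the constants are actually tracked, the $\Theta(1/d)$ margin in $d\cdot\delta(2+d)$ does beat the $\Theta(1/d)$ convexity correction, for every $\beta\in[0,1)$), but it does not cover the full range $d>1$ asserted in the lemma; the sixth-order computation in Lemma~\ref{lem:gnp-decay-tech} is precisely what closes that gap.
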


We first prove a technical lemma.

\begin{lemma}\label{lem:gnp-decay-tech}
  Let $0\le\beta<1$ be a constant. Let
  $f_q(d):\mathbb{R}^{\ge 0}\to\mathbb{R}^{\ge 0}$ be a piece wise function
  defined as
  \[
    f_q(d)\defeq
    \begin{cases}
      \frac{2(1-\beta)}{q-1-(1-\beta) d} & \mbox{if }d\le\frac{q-1}{1-\beta}-2\\
      1 & \mbox{otherwise.}
    \end{cases}
  \]
  Let $X$ be a random variable distributed according to binomial
  distribution $\mathrm{Bin}(n,\frac{\Delta}{n})$ where $\Delta>1$ is
  a constant. Then for $q\ge 3(1-\beta)\Delta+2$ and all sufficiently
  large $n$, it holds that $\E{f_q(X)}<\frac{1}{\Delta}$.
\end{lemma}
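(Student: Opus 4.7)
Let $\lambda = 1-\beta$ and $Q = (q-1)/\lambda$, so that $f_q(k) = 2/(Q-k)$ for $k\le D:=Q-2$ and $f_q(k)=1$ otherwise; the hypothesis $q\ge 3\lambda\Delta+2$ gives $Q\ge 3\Delta+1/\lambda$, hence $Q-\Delta\ge 2\Delta+1/\lambda$. My plan is to pass to the Poisson limit and then control the resulting expectation. Since $f_q\le 1$ everywhere (it is increasing on $[0,D]$ and equals $1$ at $D$) and $\mathrm{Bin}(n,\Delta/n)$ converges in distribution to $Y\sim\mathrm{Poisson}(\Delta)$ as $n\to\infty$, bounded convergence gives $\E{f_q(X)}\to\E{f_q(Y)}$, so it suffices to prove the strict inequality $\E{f_q(Y)}<1/\Delta$ in the Poisson limit; the inequality for large $n$ then follows from this strict gap.

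\textbf{Main estimate.} Split at $D$:
\[
\E{f_q(Y)}=\E{\tfrac{2}{Q-Y}\,\mathbf{1}_{Y\le D}}+\Pr{Y>D}.
\]
The tail $\Pr{Y>D}\le\Pr{Y\ge 3\Delta-1}$ is exponentially small in $\Delta$ by the Poisson Chernoff bound and will be absorbed into the slack identified below. For the bulk, Taylor expand $g(y):=2/(Q-y)$ around $y=\Delta$: using $\E{Y-\Delta}=0$ and $\E{(Y-\Delta)^2}=\Delta$, the first two orders contribute
\[
g(\Delta)+\tfrac{1}{2}g''(\Delta)\,\Delta=\tfrac{2}{Q-\Delta}+\tfrac{2\Delta}{(Q-\Delta)^3}\le\tfrac{1}{\Delta+1/(2\lambda)}+\tfrac{2\lambda^3\Delta}{(2\lambda\Delta+1)^3}.
\]
The leading term is strictly below $1/\Delta$ by a gap of exactly $1/(\Delta(2\lambda\Delta+1))$, while the second-order correction is $\Theta(1/\Delta^2)$ and, by the elementary bound $4\lambda^2>2\lambda^3$ (valid for $\lambda\in(0,1]$), strictly smaller than the gap. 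To turn this heuristic into a rigorous bound I would either iterate the Poisson identity $\E{Y\,h(Y)}=\Delta\,\E{h(Y+1)}$ with $h(Y)=\tfrac{2}{Q-Y}\,\mathbf{1}_{Y\le D}$ to obtain an exact termwise-summable recursion for $U(Q):=\E{\tfrac{2}{Q-Y}\,\mathbf{1}_{Y\le D}}$, or combine a second-order Taylor bound with Lagrange remainder and a concentration argument restricting to $|Y-\Delta|\le c\sqrt{\Delta\log\Delta}$, handling the remainder on this window and the tail separately.

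\textbf{Main obstacle.} The chief difficulty is that the target inequality is quantitatively tight: both the gap $1/\Delta-f_q(\Delta)$ and the Jensen-wrong-way variance correction to $\E{f_q(Y)}$ are of order $1/\Delta^2$, and only a universal constant separates them. The argument therefore depends essentially on the ``$+2$'' in the hypothesis $q\ge 3(1-\beta)\Delta+2$ (which contributes the $+1/\lambda$ in $Q-\Delta$), and this slack must cover both the variance correction and all higher-order Poisson central moments, whose factorial growth makes naive remainder bounds too loose. Making the estimate uniform in $\beta\in[0,1)$ and all $\Delta>1$ (not only asymptotic as $\Delta\to\infty$) is the most delicate part, and may force either the exact recursion route above or a case split between small and large $\Delta$.
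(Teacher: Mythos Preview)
Your plan is the paper's plan in outline: set $\lambda=1-\beta$, reduce to the extremal case $q=3\lambda\Delta+2$, Taylor-expand $f_q$ (equivalently $g=1-f_q$) around the mean $\Delta$, and compare the resulting series against $1/\Delta$. You also correctly diagnose the crux: the gap $1/\Delta-f_q(\Delta)$ and the variance correction are both of order $1/\Delta^2$, so a second-order expansion is only a heuristic, and making the bound uniform over all $\Delta>1$ and $\lambda\in(0,1]$ is where the real work lies. Your proposal stops exactly here, naming two possible routes (a Poisson recursion, or concentration plus a Lagrange remainder) without executing either.

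The paper closes this gap by brute force rather than by either of your suggested routes. It writes down an explicit degree-$6$ polynomial $\tilde g(x)$ --- the first five Taylor terms of $g(x)=1-2\lambda/(q-1-\lambda x)$ about $x=\Delta$, together with a carefully modified sixth-order term --- chosen so that the difference factors as
\[
g(x)-\tilde g(x)=\frac{2\lambda^{6}\,(q-1-\lambda-\lambda x)\,(x-\Delta)^{6}}{(q-1-\lambda x)\,(q-1-\lambda\Delta)^{6}},
\]
which is manifestly nonnegative for $x\le (q-1)/\lambda-2$. It then computes $\mathbf{E}[\tilde g(X)]$ in closed form for $X\sim\mathrm{Bin}(n,\Delta/n)$ (working with the binomial directly rather than passing to the Poisson limit), expands in powers of $n$, and verifies by explicit algebra that the leading coefficient satisfies $C_{5}/(q-1-\lambda\Delta)^{6}=(\Delta-1)/\Delta+h(\Delta)$ with $h(\Delta)>0$ for all $\Delta\ge 1$ and $0<\lambda\le 1$; the tail beyond the cutoff is handled by the monotonicity of $\tilde g$. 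So what is missing from your plan is not a new idea but the willingness to push the expansion three or four orders further and check the resulting polynomial inequality by hand; neither the Poisson recursion nor the concentration-plus-remainder route is needed, though either could plausibly be made to work with comparable effort.
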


\begin{proof}
  Let $\lambda=1-\beta$. Since $f(d)$ is decreasing in $q$, we can
  assume $q=3\lambda\Delta+2$.  Note that
  \[
    \E[d\sim\mathrm{Bin}\tuple{n,\frac{\Delta}{n}}]{f(d)}\le\frac{1}{\Delta}
    \iff
    \E[d\sim\mathrm{Bin}\tuple{n,\frac{\Delta}{n}}]{1-f(d)}\ge\frac{\Delta-1}{\Delta}.
  \]
  Let $g(x)\defeq 1-f(x)$, then
  \[
    \E[d\sim\mathrm{Bin}\tuple{n,\frac{\Delta}{n}}]{1-f(d)}=\sum_{k=0}^{\lfloor
      \frac{q-1}{\lambda}-2\rfloor} g(k)\cdot p(k)
  \]
  where
  $p(k)=\binom{n}{k}\tuple{\frac{\Delta}{n}}^k\tuple{1-\frac{\Delta}{n}}^{n-k}$.

  Define
  \begin{align*}
    \tilde g(x)& \defeq 1 -\frac{2\lambda}{q-1-\lambda\Delta}
                 -\frac{2\lambda^2(x-\Delta)}{\tuple{q-1-\lambda\Delta}^2}
                 -\frac{2\lambda^3(x-\Delta)^2}{\tuple{q-1-\lambda\Delta}^3}
                 -\frac{2\lambda^4(x-\Delta)^3}{\tuple{q-1-\lambda\Delta}^4}\\
               &\quad\;-\frac{2\lambda^5(x-\Delta)^4}{\tuple{q-1-\lambda\Delta}^5}
                 -\frac{2\lambda^6(x-\Delta)^5}{\tuple{q-1-\lambda\Delta}^6}
                 -\frac{2\lambda^6(x-\Delta)^6}{\tuple{q-1-\lambda\Delta}^6}.
  \end{align*}
  Then
  \[
    g(x)-\tilde g(x)=
    \frac{2\lambda^6(q-1-\lambda-x\lambda)(x-\Delta)^6}{(q-1-x\lambda)(q-1-\lambda\Delta)^6},
  \]
  which is positive for $x\le\lfloor\frac{q-1}{\lambda}-2\rfloor$.

  We now prove that
  \[
    \sum_{k=0}^{\lfloor\frac{q-1}{\lambda}-2\rfloor}\tilde g(k)\cdot
    p(k)\ge \frac{\Delta-1}{\Delta}.
  \]
  The expectation of $\tilde g(k)$ can be computed directly:
  \[
    \E{\tilde
      g(k)}=\frac{1}{n^5(q-1-\lambda\Delta)^6}\cdot\tuple{C_5n^5+C_4n^4\pm
      O(n^3)},
  \]
  where
  \begin{align*}
    C_5&=1 - 2 \lambda + (12 \lambda - 20 \lambda^2 - 2 \lambda^3 - 2
         \lambda^4 - 2 \lambda^5 -
         4 \lambda^6) \Delta \\
       &\quad+ (60 \lambda^2 - 80 \lambda^3 - 12 \lambda^4 - 14 \lambda^5
         - 74 \lambda^6) \Delta^2 + (160 \lambda^3 - 160 \lambda^4 - 24
         \lambda^5 -
         50 \lambda^6) \Delta^3 \\
       &\quad + (240 \lambda^4 - 160 \lambda^5 - 16 \lambda^6) \Delta^4 +
         (192 \lambda^5 -
         64 \lambda^6) \Delta^5 + 64 \lambda^6 \Delta^6;\\
    C_4&=2 \lambda^3 (1 + 3 \lambda + 7 \lambda^2 + 46 \lambda^3)
         \Delta^2 + 2 \lambda^3 (6 \lambda + 18 \lambda^2 +
         234 \lambda^3) \Delta^3\\
       &\quad+ 2 \lambda^3 (12 \lambda^2 + 69 \lambda^3) \Delta^4 + 16
         \lambda^6 \Delta^5.
  \end{align*}

  Since $C_4>0$, thus for sufficiently large $n$, it holds that
  \[
    \E{\tilde g(x)}\ge \frac{C_5}{(q-1-\lambda\Delta)^6}.
  \]
  We also have that
  \[
    \E{\tilde
      g(x)}=\sum_{k=0}^{\lfloor\frac{q-1}{\lambda}-2\rfloor}\tilde
    g(k)\cdot
    p(k)+\sum_{k=\lfloor\frac{q-1}{\lambda}-1\rfloor}^{n}\tilde
    g(k)\cdot p(k)
  \]
  It can be verified that $\tilde g(x)$ is monotonically decreasing in
  $x$ when $x\ge \frac{q-1}{\lambda}-2$ and
  $\tilde
  g\tuple{\frac{q-1}{\lambda}-2}=-\tuple{\frac{1+2\lambda(\Delta-1)}{1+2\lambda\Delta}}^6<0$.
  
  Thus we have
  \[
    \sum_{k=0}^{\lfloor\frac{q-1}{\lambda}-2\rfloor}\tilde g(k)\cdot
    p(k) \ge \E{\tilde
      g(x)}\ge\frac{C_5}{(q-1-\lambda\Delta)^6}=\frac{\Delta-1}{\Delta}+h(\Delta)
  \]
  where
  \begin{align*}
    h(\Delta) &=\left(1 + 10 \lambda \Delta + (40 \lambda^2 - 2
                \lambda^3 - 2 \lambda^4 - 2 \lambda^5 - 4 \lambda^6)
                \Delta^2\right.\\
              &\quad\left.+ (80 \lambda^3 - 12 \lambda^4 - 14 \lambda^5 - 74
                \lambda^6)
                \Delta^3 + (80 \lambda^4 - 24 \lambda^5 - 50 \lambda^6) \Delta^4\right)\\
              &\quad\left.+ (32 \lambda^5 - 16 \lambda^6) \Delta^5\right)\cdot
                \tuple{\Delta(1+2\lambda\Delta)^6}^{-1}.
  \end{align*}
  It can be verified that $h(\Delta)$ is positive for every
  $0<\lambda<1$ and $\Delta\ge 1$.
\end{proof}

\begin{proof}[Proof of Lemma \ref{lem:gnp-decay}]
  Let $v\in V$ be arbitrary fixed and $T_v=\Tsaw{G}{v}$ and $\ell>0$
  be an integer. By linearity of expectation, we have
  \[
    \E{\err_\delta(v,\ell)}\le
    n^\ell\tuple{\frac{d}{n}}^\ell\E{\prod_{i=1}^\ell\delta(\deg[G]{v_i})\mid
      P=(v,v_1,\dots,v_\ell)\mbox{ is a path}}.
  \]
  Fix a tuple $P=(v,v_1,\dots,v_\ell)$. To calculate the expectation,
  we construct an independent sequence whose product dominates
  $\prod_{i=1}^\ell\delta(\deg[G]{v_i})$ as follows.

  Conditioning on $P=(v,v_1,\dots,v_\ell)$ being a path in $G$. Let
  $X_1,X_2,\dots,X_\ell$ be random variables such that each $X_i$
  represents the number of edges between $v_i$ and vertices in
  $V\setminus\set{v_1,\dots,v_\ell}$; and let $Y$ be a random variable
  representing the number of edges between vertices in
  $\set{v_1,\dots,v_\ell}$ except for the edges in the path
  $P=(v,v_1,\dots,v_\ell)$. Then $X_1,\dots,X_\ell,Y$ are mutually
  independent binomial random variables with each $X_i$ distributed
  according to $\mathrm{Bin}(n-\ell,\frac{d}{n})$ and $Y$ distributed
  according to $\mathrm{Bin}(\binom{\ell}{2}-\ell+1,\frac{d}{n})$, and
  for each $v_i$ in the path we have $\deg[G]{v_i}=X_i+2+Y_i$ with
  some $Y_1+Y_2+\dots+Y_\ell=2Y$.

  Note that $\delta(\deg[G]{v_i})=f_q(\deg[G]{v_i})$ where the
  function $f_q(x)$ is defined in Lemma \ref{lem:gnp-decay-tech}.
  Note that the ratio $f_q(x)/f_q(x-1)$ is always upper bounded by
  $2$, and we have $f_q(x+1)\le f_{q-1}(x)$. Thus, conditioning on
  that $P=(v,v_1,\dots,v_\ell)$ is a path, the product
  $\prod_{i=1}^\ell\delta_{q,\beta}(\deg[G]{v_i})$ can be bounded as
  follows:
  \[
    \prod_{i=1}^\ell\delta(\deg[G]{v_i})=\prod_{i=1}^\ell
    f_q(X_i+Y_i+2)\le 2^{2Y}\prod_{i=1}^\ell f_{q-2}(X_i).
  \]
  Let $d'=\frac{q-4}{3(1-\beta)}$, then we have $d'>d$. Let $X$ be a
  binomial random variable distributed according to
  $\mathrm{Bin}(n,\frac{d'}{n})$, thus $X$ probabilistically dominates
  every $X_i$ whose distribution is
  $\mathrm{Bin}(n-\ell,\frac{d}{n})$. Since $X_1,X_2,\dots,X_\ell,Y$
  are mutually independent conditioning on $P=(v,v_1,\dots,v_\ell)$
  being a path in $G$, for any $P=(v,v_1,\dots,v_\ell)$ we have
  \[
    \E{\prod_{i=1}^\ell\delta(\deg[G]{v_i})\mid P\mbox{ is a path}}\le
    \E{4^Y\prod_{i=1}^\ell f_{q-2}(X_i)} \le
    \E{4^Y}\E{f_{q-2}(X)}^\ell.
  \]
  Recall that
  $Y\sim\mathrm{Bin}\tuple{\binom{\ell}{2}-\ell+1,\frac{d}{n}}$, the
  expectation $\E{4^Y}$ can be bounded as
  \[
    \E{4^Y}\le\sum_{k=0}^{\ell^2}4^k\binom{\ell^2}{k}\tuple{\frac{d}{n}}^k\tuple{1-\frac{d}{n}}^{\ell^2-k}
    =\tuple{1+\frac{3d}{n}}^{\ell^2}\le\exp{\frac{3d\ell^2}{n}}.
  \]
  Since $q-2\ge 3(1-\beta)d'+2$, it follows from Lemma
  \ref{lem:gnp-decay-tech} that
  $\E{f_{q-2}(X)}\le\frac{1}{d'}=\frac{3(1-\beta)}{q-4}$. Therefore,
  \[
    \E{\prod_{i=1}^\ell\delta(\deg[G]{v_i})\mid P\mbox{ is a path}}
    \le\exp{\frac{3d\ell^2}{n}}\tuple{\frac{3(1-\beta)}{q-4}}^{\ell}
    \le\frac{1}{d^\ell}\cdot\exp{-\ell\log\tuple{\frac{q-4}{3d(1-\beta)}}+\frac{3d\ell^2}{n}}.
  \]
  Since $\ell=o(\sqrt{n})$,
  \[
    \E{\err_\delta(v,\ell)}\le
    \exp{-\ell\log\tuple{\frac{q-4}{3d(1-\beta)}}+o(1)}.
  \]
  Then the lemma follows from the Markov inequality and the union
  bound.
\end{proof} \subsection{Locally sparse for random graphs}

\begin{lemma}\label{lem:gnp-sparse}
  Let $\eps>0$ be some fixed constant. Let $d$ be a sufficiently large
  number, $q\ge (2+\eps)d$ and $0\le \beta<1$ be constants. Let
  $G=(V,E)\sim\mathcal{G}(n,d/n)$. There exists a constant $C>0$
  such that with probability $1-O\tuple{\frac{1}{n}}$, for every path
  $P$ in $G$ of length $\ell$, $\abs{B(P)}\le C(\ell+\log n)$.

\end{lemma}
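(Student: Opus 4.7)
The plan is a first moment count over ``path-plus-forest'' structures. Set the high-degree threshold $T := \lceil (q-1)/(1-\beta) \rceil - 2$, which satisfies $T \ge (2+\eps) d - 3$ under the hypothesis $q \ge (2+\eps) d$, and let $H \subseteq V$ denote the set of high-degree vertices of $G$. By minimality of the permissive block, every vertex of $B(P) \setminus P$ must be high-degree (otherwise it could be discarded while keeping the boundary low-degree and still containing $P$), so $B(P) = P \cup \{v \in H : v \text{ lies in a component of } G[H] \text{ adjacent to some vertex of } P\}$. Consequently, the event $|B(P) \setminus P| \ge t$ forces the existence of an ordered tuple $(p_0, \ldots, p_\ell)$ realizing $P$ as a path in $G$, a disjoint set $S \subseteq V \setminus P$ with $|S| = t$, and a parent function $\pi \colon S \to P \cup S$ whose functional graph is an acyclic rooted forest attached to $P$, such that the $\ell + t$ specified edges are all present in $G$ and $\deg[G]{s} \ge T$ for every $s \in S$. (The desired forest is recovered by a BFS from $P$ inside $G[P \cup H]$, truncated once $t$ vertices of $B(P) \setminus P$ have been discovered.)

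A Chernoff bound gives $p := \Pr{\mathrm{Bin}(n, d/n) \ge T - 1} \le e^{-c_\eps d}$ for some $c_\eps > 0$ depending only on $\eps$, so $d \sqrt{p}$ tends to $0$ as $d \to \infty$. The counts entering the expectation are $\le n^{\ell+1}$ ordered paths, $\binom{n}{t} \le (en/t)^t$ choices for $S$, and $\le (\ell + 1 + t)^t$ parent functions; the $\ell + t$ specified edges are distinct pairs, contributing $(d/n)^{\ell + t}$ by independence. The delicate factor is the probability that every $s \in S$ is high-degree conditional on the tree edges: the forest boosts $\deg[G]{s}$ by the tree-degree $d_s \ge 1$, so $\Pr{\deg[G]{s} \ge T \mid \text{tree edges}} \le \Pr{\mathrm{Bin}(n, d/n) \ge T - d_s}$, which is large if $d_s$ is large. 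The saving fact is that $\sum_{s \in S} d_s \le 2t$ (only $t$ tree edges, each contributing $2$ to the degree sum), so at most $4t/(\eps d)$ of the $s$'s can satisfy $d_s > \eps d/2$, and for $d \ge 8/\eps$ this is at most $t/2$. For the remaining $\ge t/2$ vertices the conditional high-degree probability is still bounded by $p$, giving a joint bound of $p^{t/2}$.

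Combining these factors and using $(\ell + 1 + t)/t \le 2$ for $t \ge \ell + 1$, the expected number of bad structures is at most $n d^\ell \cdot (2 e d \sqrt{p})^t \le n d^\ell \cdot 4^{-t}$ once $d$ is large enough that $2 e d \sqrt{p} \le 1/4$. Choosing $t = C(\ell + \log n)$ with a sufficiently large constant $C$ (depending on $d$) reduces this to $\le n^{-2} \cdot 2^{-\ell}$; Markov's inequality plus the union bound over $\ell = 1, 2, \ldots, n - 1$ then yields the claimed $O(1/n)$ failure probability. The main obstacle is the conditional dependence of the ``$\deg[G]{s} \ge T$'' events through shared tree edges — without the pigeonhole observation $\sum_s d_s \le 2t$ one would be forced to bound the conditional high-degree probability by a quantity that could be close to $1$ for vertices with high tree-degree, defeating the exponential-in-$t$ decay we need.
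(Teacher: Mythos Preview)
Your first-moment approach is sound in outline and genuinely different from the paper's branching-process coupling, but the step ``giving a joint bound of $p^{t/2}$'' does not follow from the marginal bounds you established. After conditioning on the $\ell+t$ tree edges, the residual variables $X_s=\deg_G(s)-d_s$ are \emph{positively correlated}: the indicator of a non-tree edge $\{s,s'\}$ inside $P\cup S$ contributes simultaneously to $X_s$ and $X_{s'}$, and by Harris/FKG the increasing events $\{X_s\ge T-d_s\}$ satisfy $\Pr{\bigcap_s A_s}\ge \prod_s\Pr{A_s}$, which is the wrong direction for your product bound. Your pigeonhole $\sum_s d_s\le 2t$ only controls the \emph{tree} contribution to the internal degree; it says nothing about the random non-tree internal edges, and in the extreme case where $G[P\cup S]$ contains many such edges the joint high-degree probability can be close to $1$. (There is also a small slip: for the $\ge t/2$ vertices with $d_s\le \eps d/2$ the correct marginal bound uses threshold $T-\eps d/2$, not $T-1$; this is harmless for the Chernoff estimate but should be stated consistently.)

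The gap is repairable within your framework: write $\deg_G(s)=d_s+Z_s+Y_s$ with $Y_s$ the degree of $s$ into $V\setminus(P\cup S)$ and $Z_s$ the non-tree degree inside $P\cup S$. The $Y_s$ are genuinely independent $\mathrm{Bin}(n-\ell-1-t,d/n)$ random variables, and $\sum_{s\in S}Z_s\le 2W'$ where $W'$ is the number of non-tree internal edges. On the event $\{W'\le t\}$ your pigeonhole extends to $d_s+Z_s$ and yields at least $t/2$ vertices with $Y_s\ge T-O(\eps d)$, whence an honest product bound; the complementary event $\{W'>t\}$ has probability at most $(O((\ell+t)d/n))^{t}$, which is negligible for $\ell+t=o(n)$, and large $\ell$ can be handled separately (e.g.\ via the deterministic bound $|B(P)|\le |P|+|H|$ together with a global bound on the number of high-degree vertices). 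The paper avoids all of this bookkeeping by coupling the BFS exploration of $B(P)$ with a forest of independent $\mathrm{Bin}(n,d/n)$ branching processes (its procedures (P1)/(P2)); the surjection lemma absorbs exactly the ``lost'' internal degree that causes your correlation issue, after which the size bound reduces to a one-dimensional random-walk/Chernoff estimate.
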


Given $P=(v_1,\dots,v_L)$, we are going to upper bound the
probability
\begin{equation}
  \label{eq:pb}
  \Pr{\abs{B(P)}\ge t\mid\mbox{$P$ is a path}}
\end{equation}
for every $t>0$.

A vertex $v$ is a high-degree vertex if
$\deg[G]{v}\ge \frac{q-1}{1-\beta}-2$.  Thus the probability
\eqref{eq:pb} is maximized when $\beta=0$. Note that conditioning on
$P$ is a path gives each vertex at most two degrees, we can redefine
the notion of ``high-degree'' as $\deg[G]{v}\ge q-5$ and drop the
condition that $P$ is a path. Thus it is sufficient to upper bound
\[
  \Pr{\abs{B(P)}\ge t}
\]
with our new definition of high-degree vertices.

Let $G=(V,E)$ be a graph. We now describe a BFS procedure to generate
$B^*(P)\defeq B(P)\cup\partial B(P)$. Since $B^*(P)$ is always a
superset of $B(P)$, it is sufficient to bound
$\Pr{\abs{B^*(P)}\ge t}$. For a vertex $v\in V$, we use $N_G(v)$ to
denote the set of neighbors of $v$ in $G$.

Initially, we have a counter $i=0$, a graph $G_0=G$, a set of active
vertices $\mathcal{A}_0=\set{v_1,v_2,\dots,v_L}$ and a set of used vertices
$\mathcal{U}_0=\varnothing$.

\paragraph{(P1)}
\begin{enumerate}
\item Increase the counter $i$ by one.
\item (If $i\le L$) Define
  $G_i(V_i,E_i)=G_{i-1}[V_{i-1}\setminus\set{v_i}]$. Let
  $\mathcal{U}_i=\mathcal{U}_{i-1}\cup \set{v_i}$.  Let
  $\mathcal{A}_i=(\mathcal{A}_{i-1}\cup
  N_{G_{i-1}}(v_i))\setminus\mathcal{U}_i$.  Goto 1.
\item (If $i> L$) Terminate if
  $\mathcal{A}_{i-1}=\varnothing$. Otherwise, let
  $u\in \mathcal{A}_{i-1}$ and let
  $\mathcal{U}_i=\mathcal{U}_{i-1}\cup \set{u}$.
  \begin{enumerate}
  \item (If $\abs{N_{G}(u)}\ge q-5$) Define
    $G_i(V_i,E_i)=G_{i-1}[V_{i-1}\setminus\set{u}]$. Let
    $\mathcal{A}_i=(\mathcal{A}_{i-1}\cup
    N_{G_{i-1}}(v_i))\setminus\mathcal{U}_i$. Goto 1.
  \item (If $\abs{N_{G}(u)}< q-5$) Define $G_i=G_{i-1}$. Let
    $\mathcal{A}_i=\mathcal{A}_{i-1}\setminus\mathcal{U}_i$.  Goto 1
  \end{enumerate}
\end{enumerate}

The following proposition is immediate: 

\begin{proposition}
  Assume the algorithm terminates at step $t$, then
  $B^*(P)=\mathcal{U}_{t-1}$ and $\abs{B^*(P)}=t-1$
\end{proposition}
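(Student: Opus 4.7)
The proposition asserts two things: that the BFS-like procedure terminates after exactly $\abs{B^*(P)}+1$ steps, and that the final set of used vertices coincides with $B^*(P)$. My plan is to establish both by a straightforward invariant on the pair $\mathcal{U}_i$ and $\mathcal{A}_i$.

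First I would observe that at each step $i \in \set{1,\ldots,t-1}$ the algorithm adds exactly one vertex to $\mathcal{U}$: namely $v_i$ when $i\le L$, or the chosen active vertex $u$ when $i>L$ and $\mathcal{A}_{i-1}\ne\varnothing$. Moreover this vertex is fresh, i.e.\ not in $\mathcal{U}_{i-1}$: for the case $i>L$ this is because $u$ is drawn from $\mathcal{A}_{i-1}$ and by construction $\mathcal{A}_j$ is always maintained to be disjoint from $\mathcal{U}_j$. Hence $\abs{\mathcal{U}_{t-1}} = t-1$.

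For the set equality $\mathcal{U}_{t-1} = B^*(P)$, I prove both inclusions. For $\mathcal{U}_{t-1}\subseteq B^*(P)$, I maintain the invariant $\mathcal{U}_i \cup \mathcal{A}_i \subseteq B^*(P)$. Initially $\mathcal{A}_0 = P\subseteq B(P)\subseteq B^*(P)$. The inductive step for $i \le L$ is immediate: $v_i \in P$, and every graph-neighbor of a vertex in $B(P)$ lies in $B(P)\cup\partial B(P)=B^*(P)$. For $i > L$, the processed vertex $u \in \mathcal{A}_{i-1}\subseteq B^*(P)$; if $u$ is high-degree then $u$ must in fact lie in $B(P)$, since every boundary vertex of $B(P)$ is low-degree by the definition of a permissive block, so the neighbors added to $\mathcal{A}_i$ all lie in $B^*(P)$; if $u$ is low-degree the algorithm performs no expansion, so no new active vertices appear.

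For the reverse inclusion $B^*(P)\subseteq \mathcal{U}_{t-1}$, I use the termination condition $\mathcal{A}_{t-1}=\varnothing$. Let $B^{\sharp}$ denote the union of $P$ with the high-degree vertices contained in $\mathcal{U}_{t-1}$. I claim $B^{\sharp}$ is a permissive block containing $P$: any vertex $v\in\partial B^{\sharp}$ has a neighbor $w \in B^{\sharp}$ that was expanded by the algorithm (either $w\in P$ during the first $L$ steps, or $w$ was a high-degree active vertex processed later), so $v$ was inserted into $\mathcal{A}$ at that point; since $\mathcal{A}_{t-1}=\varnothing$, the vertex $v$ must have been subsequently processed and thus $v \in \mathcal{U}_{t-1}$; moreover $v \notin B^{\sharp}$ forces $v$ to be low-degree. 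Hence $B^{\sharp}$ is permissive, and by minimality of $B(P)$ we have $B(P)\subseteq B^{\sharp}$; combined with the first inclusion this gives $B^{\sharp}=B(P)$, from which $\partial B(P) \subseteq \mathcal{U}_{t-1}$ follows by the same boundary argument. The main obstacle is purely bookkeeping: carefully distinguishing the low-degree vertices that lie inside $P$ (which belong to $B(P)$, not to $\partial B(P)$) from low-degree vertices reached outside $P$ (which are genuine boundary vertices), so that the definition of $B^{\sharp}$ correctly recovers $B(P)$ rather than some strict subset. Beyond this, the invariant maintenance is entirely routine.
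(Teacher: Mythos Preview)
Your argument is correct: the two-sided inclusion via the invariant $\mathcal{U}_i\cup\mathcal{A}_i\subseteq B^*(P)$ and the auxiliary set $B^{\sharp}$ works, and the bookkeeping issue you flag (low-degree vertices on $P$ versus off $P$) is exactly the only point requiring care. The paper itself offers no proof at all---it simply declares the proposition ``immediate'' after describing the BFS procedure (P1)---so your write-up is strictly more detailed than what the paper provides, and there is nothing further to compare.
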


Let $R=\set{r_1,r_2,\dots,r_L}$ be a set and each $r_i$ is the root of
tree $T_i$. We now describe a BFS procedure to explore these $L$
trees.  For a vertex $v$, we use $C(v)$ to denote its children.

Initially, we have a counter $i=0$ and a set of active vertices
$\mathcal{B}_0=R$.

\paragraph{(P2)}
\begin{enumerate}
\item Increase the counter $i$ by one.
\item (If $i\le L$) Let
  $\mathcal{B}_i=(\mathcal{B}_{i-1}\cup
  C(r_i))\setminus\set{r_i}$. Goto 1.
\item (If $i>L$) Terminate if
  $\mathcal{B}_{i-1}=\varnothing$. Otherwise, let
  $w\in \mathcal{B}_{i-1}$
  \begin{enumerate}
  \item (If $\abs{C(w)}\ge \frac{q-5}{2}$) Let
    $\mathcal{B}_i=(\mathcal{B}_{i-1}\cup C(u))\setminus\set{w}$. Goto
    1.
  \item (If $\abs{C(w)}<\frac{q-5}{2}$) Let
    $\mathcal{B}_i=\mathcal{B}_{i-1}\setminus\set{w}$. Goto 1.
  \end{enumerate}
\end{enumerate}

Now assume $G\sim\mathcal{G}(n,d/n)$ and for every $i\in[L]$, $T_i$ is
a branching process with distribution $\mathrm{Bin}(n,d/n)$, i.e.,
each $C(u)\sim\mathrm{Bin}([n],d/n)$. We can implement the (P1) when
at each step $i$, the vertex $u$ chosen from the active set sample its
neighbors $N_{G_{i-1}}(u)$ according to $\mathrm{Bin}(V_i,d/n)$. This
random process can be coupled with $\mathcal{G}(n,d/n)$ such that
$B^*(P)$ found by it is always a superset of the one in
$\mathcal{G}(n,d/n)$.

We now construct a coupling of (P1) and (P2) with the property that
the later one always terminates no earlier than the former one.

At each step $i\ge 1$, let $u$ and $w$ be the vertex chosen from
$\mathcal{A}_i$ and $\mathcal{B}_i$ respectively ($u=v_i$ and $w=r_i$
if $i\le L$).  Then
$\abs{N_{G_{i-1}}(u)}\sim \mathrm{Bin}(\abs{V_i},d/n)$. We couple it
with some $x\sim\mathrm{Bin}(n,d/n)$ with the property that
$x\ge \abs{N_{G_{i-1}}(v_i)}$ and let $C(w)$ be a set with $x$
elements.

\begin{lemma}
  For every $i\ge 0$, the following two properties hold:
  \begin{enumerate}
  \item[(i1)] There exists a surjective mapping $F_i$ from
    $\mathcal{B}_i$ to $\mathcal{A}_i$ in each step $i$.
  \item[(i2)] For every $u\in\mathcal{A}_i$, we use $n_i(u)$ to denote
    the number of $w\in\mathcal{B}_i$ such that $F_i(w)=u$. Then for
    every $u\in\mathcal{A}_i$,
    $n_i(u)\ge \abs{N_G(u)}-\abs{N_{G_{i}}(u)}$.
  \end{enumerate}
\end{lemma}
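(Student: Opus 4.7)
The plan is to prove (i1) and (i2) simultaneously by induction on $i$, building $F_i$ as an explicit extension of $F_{i-1}$. The base case $i=0$ is immediate by setting $F_0(r_j) = v_j$ for each $j \in [L]$: this is a bijection, so (i1) holds, and since $G_0 = G$ the inequality in (i2) is just $1 \ge 0$ for every $u \in \mathcal{A}_0$.

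For the inductive step, let $u$ be the vertex chosen by (P1) from $\mathcal{A}_{i-1}$ and $w$ the vertex chosen by (P2) from $\mathcal{B}_{i-1}$; by the inductive (i1), I may refine the coupling so that $w \in F_{i-1}^{-1}(u)$, after which the coupling delivers the key inequality $\abs{C(w)} \ge \abs{N_{G_{i-1}}(u)}$. I would split on the branch taken by (P1). When (P1) expands (case $i \le L$, or $i > L$ and $\abs{N_G(u)} \ge q-5$), the new vertices added to $\mathcal{A}_i$ form a subset of $N_{G_{i-1}}(u)$, and I define $F_i$ on $\abs{N_{G_{i-1}}(u)}$ distinct children of $w$ by injection into those new vertices; the surplus $\abs{C(w)} - \abs{N_{G_{i-1}}(u)}$ children of $w$ together with the $n_{i-1}(u) - 1$ leftover preimages in $F_{i-1}^{-1}(u) \setminus \{w\}$ are redistributed to vertices in $\mathcal{A}_{i-1}$ that lie in $N_{G_{i-1}}(u)$, giving each one extra preimage to absorb the loss of $u$ from its $G_i$-neighborhood. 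When (P1) terminates ($i > L$ and $\abs{N_G(u)} < q-5$), no new active vertices enter $\mathcal{A}_i$, and all available slots (the children $C(w)$ if (P2) added them, plus the leftover preimages of $u$) are freely redistributed among the same old neighbors of $u$.

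The main obstacle is the accounting at the moment a vertex $u'$ first enters the active set: (i2) then demands $n_i(u') \ge \abs{N_G(u') \cap \mathcal{U}_i}$, which can strictly exceed one when several $G$-neighbors of $u'$ have already been absorbed into $\mathcal{U}_i$ via earlier terminating steps of (P1) in which $u'$ itself was not activated. Addressing this requires a slightly strengthened induction that tracks, at each step, a reserve of preimages set aside for each ``pending'' vertex outside $\mathcal{A}_{i-1}$ that has already accumulated consumed $G$-neighbors; these reserves are populated precisely by routing the surplus of (P2)'s children at each terminating step. The feasibility of maintaining such reserves reduces to the step-wise budget inequality
\[
\abs{C(w)} + n_{i-1}(u) - 1 \;\ge\; \abs{N_{G_{i-1}}(u)} + \tuple{\abs{N_G(u)} - \abs{N_{G_{i-1}}(u)}} - 1 \;=\; \abs{N_G(u)} - 1,
\]
which follows immediately from the coupling and from (i2) at step $i-1$ applied to $u$. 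Since this bound dominates the total demand created by absorbing $u$ into $\mathcal{U}_i$, a greedy matching of surplus slots to demanding vertices carries the induction through and establishes both (i1) and (i2) at step $i$.
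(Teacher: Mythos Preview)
Your inductive framework is the paper's, but there is a genuine gap in the case analysis, and your ``main obstacle'' is a misreading of the invariant.

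\textbf{The missing sub-case.} You treat every expanding step of (P1) (i.e.\ $i\le L$, or $i>L$ with $\abs{N_G(u)}\ge q-5$) uniformly by mapping $\abs{N_{G_{i-1}}(u)}$ children of $w$ onto the new active vertices. But (P2) only places $C(w)$ into $\mathcal{B}_i$ when $\abs{C(w)}\ge (q-5)/2$; the coupling gives $\abs{C(w)}\ge\abs{N_{G_{i-1}}(u)}$, which says nothing about $(q-5)/2$. So when $\abs{N_G(u)}\ge q-5$ yet $\abs{N_{G_{i-1}}(u)}<(q-5)/2$, (P1) expands while (P2) may discard $w$ without adding its children, and the preimages you plan to use simply do not exist in $\mathcal{B}_i$. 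This is exactly the sub-case the paper isolates: there one uses (i2) at step $i-1$ to get
\[
n_{i-1}(u)\;\ge\;\abs{N_G(u)}-\abs{N_{G_{i-1}}(u)}\;\ge\;(q-5)-\tfrac{q-5}{2}\;>\;\abs{N_{G_{i-1}}(u)},
\]
so the set $F_{i-1}^{-1}(u)$ alone (no children of $w$ needed) is already large enough to cover $N_{G_{i-1}}(u)$ surjectively. This is the one place where the induction hypothesis (i2) does real work, and your write-up skips it.

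\textbf{The non-obstacle.} You worry that when $u'$ first enters $\mathcal{A}_i$, (i2) may demand $n_i(u')\ge\abs{N_G(u')\cap\mathcal{U}_i}>1$ because terminated neighbors have piled up in $\mathcal{U}_i$. But the right-hand side of (i2) is $\abs{N_G(u')}-\abs{N_{G_i}(u')}$, which counts neighbors of $u'$ \emph{removed from the graph}. In a terminating step (case 3(b) of (P1)) one sets $G_i=G_{i-1}$: the processed vertex enters $\mathcal{U}_i$ but stays in $V_i$, so it contributes nothing to $\abs{N_G(u')}-\abs{N_{G_i}(u')}$. Any neighbor of $u'$ that was actually removed was expanded, and expanding it would have put $u'$ into $\mathcal{A}$ already. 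Hence at the moment $u'$ first becomes active the right-hand side of (i2) is exactly $1$, and no reserve bookkeeping is needed.
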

\begin{proof}
  We apply induction on $i$ to prove the lemma.

  When $i=0$, we let $F_0:\mathcal{B}_0\to\mathcal{A}_0$ be the
  function that $F_0(r_j)=v_j$ for every $j\in[L]$. Then both
  properties hold trivially.

  Assume the lemma holds for smaller $i$. If $i\le L$, since by our
  coupling, $\abs{C(r_i)}\ge \abs{N_{G_{i-1}}(v_i)}$, we can construct
  $F_i$ by extending $F_{i-1}$ with an arbitrary surjective mapping
  from $C(r_i)$ to $N_{G_{i-1}}(v_i)$. For every $u'\in\mathcal{A}_i$,
  if $ u'\in N_{G_{i-1}}(v_i)$, then $n_i(u')\ge n_{i-1}(u')+1$ and
  $\abs{N_{G_{i-1}(u)}} -\abs{N_{G_{i}}(u')}=1$; otherwise
  $n_i(u')=n_{i-1}(u')$ and
  $\abs{N_{G_{i-1}}(u)}=\abs{N_{G_{i}}(u')}$. Induction hypothesis
  implies both $(i1)$ and $(i2)$ hold.

  If $i>L$, we have to distinguish between cases:
  \begin{itemize}
  \item (If $\abs{N_G(u)}\ge q-5$ and
    $N_{G_{i-1}}(u)\ge \frac{q-5}{2}$) We construct $F_i$ by extending
    $F_{i-1}$ with an arbitrary surjective mapping from $C(w)$ to
    $N_{G_{i-1}}(u)$, the same argument as $i\le L$ case proves $(i1)$
    and $(i2)$.
  \item (If $\abs{N_G(u)}\ge q-5$ and $N_{G_{i-1}}(u)<\frac{q-5}{2}$)
    In this case, by induction hypothesis, we know that
    \[
      n_{i-1}(u)\ge \abs{N_G(u)}-\abs{N_{G_{i-1}}(u)}\ge
      \frac{q-5}{2}>N_{G_{i-1}}(u).
    \]
    Choose a surjective $f$ from $F_{i-1}^{-1}(u)$ to $N_{G_{i-1}}(u)$
    and construct $F_i$ from $F_{i-1}$ by replacing the mapping on
    $F_{i-1}^{-1}(u)$ by $f$. This is safe since
    $u\not\in\mathcal{A}_i$. The same argument as before proves $(i1)$
    and $(i2)$.
  \item (If $\abs{N_G(u)}<q-5$) Construct $F_i=F_{i-1}$. Since
    everything does not change, the induction hypothesis implies
    $(i1)$ and $(i2)$.
  \end{itemize}
\end{proof}

The first property above guarantees that (P2) terminates no earlier
than (P1) and thus its stopping time is an upper bound for the size of
$B^*(P)$ found by (P1).

(P2) can be modeled as follows:
\begin{enumerate}
\item Let $X\sim\mathrm{Bin}(n,d/n)$ and $X_1,X_2\dots$ be an infinite
  sequence of independent random variables defined as follows
  \begin{itemize}
  \item For $i=1,2,\dots,L$, $X_i$ is an independent copy of $X$;
  \item For $i>L$, $X_i$ has following distribution
    \[
      X_i=
      \begin{cases}
        0 & \mbox{if $X<(q-5)/2$}\\
        X & \mbox{otherwise}.
      \end{cases}
    \]
  \end{itemize}
\item $Y_1,Y_2,\dots$ is an infinite sequence of random variables that
  $Y_0=L$ and $Y_i=Y_{i-1}+X_i-1$ for every $i\ge 1$.
\item $Z=\min_t\set{Y_t=0}$.
\end{enumerate}

The above process is identical to (P2), thus we have
\begin{proposition}
  (P2) terminates after step $t$ if and only if $Z>t$.
\end{proposition}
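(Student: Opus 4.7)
The plan is to prove the proposition by establishing, under the natural coupling between the two processes, that $Y_i=|\mathcal{B}_i|$ for every $i\ge 0$. Once this identity is in hand, (P2) still being alive after step $t$ translates directly to $Y_s>0$ for all $s\le t$, which is exactly the statement $Z>t$.

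First I would set up the coupling. For $i=1,2,\dots,L$, identify $X_i$ with $|C(r_i)|$; since each tree $T_j$ is a branching process with offspring distribution $\mathrm{Bin}(n,d/n)$, the $|C(r_i)|$'s are independent copies of the claimed $X$. For $i>L$, when (P2) draws a vertex $w\in\mathcal{B}_{i-1}$, identify $X_i$ with $|C(w)|$ if $|C(w)|\ge (q-5)/2$ and with $0$ otherwise. This matches exactly the two-valued law prescribed for $X_i$ in the later phase, and the overall sequence $(X_i)$ remains independent by the branching property of the trees.

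Next I would prove $Y_i=|\mathcal{B}_i|$ by induction on $i$. The base case is $Y_0=L=|\mathcal{B}_0|$. For the inductive step, if $1\le i\le L$, then step~2 of (P2) replaces the root $r_i$ by its children $C(r_i)$, giving $|\mathcal{B}_i|=|\mathcal{B}_{i-1}|+|C(r_i)|-1=Y_{i-1}+X_i-1=Y_i$. For $i>L$, if $|C(w)|\ge (q-5)/2$ then case~3(a) again replaces $w$ by $C(w)$ and $X_i=|C(w)|$, so the same identity holds; and if $|C(w)|<(q-5)/2$ then case~3(b) removes $w$ without adding new vertices while $X_i=0$, so $|\mathcal{B}_i|=|\mathcal{B}_{i-1}|-1=Y_{i-1}+X_i-1=Y_i$.

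Finally I would translate the termination criterion. The only place (P2) can halt is the emptiness check in case~3, which fires at step $i>L$ precisely when $\mathcal{B}_{i-1}=\varnothing$; during the initialization phase the roots $r_{i+1},\dots,r_L$ always keep $\mathcal{B}_i$ nonempty so no spurious halts occur. Hence (P2) is still running after step $t$ exactly when $|\mathcal{B}_s|>0$ for every $s\le t$, which by the induction is exactly $Y_s>0$ for every $s\le t$, i.e.\ $Z>t$. There is no serious obstacle here; the only care needed is the off-by-one reading of ``after step $t$'' and the observation that the $i\le L$ phase contributes nothing to the termination condition, so that the induction can be applied uniformly across both phases of (P2).
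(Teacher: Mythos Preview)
Your argument is correct and is exactly the unpacking the paper has in mind: the paper states the proposition without proof, treating the identification of $Y_i$ with $|\mathcal{B}_i|$ under the obvious coupling as immediate. Your inductive verification of $Y_i=|\mathcal{B}_i|$, together with the observation that the roots $r_{i+1},\dots,r_L$ keep $\mathcal{B}_i$ nonempty during the initialization phase (so that $Y_s>0$ automatically for $s<L$), is precisely what is needed, and your caveat about the off-by-one reading of ``after step $t$'' is the only genuine subtlety.
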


Note that $Z>t$ implies $Y_t\ge 0$, we turn to bound the latter.

\begin{lemma}\label{lem:gnp-chernoff}
  There exist two constants $C_1,C_2>0$ depending on $d$ and $\eps$
  such that
  \[
    \Pr{Y_t\ge 0}\le \exp{-C_1t+C_2L}.
  \]
\end{lemma}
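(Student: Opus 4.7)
The plan is to control the random walk $Y_t = L + \sum_{i=1}^{t}\tuple{X_i - 1}$ via a Chernoff-type moment generating function argument, noting that $\set{Y_t \ge 0} = \set{\sum_{i=1}^t X_i \ge t - L}$. For any $s>0$, Markov's inequality applied to $\exp{s\sum_{i=1}^t X_i}$ yields
\begin{align*}
  \Pr{Y_t \ge 0} \le \exp{-s(t-L)} \cdot M_1(s)^L \cdot M_2(s)^{t-L},
\end{align*}
where $M_1(s) \defeq \E{\exp{sX}}$ for $X\sim\mathrm{Bin}(n,d/n)$ and $M_2(s) \defeq \E{\exp{sX_i}}$ for any $i>L$.

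The heart of the argument is to show that for $i>L$ the censored mean $\mu_2 \defeq \E{X_i} = \E{X\cdot \mathbf{1}(X\ge T)}$ with $T = (q-5)/2$ is strictly less than $1$ once $d$ is sufficiently large. Since $q\ge (2+\eps)d$, we have $T/d \ge 1+\eps/3$ for large $d$, so Chernoff's inequality for binomials gives $\Pr{X\ge T}\le \exp{-c_\eps d}$ with $c_\eps>0$. Past the mode the successive ratio $\Pr{X=k+1}/\Pr{X=k}$ is at most $d/(k+1) \le 1-\eps/4$ for all $k\ge T$ (and $n$ large), so the tail expectation $\sum_{k\ge T} k\Pr{X=k}$ is dominated by its leading term up to an $\eps$-dependent constant, giving $\mu_2 = O_\eps\tuple{T\exp{-c_\eps d}} = o(1)$ as $d\to\infty$. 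Fix $d$ large enough that $\mu_2 \le 1/2$. Then $\frac{d}{ds}\tuple{\log M_2(s)-s}\big|_{s=0}=\mu_2-1<0$, so by continuity there exist $s^*>0$ and $C_1>0$ with $M_2(s^*)\exp{-s^*}\le \exp{-C_1}$. On the other hand, $M_1(s^*) \le \exp{d\tuple{\exp{s^*}-1}}$ is bounded independently of $n$. Substituting $s=s^*$ into the Chernoff bound yields
\begin{align*}
  \Pr{Y_t \ge 0} \le \exp{s^*L}\cdot M_1(s^*)^L \cdot \tuple{M_2(s^*)\exp{-s^*}}^{t-L} \le \exp{-C_1 t + C_2 L},
\end{align*}
with $C_2 \defeq s^* + d\tuple{\exp{s^*}-1} + C_1$, as claimed.

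The main obstacle is not the Chernoff bookkeeping but the quantitative estimate $\mu_2<1$: merely knowing $\Pr{X\ge T}=o(1)$ does not suffice, since in principle deviations of $X$ well beyond $T$ could make the tail expectation large. The fix is that past the mode of the binomial the probability mass function decays geometrically, so the tail expectation is dominated by its first term of order $T\Pr{X\ge T}$ up to an $\eps$-dependent constant, and this vanishes with $d$. Once this quantitative control is in hand, the choice of $s^*$ is a routine one-dimensional optimization, and the resulting constants $C_1, C_2$ depend only on $d$ and $\eps$.
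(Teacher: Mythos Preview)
Your argument follows the same moment-generating-function route as the paper: both apply Markov's inequality to $e^{sY_t}$ and need $\E{e^{sX_i}}\,e^{-s}<1$ for the censored variables $X_i$ with $i>L$. The paper establishes this by bounding the tail sum $\sum_{k\ge p}e^{sk}\Pr{X\ge k}$ directly and asserting it is below $s$ for some $s>0$; you instead show the censored mean $\mu_2=\E{X\cdot\mathbf{1}(X\ge T)}$ is strictly below $1$ and invoke continuity of $s\mapsto\log M_2(s)-s$ near $s=0$. Your route is arguably cleaner conceptually, and your geometric-decay estimate for the binomial tail makes explicit what the paper leaves as ``by Chernoff bound.''

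One point deserves tightening: since $X\sim\mathrm{Bin}(n,d/n)$, the function $M_2$ depends on $n$, and the bare continuity argument only yields $s^*=s^*(n)$ and $C_1=C_1(n)$. The downstream union bound in Lemma~\ref{lem:gnp-sparse} needs constants independent of $n$. This is easy to repair with what you already have: your geometric-ratio bound $\Pr{X=k+1}/\Pr{X=k}\le 1-\eps/4$ for $k\ge T$ actually controls $M_2(s)-1\le \Pr{X\ge T}\,e^{sT}\sum_{j\ge 0}\big(e^s(1-\eps/4)\big)^j$ uniformly in $n$ for all $s$ below an $\eps$-dependent threshold, and choosing $s^*$ small relative to $c_\eps/T\asymp c_\eps/d$ keeps this below $s^*/2$; alternatively, bound $M_2''(0)=\E{X_i^2}$ by the same tail argument and use a second-order Taylor expansion. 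Either way the resulting $s^*,C_1$ depend only on $d$ and $\eps$, as required. (There is also a harmless extra factor $e^{s^*L}$ in your final display---the correct identity is $e^{-s^*(t-L)}M_1^LM_2^{t-L}=M_1^L\big(M_2e^{-s^*}\big)^{t-L}$---but this only enlarges $C_2$ and does not affect the conclusion.)
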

\begin{proof}
  By the definition,
  $Y_{t+L}=L-(t+L)+\sum_{i=1}^{L+t}
  X_i=-t+\sum_{i=1}^LX_i+\sum_{i=L+1}^{L+t}X_i$.
  We know the distribution of $X_i$s and we now compute their moment
  generating function.  For every $s>0$, it holds that
  \[
    \Pr{Y_{t+L}\ge 0} =\Pr{e^{sY_{t+L}}\ge 1} \le
    \E{e^{sY_{t+L}}}=e^{-st}\tuple{\E{e^{sX}}}^L\tuple{\E{e^{sX_{L+1}}}}^t.
  \]
  Recall that $X\sim\mathrm{Bin}(n,d/n)$, we have
  $\E{e^{sX}}=\tuple{1+\frac{d}{n}(e^s-1)}^n\le e^{d(e^s-1)}$. Let
  $p=(q-5)/2$, we have
  \begin{align*}
    \E{e^{sX_{L+1}}}
    &=\Pr{X< p}+\sum_{k=\lfloor p\rfloor}^n e^{sk}\cdot\Pr{X=k}\\
    &\le 1+\sum_{k=\lfloor p\rfloor}^n e^{sk}\cdot\Pr{X\ge k}\\
    &\le\exp{\sum_{k=\lfloor p\rfloor}^\infty e^{sk}\cdot\Pr{X\ge k}}
  \end{align*}
  By Chernoff bound, for sufficiently large $d$, we have for some
  choices of $s>0$ and $C_1>0$,
  \[
    \sum_{k=\lfloor p\rfloor}^\infty e^{sk}\cdot\Pr{X\ge k}-s<-C_1'.
  \]
  Let $C_2'=d(e^s-1)$, we have
  \[
    \Pr{Y_{t+L}\ge 0} \le \exp{-C_1't+C_2'L}.
  \]
  This implies for some constants $C_1,C_2>0$,
  \[
    \Pr{Y_t\ge 0} \le \exp{-C_1t+C_2L}.
  \]
  
\end{proof}

\begin{proof}[Proof of Lemma \ref{lem:gnp-sparse}]
  By Lemma \ref{lem:gnp-chernoff} and the union bound, the probability
  that there exists a path $P$ in $G$ of length $\ell$ such that
  $\abs{B(P)}\ge t $ is upper bounded by
  \begin{align*}
    n\cdot n^\ell\cdot\tuple{\frac{d}{n}}^{\ell}\cdot\Pr{Y_t\ge 0}
    &\le n\cdot d^{\ell}\cdot\exp{-C_1t+C_2\ell}=O\tuple{\frac{1}{n}}
  \end{align*}
  for $t=C(\ell+\log n)$ and sufficiently large constant $C$.
\end{proof}

\bibliographystyle{alpha} \bibliography{paper}

\end{document}